\newtheorem{theorem}{Theorem}
\newtheorem{lemma}{Lemma}
\newtheorem{assumption}{Assumption}
\newcommand{\bc}{\boldsymbol{c}}
\newcommand{\bs}{\boldsymbol{s}}
\newcommand{\bu}{\boldsymbol{u}}
\newcommand{\bv}{\boldsymbol{v}}
\newcommand{\bw}{\boldsymbol{w}}
\newcommand{\bz}{\boldsymbol{z}}
\newcommand{\bx}{\boldsymbol{x}}
\newcommand{\by}{\boldsymbol{y}}
\newcommand{\bpsi}{\boldsymbol{\psi}}
\newcommand{\bphi}{\boldsymbol{\phi}}
\newcommand{\bH}{\boldsymbol{H}}
\newcommand{\cA}{\mathcal{A}}
\newcommand{\cB}{\mathcal{B}}
\newcommand{\cC}{\mathcal{C}}
\newcommand{\cD}{\mathcal{D}}
\newcommand{\cE}{\mathcal{E}}
\newcommand{\cF}{\mathcal{F}}
\newcommand{\cS}{\mathcal{S}}
\newcommand{\cG}{\mathcal{G}}
\newcommand{\cH}{\mathcal{H}}
\newcommand{\cJ}{\mathcal{J}}
\newcommand{\cL}{\mathcal{L}}
\newcommand{\cN}{\mathcal{N}}
\newcommand{\cY}{\mathcal{Y}}
\newcommand{\cV}{\mathcal{V}}
\newcommand{\cX}{\mathcal{X}}
\newcommand{\cw}{{\scriptstyle\mathcal{W}}}
\newcommand{\bcB}{\boldsymbol{\cal{B}}}
\newcommand{\bcH}{\boldsymbol{\cal{H}}}
\newcommand{\bcF}{\boldsymbol{\cal{F}}}
\newcommand{\bcw}{\boldsymbol{\cw}}
\newcommand{\bwt}{\widetilde \bw}
\newcommand{\cwt}{\widetilde\cw}
\newcommand{\bcwt}{\widetilde\bcw}
\newcommand{\cwb}{\overline{\cw}}
\newcommand{\wb}{\overline{w}}
\newcommand{\expec}{\mathbb{E}}
\newcommand{\col}{\text{col}}
\newcommand{\bvc}{\text{bvec}}
\newcommand{\vc}{\text{vec}}
\newcommand{\tr}{\text{Tr}}
\newcommand{\diag}{\text{diag}}
\begin{document}


\title{{Learning} over Multitask Graphs --\\ {Part II: Performance Analysis}}
\author{{\normalsize{Roula Nassif, \IEEEmembership{Member, IEEE}, Stefan Vlaski, \IEEEmembership{Member, IEEE}, \\
C\'edric Richard, \IEEEmembership{Senior Member, IEEE}, Ali H. Sayed, \IEEEmembership{Fellow Member, IEEE}}}\\
\thanks{The work of A. H. Sayed was supported in part by NSF grants CCF-1524250 and ECCS-1407712. {A short  version of this work appeared in the conference publication~\cite{nassif2018distributed}.}

This work was done while R. Nassif was a post-doc at EPFL. She is now with the American University of Beirut, Lebanon (e-mail: roula.nassif@aub.edu.lb). S. Vlaski and A. H. Sayed are with Institute
of Electrical Engineering, EPFL, Switzerland (e-mail: stefan.vlaski,ali.sayed@epfl.ch). C. Richard is with Universit\'e de Nice Sophia-Antipolis, France (e-mail: cedric.richard@unice.fr).
}
}

\maketitle

\begin{abstract}
Part I of this paper  formulated a multitask optimization problem where agents in the network have individual objectives to meet, or individual parameter vectors to estimate, subject to a smoothness condition over the graph. 
A diffusion strategy {was} devised that responds to streaming data and employs stochastic approximations in place of actual gradient vectors, which are generally unavailable. The approach {relied} on minimizing a global cost consisting of the aggregate sum of individual costs regularized by a term that promotes smoothness. {We examined the first-order, the second-order, and the fourth-order stability of the multitask learning algorithm. The results identified conditions on the step-size parameter, regularization strength, and data characteristics in order to ensure stability. This Part II examines steady-state performance of the strategy. The results reveal} explicitly the influence of the network topology and the regularization strength on the network performance and provide insights into the design of effective multitask strategies for distributed inference over networks.
\end{abstract}

\begin{IEEEkeywords}
Multitask distributed inference, diffusion strategy, smoothness prior,  graph Laplacian regularization, gradient noise,  steady-state  performance.
\end{IEEEkeywords}

\newpage
\section{Introduction}
{As pointed out in Part I~\cite{nassif2018diffusion} of this work, most prior literature on distributed inference over networks focuses on single-task problems, where agents with separable objective functions need to agree on a common parameter vector corresponding to the minimizer of an aggregate sum of individual costs~\cite{bertsekas1997new,olfati2007consensus,nedic2009distributed,dimakis2010gossip,ram2010distributed,chen2013distributed,sayed2014adaptation,chen2015learning,chen2015learning2,sayed2014adaptive,vlaski2016diffusion}. In this paper, and its accompanying Part I~\cite{nassif2018diffusion}, we focus instead on multitask networks where the agents may need to estimate and track multiple objectives simultaneously~\cite{platachaves2017heterogeneous,hassani2017multi,chen2014multitask,nassif2016proximal,eksin2012distributed,hallac2015network,szurley2015distributed,platachaves2015distributed,alghunaim2017decentralized,nassif2017diffusion,chen2015diffusion,chen2014diffusion,kekatos2013distributed}. Although agents may generally have distinct though related tasks to perform, they may still be able to capitalize on inductive transfer between them to improve their performance. Based on the type of prior information that may be available about how the tasks are related to each other, multitask learning algorithms can be derived by translating the prior information into constraints on the parameter vectors to be inferred. }

{In Part I~\cite{nassif2018diffusion}, we considered multitask inference problems where each agent in the network seeks to minimize an individual cost expressed as the expectation of some loss function. The minimizers of the individual costs are assumed to vary smoothly over the topology, as dictated by the graph Laplacian matrix. The smoothness property softens the transitions in the tasks among adjacent nodes and allows incorporating information about the graph structure into the solution of the inference problem. In order to exploit the smoothness prior, we formulated the inference problem as the minimization of the aggregate sum of individual costs regularized by a term promoting smoothness, known as the graph-Laplacian regularizer~\cite{zhou2004regularization,shuman2013emerging}. A diffusion strategy was proposed that responds to streaming data and employs stochastic approximations in place of actual gradient vectors, which are generally unavailable. }

{The analysis from Part I~\cite{nassif2018diffusion} revealed how the regularization strength $\eta$ can steer the convergence point of the network toward many modes of operation starting from the non-cooperative mode ($\eta=0$) where each agent converges to the minimizer of its individual cost and ending with the single-task mode ($\eta\rightarrow\infty$) where all agents converge to a common parameter vector corresponding to the minimizer of the aggregate sum of individual costs. For any values of $\eta$ in the range $0<\eta<\infty$, the network behaves in a multitask mode where agents seek their individual models while at the same time ensuring that these models satisfy certain smoothness and closeness conditions dictated by the value of $\eta$.  We carried out in Part I~\cite{nassif2018diffusion} a detailed stability analysis of the proposed strategy. We showed, under conditions on the step-size learning parameter $\mu$, that the adaptive strategy induces a contraction mapping and that despite gradient noise, it is able to converge in the mean-square-error sense within $O(\mu)$ from the solution of the regularized problem, for sufficiently small $\mu$. We also established the first and fourth-order moments stability of the network error process and showed that they tend asymptotically to bounded region on the order of $O(\mu)$ and $O(\mu^2)$, respectively. }

{Based on the results established in Part I~\cite{nassif2018diffusion}, we shall derive in this paper a closed-form expression for the steady-state network mean-square-error relative to  the minimizer of the regularized cost. This closed form expression will reveal explicitly the influence of the regularization strength, network topology (through the eigenvalues and eigenvectors of the Laplacian matrix), gradient noise, and data characteristics, on the network performance. Additionally, a closed-form expression for the steady-state network mean-square-error relative to  the minimizers of the individual costs is also derived. This expression will provide insights on the design of effective multitask strategies for distributed inference over networks.}

\noindent\textbf{Notation.} {We adopt the same notation from Part I~\cite{nassif2018diffusion}.} All vectors are column vectors. Random quantities are denoted in boldface. Matrices are denoted in capital letters while vectors and scalars are denoted in lower-case letters. The operator $\preceq$ denotes an element-wise inequality; i.e., $a\preceq b$ implies that each entry of the vector $a$ is less than or equal to the corresponding entry of $b$. The symbol $\diag\{\cdot\}$ forms a matrix from block arguments by placing each block immediately below and to the right of its predecessor. The operator $\col\{\cdot\}$ stacks the column vector entries on top of each other. The symbols $\otimes$ and $\otimes_b$ denote the Kronecker product and the block Kronecker product, respectively. The symbol $\vc(\cdot)$ refers to the standard vectorization operator that stacks the columns of a matrix on top of each other and the symbol $\bvc(\cdot)$ refers to the block vectorization operation that vectorizes each block and stacks the vectors on top of each other.

\section{Distributed inference under smoothness priors}

\subsection{Problem formulation and adaptive strategy}
Consider a connected network (or graph) $\cG=\{\cN,\cE,A\}$, where $\cN$ is a set of $N$ agents (nodes), $\cE$ is a set of edges connecting agents with particular relations, and  $A$ is a symmetric, weighted adjacency matrix. If there is an edge connecting agents $k$ and $\ell$, then $[A]_{k\ell}=a_{k\ell}>0$ reflects the strength of the relation between $k$ and $\ell$; otherwise, $[A]_{k\ell}=0$. We introduce the graph Laplacian $L$, which is a differential operator defined as $L=D-A$, where the degree matrix $D$ is a diagonal matrix with $k$-th entry $[D]_{kk}=\sum_{\ell=1}^Na_{k\ell}$. Since $L$ is symmetric positive semi-definite, it possesses a complete set of orthonormal eigenvectors. We denote them by $\{v_1,\ldots,v_N\}$. For convenience, we order the set of real, non-negative eigenvalues of $L$ as $0=\lambda_1<\lambda_2\leq\ldots\leq\lambda_N=\lambda_{\max}(L)$, where, since the network is connected, there is only one zero eigenvalue with corresponding eigenvector $v_1=\frac{1}{\sqrt{N}}\mathds{1}_N$~\cite{chung1997spectral}. Thus, the Laplacian can be decomposed as:
\begin{equation}
\label{eigendecomposition of the Laplacian}
L=V\Lambda V^\top,
\end{equation}
where $\Lambda=\diag\{\lambda_1,\ldots,\lambda_N\}$ and $V=[v_1,\ldots,v_N]$.

Let $w_k\in\mathbb{R}^M$ denote some parameter vector at agent $k$ and let $\cw=\col\{w_1,\ldots,w_N\}$ denote the collection of parameter vectors from across the network. We associate with each agent $k$ a risk function $J_k(w_k):\mathbb{R}^M\rightarrow\mathbb{R}$ assumed to be strongly convex. In most learning and adaptation problems, the risk function is expressed as the expectation of a loss function $Q_k(\cdot)$ and is written as $J_k(w_k)=\expec\,Q_k(w_k;\bx_k)$, where $\bx_k$ denotes the random data. The expectation is computed over the distribution of this data. We denote the unique minimizer of $J_k(w_k)$ by $w^o_k$.  {{Let us recall the assumption on the risks $\{J_k(w_k)\}$ used in Part I~\cite{nassif2018diffusion}}.
\begin{assumption}{\rm(Strong convexity)}
\label{assumption: strong convexity}
 It is assumed that the individual costs $J_k(w_k)$ are each twice differentiable and strongly convex such that the Hessian matrix function $H_k(w_k)=\nabla^2_{w_k}J_k(w_k)$ is uniformly bounded from below and {above, say, as:}
\begin{equation}
0<\lambda_{k,\min}I_M\leq H_k(w_k)\leq\lambda_{k,\max}I_M,
\end{equation}
where $\lambda_{k,\min}>0$ for $k=1,\ldots,N$.
\qed
\end{assumption}
}

In many situations, there is prior information available about $\cw^o=\col\{w_1^o,\ldots,w_N^o\}$. In the current {Part II, and its accompanying Part I~\cite{nassif2018diffusion},} the prior belief we want to enforce is that the target signal $\cw^o$ is smooth with respect to the underlying weighted graph. References~\cite{chen2014multitask,nassif2016proximal} provide variations for such problems for the special case of mean-square-error costs. {Here we treat general convex costs. Let} $\cL= L\otimes I_M$. The smoothness of $\cw$ can be measured in terms of a quadratic form of the graph {Laplacian~\cite{zhou2004regularization,shuman2013emerging,ando2006learning,dong2016learning,chen2017bias}}:
\begin{equation}
\label{eq: quadratic regularization}
S(\cw)=\cw^\top\cL \cw=\frac{1}{2}\sum_{k=1}^N\sum_{\ell\in\cN_k}a_{k\ell}\|w_k-w_{\ell}\|^2,
\end{equation}
where $\cN_k$ is the set of neighbors of $k$, i.e., the set of nodes connected to agent $k$ by an edge. The smaller $S(\cw)$ is, the  smoother the  signal $\cw$ on the  graph is. Intuitively, given that the weights are non-negative, $S(\cw)$ shows that $\cw$ is  considered  to  be  smooth if nodes with a large $a_{k\ell}$ on the edge connecting them have similar weight values $\{w_k,w_{\ell}\}$. Our objective is to devise and study a strategy that solves the following regularized problem:
\begin{equation}
\label{eq: global problem}
\cw^o_{\eta}=\arg\min_{\cw}J^{\text{glob}}(\cw)=\sum_{k=1}^NJ_k(w_k)+\frac{\eta}{2}\, \cw^{\top}\cL \cw,
\end{equation}
in a distributed manner where each agent is interested in estimating the $k$-th sub-vector of $\cw^o_{\eta}=\col\{w_{1,\eta}^o,\ldots,w_{N,\eta}^o\}$. The tuning parameter $\eta\geq 0$ controls the trade-off between the two components  of the objective function. 
We are particularly interested in solving the problem in the stochastic setting when the distribution of the data $\bx_k$ in  $J_k(w_k)=\expec\,Q_k(w_k;\bx_k)$ is generally unknown. This means that the risks $J_k(w_k)$ and their gradients $\nabla_{w_k}J_k(w_k)$ are unknown. As such, approximate gradient vectors need to be employed. A common construction in stochastic approximation theory is to employ the following approximation at  iteration $i$:
\begin{equation}
\widehat{\nabla_{w_k}J_k}(w_k)=\nabla_{w_k}Q_k(w_k;\bx_{k,i}),
\end{equation}
where $\bx_{k,i}$ represents the data observed at iteration $i$. The difference between the true gradient and its approximation is called the gradient noise $\bs_{k,i}(\cdot)$:
\begin{equation}
\label{eq: gradient noise process}
\bs_{k,i}(w)\triangleq\nabla_{w_k}J_k(w)-\widehat{\nabla_{w_k}J_k}(w).
\end{equation}
{Let $\bw_{k,i}$ denote the estimate of $w^o_{k,\eta}$ at iteration $i$ and node $k$. In order to solve~\eqref{eq: global problem} in a fully distributed and adaptive manner, we proposed in Part I~\cite{nassif2018diffusion} the following diffusion-type algorithm}:
\begin{equation}
\label{eq: distributed algorithm}
\left\lbrace
\begin{array}{lr}
\bpsi_{k,i}=\bw_{k,i-1}-\mu\widehat{\nabla_{w_k}J_k}(\bw_{k,i-1})\\
\bw_{k,i}=\bpsi_{k,i}-\mu\eta\displaystyle\sum_{\ell\in\cN_k}a_{k\ell}(\bpsi_{k,i}-\bpsi_{\ell,i}),
\end{array}
\right.
\end{equation}
where $\mu>0$ is a small step-size parameter and $\bpsi_{k,i}$ is an intermediate variable.

\subsection{{Summary of main results}}
{One key insight that followed from the analysis in Part I~\cite{nassif2018diffusion} is that the smoothing parameter $\eta$ can be regarded as an effective tuning parameter that controls the nature of the learning process. The value of $\eta$ can vary from $\eta=0$ to $\eta\rightarrow\infty$. We showed that at one end, when $\eta=0$, the learning algorithm reduces to a non-cooperative mode of operation where each agent acts individually and estimates its own local model, $w^o_k$. On the other hand, when $\eta\rightarrow\infty$, the learning algorithm moves to a single-mode of operation where all agents cooperate to estimate a {\em single} parameter (namely, the Pareto solution of the aggregate cost function). For any values of $\eta$ in the range $0<\eta<\infty$, the network behaves in a multitask mode where agents seek their individual models while at the same time ensuring that these models satisfy certain smoothness and closeness conditions dictated by the value of $\eta$. }

{In Part I~\cite{nassif2018diffusion}, we carried out a detailed stability analysis of the proposed strategy~\eqref{eq: distributed algorithm}. We showed, under some conditions on the step-size parameter $\mu$, that:
\begin{align}
\limsup_{i\rightarrow\infty}\|\expec(\cw^o_{\eta}-\bcw_i)\|&=O(\mu),~\quad\text{(see~\cite[Theorem 4]{nassif2018diffusion})}\label{eq: mean stability}\\
\limsup_{i\rightarrow\infty}\expec\|\cw^o_{\eta}-\bcw_i\|^2&=O(\mu), ~\quad\text{(see~\cite[Theorem 2]{nassif2018diffusion})}\label{eq: mean-square stability}\\
\limsup_{i\rightarrow\infty}\expec\|\cw^o_{\eta}-\bcw_i\|^4&=O(\mu^2),\quad\text{(see~\cite[Theorem 3]{nassif2018diffusion})}\label{eq: mean-fourth stability}
\end{align}
where $\cw^o_\eta$ is the solution of the regularized problem~\eqref{eq: global problem} and $\bcw_i=\col\{\bw_{1,i},\ldots,\bw_{N,i}\}$ denotes the network block weight vector at iteration $i$. Expression~\eqref{eq: mean-square stability} indicates that the mean-square error $\expec\|\cw^o_{\eta}-\bcw_i\|^2$ is on the order of $\mu$. However, in this Part II, we are interested in characterizing how close the $\bcw_i$ gets to  the network limit point $\cw^o_{\eta}$. In particular, we will be able to characterize the network mean-square deviation (MSD) (defined below in~\eqref{eq: network MSD performance}) value  in terms of the step-size $\mu$, the regularization strength $\eta$, the network topology (captured by the eigenvalues $\lambda_m$ and eigenvectors $v_m$ of the Laplacian $L$), and the data characteristics (captured by the second-order properties of the costs $H_{k,\eta}$ and second-order moments of the gradient noise $R_{s,k,\eta}$) as follows:
\begin{equation}
\label{eq: final expression for the network MSD alternative intro}
\text{MSD}=\frac{\mu}{2N}\sum_{m=1}^N{\tr}\left(\left(\sum_{k=1}^N[v_m]^2_kH_{k,\eta}+\eta\lambda_mI\right)^{-1}\left(\sum_{k=1}^N[v_m]^2_kR_{s,k,\eta}\right)\right)+\frac{O(\mu)}{(O(1)+O(\eta))},
\end{equation}
where $[v_m]_k$ denotes the $k$-th entry of the eigenvector $v_m$. The interpretation of~\eqref{eq: final expression for the network MSD alternative intro} is explained in more detail in Section~\ref{sec: mean-square performance} where it is shown, by  coupling $\eta$ and $\mu$ in an appropriate manner, that the term $\frac{O(\mu)}{(O(1)+O(\eta))}$ will be a strictly higher order term of $\mu$. As we will explain later in Sections~\ref{sec: mean-square performance} and~\ref{sec: multitask learning benefit}, by properly setting the parameters, expression~\eqref{eq: final expression for the network MSD alternative intro} allows us to recover the mean-square-deviation of stand-alone adaptive agents ($\eta=0$) and single-task diffusion networks ($\eta\rightarrow\infty$).   }

{Recall that the objective of the multitask strategy~\eqref{eq: distributed algorithm} is to exploit similarities among neighboring agents in an attempt to improve the overall network performance in approaching the collection of individual minimizer ${\cw}^o$ by means of local communications. Section~\ref{sec: multitask learning benefit} in this paper is devoted to quantify the \emph{benefit} of cooperation, namely, the objective of improving the mean-square deviation around the limiting point of the algorithm. In particular, we will be able to characterize the mean-square-deviation ($\overline{\text{MSD}}$) value relative to the multitask objective \( \cw^o \) in terms of the MSD in~\eqref{eq: final expression for the network MSD alternative intro} and the mismatch $\cw^o_\eta-\cw^o$ as follows:
\begin{equation}
\label{eq: final expression for the network MSD relative to wo alternative intro}
\overline{\text{MSD}}=\underbrace{\text{MSD}}_{O(\mu),\eta}+\underbrace{\|\cw^o_\eta-\cw^o\|^2}_{\text{smoothness},\eta}+2(\cw^o_\eta-\cw^o)^\top\underbrace{\text{bias}}_{O(\mu),\eta},
\end{equation}
where ``bias'' is the bias of algorithm~\eqref{eq: distributed algorithm} relative to $\cw^o_{\eta}$ given in future expression~\eqref{eq: steady-state bias of long model}. By increasing $\eta$, the MSD in~\eqref{eq: final expression for the network MSD alternative intro} is more likely to decrease. However, by increasing $\eta$, from expression~(31) in Part I~\cite{nassif2018diffusion}, $\|\cw^o_\eta-\cw^o\|^2$ is more likely to increase and the size of this increase is determined by the smoothness of $\cw^o$. From future Lemma~\ref{lemma: mean stability long term}, it turns out that the third term on the RHS in~\eqref{eq: final expression for the network MSD relative to wo alternative intro} is a function of $\mu$, $\eta$, and the smoothness of the multitask objective $\cw^o$. By increasing $\eta$, this term is more likely to increase. The key conclusion will be that, while the second and third terms on the RHS in~\eqref{eq: final expression for the network MSD relative to wo alternative intro} will  in general increase as the regularization strength \( \eta \) increases, the {size} of this increase is determined by the smoothness of $\cw^o$ which is in turn  function of the network topology  captured by \( L \). The more similar the tasks at neighboring agents are, the smaller these terms will be. This implies that as long as  $\cw^o$ is sufficiently smooth, moderate regularization strengths \( \eta \) in the range $]0,\infty[$ exist such that $\overline{\text{MSD}}$ at these values of $\eta$ will be less than $\overline{\text{MSD}}$ at $\eta=0$ which corresponds to the non-cooperative mode of operation. The best choice for $\eta$ would be the one minimizing $\overline{\text{MSD}}$ in~\eqref{eq: final expression for the network MSD relative to wo alternative intro}. We refer the reader to Fig.~2 in~\cite[Section~II-B]{nassif2018diffusion} for an illustration of this concept of multitask learning benefit. This example will be considered further in the numerical experiments section.}
\subsection{Modeling Assumptions from Part I~\cite{nassif2018diffusion}}
{In this section, we recall the assumptions used in Part I~\cite{nassif2018diffusion} to establish the network mean-square error stability~\eqref{eq: mean-square stability}.
\begin{assumption}
\label{assumption: gradient noise}
\emph{(Gradient noise process)} The gradient noise process defined in~\eqref{eq: gradient noise process} satisfies for any $\bw\in\bcF_{i-1}$ {and for all $k,\ell=1,2,\ldots,N$:
\begin{align}
\expec[\bs_{k,i}(\bw)|\bcF_{i-1}]&=0,\label{eq: mean gradient noise condition}\\
\expec[\|\bs_{k,i}(\bw)\|^2|\bcF_{i-1}]&\leq\beta^2_k\|\bw\|^2+\sigma^2_{s,k},\label{eq: condition on second-order moment of gradient noise}\\
\expec[\bs_{k,i}(\bw)\bs_{\ell,i}^\top(\bw)|\bcF_{i-1}]&=0,\quad k\neq \ell,\label{eq: uncorrelated gradient noises}
\end{align}
for some} $\beta^2_k\geq 0$,  $\sigma^2_{s,k}\geq 0$, and where $\bcF_{i-1}$ denotes the filtration generated by the random processes $\{\bw_{\ell,j}\}$ for all $\ell=1,\ldots,N$ and $j\leq i-1$.
\qed
\end{assumption}}
%
\noindent {Let us introduce the network block vector $\bcw_i=\col\{\bw_{1,i},\ldots,\bw_{N,i}\}$. Recall from Part I~\cite{nassif2018diffusion} that at each iteration, we can view~\eqref{eq: distributed algorithm} as a mapping from $\bcw_{i-1}$ to $\bcw_i$:
\begin{equation}
\label{eq: network vector recursion}
\boxed{\bcw_i=\left(I_{MN}-\mu\eta\cL\right)\left(\bcw_{i-1}-\mu\,\col\left\{\widehat{\nabla_{w_k}J_k}(\bw_{k,i-1})\right\}_{k=1}^N\right)}
\end{equation}
{We introduced the following condition} on the combination matrix $(I_{MN}-\mu\eta\cL)$.
\begin{assumption}
\label{assumption: combination matrix}
\emph{(Combination matrix)} The symmetric combination matrix $\left(I_{MN}-\mu\eta\cL\right)$ {has nonnegative entries and its spectral radius is equal to one. Since $L$ has an eigenvalue at zero, these} conditions are satisfied when the step-size $\mu>0$ and the regularization strength $\eta\geq 0$ satisfy:
\begin{align}
\label{eq: condition for stability}&0\leq\mu\eta\leq\frac{2}{\lambda_{\max}(L)},\\
\label{eq: condition for positivity}&0\leq\mu\eta\leq\min_{1\leq k\leq N}\left\{\frac{1}{\sum_{\ell=1}^Na_{k\ell}}\right\},
\end{align}
where condition~\eqref{eq: condition for stability} ensures stability and condition~\eqref{eq: condition for positivity} ensures non-negative entries.
\qed
\end{assumption}}

{The results in Part I~\cite{nassif2018diffusion} established that the iterates $\bw_{k,i}$ converge in the mean-square-error sense to a small $O(\mu)-$ neighborhood around the regularized solution $w^o_{k,\eta}$. In this part of the work, we will be more precise and determine the size of this neighborhood, i.e., assess the size of the constant multiplying $\mu$ in the $O(\mu)-$term. To do so, we shall derive an accurate first-order expression for the mean-square error~\eqref{eq: mean-square stability}; the expression will be accurate to first-order in $\mu$.}

{To arrive at the desired expression, we first need to introduce a long-term approximation model and assess how close it is to the actual model. We then derive the performance for the long-term model and use this closeness to transform this result into an accurate expression for the performance of the original learning algorithm. 
To derive the long-term model, we follow the approach developed in~\cite{sayed2014adaptation}. The first step is to establish the asymptotic stability of the  fourth-order moment of the error vector, $\expec\|\cw_{\eta}^o-\bcw_{i}\|^4$, which has already been done in Part I~\cite{nassif2018diffusion}. This property is needed to justify the validity of the long-term approximate model. Recall that to establish the fourth-order stability, we replaced condition~\eqref{eq: condition on second-order moment of gradient noise} on the gradient noise process by the following condition on its fourth order moment:
\begin{equation}
\label{eq: condition on fourth-order moment of gradient noise}
\expec\left[\|\bs_{k,i}(\bw_k)\|^4|\bcF_{i-1}\right]\leq\overline{\beta}_k^4\|\bw_k\|^4+\overline{\sigma}_{s,k}^4,
\end{equation}
for some $\overline{\beta}_k^4\geq 0$,  and $\overline{\sigma}_{s,k}^4\geq 0$. As explained in~\cite{sayed2014adaptation}, condition~\eqref{eq: condition on fourth-order moment of gradient noise} implies~\eqref{eq: condition on second-order moment of gradient noise}.}

{To establish the mean-stability~\eqref{eq: mean stability}, we introduced a smoothness condition on the Hessian matrices of the individual costs. This smoothness condition will be adopted in the next section when we study the long term behavior of the network.
\begin{assumption}{\emph{(Smoothness condition on individual cost functions).}}
\label{assumption: local smoothness hessian}
It is assumed that each $J_k(w_k)$ satisfies a smoothness condition close to $w^o_{k,\eta}$, in that the corresponding Hessian matrix is Lipchitz continuous in the proximity of $w^o_{k,\eta}$ with some parameter $\kappa_d\geq 0$, i.e.,
\begin{equation}
\label{eq: hessian smoothness condition}
\|\nabla^2_{w_k}J_k(w^o_{k,\eta}+\Delta w_k)-\nabla^2_{w_k}J_k(w^o_{k,\eta})\|\leq \kappa_d\|\Delta w_k\|,
\end{equation}
for small perturbations $\|\Delta w_k\|\leq\epsilon$.
\qed
\end{assumption}}

\section{Long-term network dynamics}
\label{sec: Long-term network dynamics}
Let $\bcwt_i=\cw^o_{\eta}-\bcw_i$. Subtracting the vector $(I_{MN}-\mu\eta\cL)\cw^o_{\eta}$ from both sides of recursion~\eqref{eq: network vector recursion}, and using~\eqref{eq: gradient noise process}, we obtain:
\begin{equation}
\label{eq: error recursion wt big relation}
\bcwt_{i}-\mu\eta\cL\cw^o_{\eta}=(I_{MN}-\mu\eta\cL)\left(\bcwt_{i-1}+\mu\,\col\left\{\nabla_{w_k}J_k(\bw_{k,i-1})-\bs_{k,i}(\bw_{k,i-1})\right\}_{k=1}^{N}\right),
\end{equation}
From the mean-value theorem~\cite[pp.~24]{polyak1987introduction},\cite[Appendix~D]{sayed2014adaptation}, we have:
\begin{equation}
\label{eq: nabla w k at w k}
\nabla_{w_k}J_k(\bw_{k,i-1})=\nabla_{w_k}J_k(w^o_{k,\eta})-\bH_{k,i-1}(w^o_{k,\eta}-\bw_{k,i-1}),
\end{equation}
where
\begin{equation}
\bH_{k,i-1}\triangleq\int_{0}^1\nabla^2_{w_k}J_k(w^o_{k,\eta}-t(w^o_{k,\eta}-\bw_{k,i-1}))dt,
\end{equation}
{and from the optimality condition of~\eqref{eq: global problem}, we have:
\begin{equation}
\label{eq: optimality condition}
\col\left\{\nabla_{w_k}J_k(w_{k,\eta}^o)\right\}_{k=1}^N=-\eta\cL \cw^o_{\eta}.
\end{equation}}
Replacing~\eqref{eq: nabla w k at w k} into~\eqref{eq: error recursion wt big relation} and using~\eqref{eq: optimality condition}, we arrive at the following recursion for $\bcwt_i$:
\begin{equation}
\label{eq: error recursion wt}
\bcwt_{i}=(I_{MN}-\mu\eta\cL)(I_{MN}-\mu\bcH_{i-1})\bcwt_{i-1}-\mu(I_{MN}-\mu\eta\cL)\bs_{i}(\bcw_{i-1})+\mu^2\eta^2\cL^2\cw^o_{\eta},
\end{equation}
where
\begin{align}
\bs_{i}(\bcw_{i-1})&\triangleq\col\{\bs_{k,i}(\bw_{k,i-1})\}_{k=1}^N,\\
\bcH_{i-1}&\triangleq\diag\{\bH_{k,i-1}\}_{k=1}^N.
\end{align}

We move on to motivate a long-term model for the evolution of the network error dynamics, $\bcwt_{i}$, after sufficient iterations, i.e., for $i\gg1$. We examine the stability property of the model, the proximity of its trajectory {to} that of the original network {dynamics~\eqref{eq: error recursion wt}}, and subsequently employ the model to assess network performance. 

 \subsection{Long-term error model}
We introduce the error matrix $\widetilde{\bcH}_{i-1}\triangleq\cH_{\eta}-\bcH_{i-1}$, which measures the deviation of $\bcH_{i-1}$ from the constant matrix:
\begin{equation}
\label{eq: H of eta hessian}
\cH_{\eta}\triangleq\diag\{H_{k,\eta}\}_{k=1}^N,
\end{equation} with each $H_{k,\eta}$ given by the value of the Hessian matrix at the {regularized solution}, namely, $H_{k,\eta}\triangleq\nabla_{w_k}^2J_k(w_{k,\eta}^o)$. Let
 \begin{align}
 \bcB_{i-1}&\triangleq(I_{MN}-\mu\eta\cL)(I_{MN}-\mu\bcH_{i-1}),\\
 \cB_{\eta}&\triangleq(I_{MN}-\mu\eta\cL)(I_{MN}-\mu\cH_{\eta}).\label{eq: matrix B_eta}
 \end{align}
Then, we can write:
\begin{equation}
\label{eq: B_i-1 0}
\bcB_{i-1}=\cB_{\eta}+\mu(I_{MN}-\mu\eta\cL)\widetilde{\bcH}_{i-1}.
\end{equation}
Using~\eqref{eq: B_i-1 0}, we can rewrite the error recursion~\eqref{eq: error recursion wt} as:
\begin{equation}
\label{eq: error recursion wt 0}
\begin{split}
\bcwt_{i}=\cB_{\eta}\bcwt_{i-1}-\mu(I_{MN}-\mu\eta\cL)\bs_{i}(\bcw_{i-1})+\mu^2\eta^2\cL^2\cw^o_{\eta}+\mu(I_{MN}-\mu\eta\cL)\bc_{i-1},
\end{split}
\end{equation}
in terms of the random perturbation sequence:
\begin{equation}
\label{eq: definition of c i-1}
\bc_{i-1}\triangleq\widetilde{\bcH}_{i-1}\bcwt_{i-1}.
\end{equation}
Under Assumptions~\ref{assumption: strong convexity} and~\ref{assumption: local smoothness hessian}, and for small $\mu$, it can be shown that  $\limsup_{i\rightarrow\infty}\expec\|\bc_{i-1}\|=O(\mu)$, and that $\|\bc_{i-1}\|=O(\mu)$ asymptotically with \textit{high probability} {(see Appendix~\ref{app: perturbation sequence size})}. {Motivated by this result, we introduce the following approximate model, where the last term involving $\bc_{i-1}$ in~\eqref{eq: error recursion wt 0}, which is $O(\mu^2)$, is removed:}
\begin{equation}
\label{eq: long term error model 1}
\begin{split}
\bcwt'_{i}=\cB_{\eta}\bcwt'_{i-1}-\mu(I_{MN}-\mu\eta\cL)\bs_{i}(\bcw_{i-1})+\mu^2\eta^2\cL^2\cw^o_{\eta}, \quad i\gg1.
\end{split}
\end{equation}
Obviously, the iterates that are generated by~\eqref{eq: long term error model 1} are generally different from the iterates generated by the original recursion~\eqref{eq: error recursion wt}. To highlight this fact, we {are using} the prime notation for the state of the long-term model. Note that the driving process $\bs_{i}(\bcw_{i-1})$ in~\eqref{eq: long term error model 1}  is the {{\em same}} gradient noise process from the original recursion~\eqref{eq: error recursion wt} and is evaluated at $\bcw_{i-1}$. In the following, we show that, after sufficient iterations $i\gg 1$, the error dynamics of the network relative to the {solution} $\cw^o_{\eta}$ is well-approximated by the  model~\eqref{eq: long term error model 1}.
\subsection{Size of Approximation Error}
{We start by showing} that the mean-square difference between the trajectories $\{\bcwt_i,\bcwt'_i\}$ is asymptotically bounded by $O(\mu^2)$ and that the {mean-square error performance} of the long term model~\eqref{eq: long term error model 1}  is within $O(\mu^{\frac{3}{2}})$ from the {performance} of the original recursion~\eqref{eq: error recursion wt}. Working with recursion~\eqref{eq: long term error model 1} is much more tractable for performance analysis because its dynamics is driven by the constant matrix $\cB_{\eta}$ as opposed to the random matrix $\bcB_{i-1}$ in the original error recursion~\eqref{eq: error recursion wt}. Therefore, we shall work with the long-term model~\eqref{eq: long term error model 1} and evaluate its {performance}, which {will provide} an accurate representation for the {performance} of the original distributed strategy~\eqref{eq: distributed algorithm} to first order in the step-size $\mu$.

\begin{lemma}{\emph{(Size of approximation error)}}
\label{lemma: dimension of the approximation error}
Under Assumptions~\ref{assumption: strong convexity},~\ref{assumption: gradient noise},~\ref{assumption: combination matrix}, and~\ref{assumption: local smoothness hessian}, and condition~\eqref{eq: condition on fourth-order moment of gradient noise}, it holds that:
 \begin{align}
 \limsup_{i\rightarrow\infty}\expec\|\bcwt_i-\bcwt_i'\|^2&=O(\mu^2),\\
 \limsup_{i\rightarrow\infty} \expec\|\bcwt_i\|^2&= \limsup_{i\rightarrow\infty}\expec\|\bcwt_i'\|^2+O(\mu^{\frac{3}{2}}).
 \end{align}
\end{lemma}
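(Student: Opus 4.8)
The plan is to analyze the difference process $\bz_i\triangleq\bcwt_i-\bcwt_i'$ directly. Subtracting the long-term model~\eqref{eq: long term error model 1} from the exact recursion~\eqref{eq: error recursion wt 0}, the gradient-noise term $-\mu(I_{MN}-\mu\eta\cL)\bs_i(\bcw_{i-1})$ and the deterministic drift $\mu^2\eta^2\cL^2\cw^o_\eta$ cancel identically, since both models are driven by the \emph{same} noise realization evaluated at $\bcw_{i-1}$. This leaves the clean linear recursion
\begin{equation}
\bz_i=\cB_\eta\bz_{i-1}+\mu(I_{MN}-\mu\eta\cL)\bc_{i-1},
\end{equation}
with $\bc_{i-1}=\widetilde{\bcH}_{i-1}\bcwt_{i-1}$ from~\eqref{eq: definition of c i-1}. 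Hence the entire approximation error is forced solely by the perturbation sequence $\bc_{i-1}$ that was dropped in passing to the long-term model, and everything reduces to propagating a bound on $\bc_{i-1}$ through a stable recursion.

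Two ingredients control this recursion. First, I would bound the driving matrix: since $I_{MN}-\mu\eta\cL$ is symmetric with spectral radius one (Assumption~\ref{assumption: combination matrix}) and $\|I_{MN}-\mu\cH_\eta\|\le 1-\mu\lambda_{\min}$ for small $\mu$ (using $\lambda_{k,\min}I_M\le H_{k,\eta}\le\lambda_{k,\max}I_M$ from Assumption~\ref{assumption: strong convexity}, with $\lambda_{\min}\triangleq\min_k\lambda_{k,\min}$), submultiplicativity of the $2$-norm gives $\|\cB_\eta\|\le 1-\mu\lambda_{\min}\triangleq\gamma<1$, together with $\|I_{MN}-\mu\eta\cL\|\le 1$. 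Second, I would establish the mean-square bound $\expec\|\bc_{i-1}\|^2=O(\mu^2)$. From the integral form of $\bH_{k,i-1}$ and the Lipschitz-Hessian Assumption~\ref{assumption: local smoothness hessian}, one obtains $\|\widetilde{\bcH}_{i-1}\|\le\tfrac{\kappa_d}{2}\|\bcwt_{i-1}\|$, hence $\|\bc_{i-1}\|\le\tfrac{\kappa_d}{2}\|\bcwt_{i-1}\|^2$, and therefore $\expec\|\bc_{i-1}\|^2\le\tfrac{\kappa_d^2}{4}\expec\|\bcwt_{i-1}\|^4=O(\mu^2)$ by the fourth-order stability~\eqref{eq: mean-fourth stability} established in Part~I under condition~\eqref{eq: condition on fourth-order moment of gradient noise}.

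With these in hand, applying Jensen's inequality to $\bz_i$ with splitting weight $\gamma$ yields $\|\bz_i\|^2\le\gamma\|\bz_{i-1}\|^2+\tfrac{\mu^2}{1-\gamma}\|\bc_{i-1}\|^2$; taking expectations and using $1-\gamma=\mu\lambda_{\min}$ gives $\expec\|\bz_i\|^2\le\gamma\,\expec\|\bz_{i-1}\|^2+\tfrac{\mu}{\lambda_{\min}}\expec\|\bc_{i-1}\|^2$. Iterating this stable recursion and inserting $\expec\|\bc_{i-1}\|^2=O(\mu^2)$ produces $\limsup_i\expec\|\bz_i\|^2\le\tfrac{1}{1-\gamma}\cdot O(\mu^3)=O(\mu^2)$, which is the first claim. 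For the second claim I would expand $\|\bcwt_i\|^2=\|\bcwt_i'\|^2+2(\bcwt_i')^\top\bz_i+\|\bz_i\|^2$; since $\expec\|\bcwt_i'\|^2=O(\mu)$ (inherited from~\eqref{eq: mean-square stability} via $\|\bcwt_i'\|\le\|\bcwt_i\|+\|\bz_i\|$) while $\expec\|\bz_i\|^2=O(\mu^2)$, Cauchy--Schwarz bounds the cross term by $\sqrt{O(\mu)}\,\sqrt{O(\mu^2)}=O(\mu^{3/2})$, which dominates the $O(\mu^2)$ term and gives $\expec\|\bcwt_i\|^2=\expec\|\bcwt_i'\|^2+O(\mu^{3/2})$.

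The main obstacle is the mean-square bound on $\bc_{i-1}$, because Assumption~\ref{assumption: local smoothness hessian} guarantees the Lipschitz estimate only for perturbations $\|\Delta w_k\|\le\epsilon$, whereas $\bcwt_{i-1}$ can occasionally be large. A rigorous treatment must therefore split on the rare event $\{\|\bcwt_{i-1}\|>\epsilon\}$ and control its contribution through the fourth-order moment, which is exactly what the argument in Appendix~\ref{app: perturbation sequence size} does for $\expec\|\bc_{i-1}\|$. Upgrading that conclusion from $O(\mu)$ in the first moment to $O(\mu^2)$ in the second moment is the delicate quantitative step on which both assertions of the lemma ultimately rest.
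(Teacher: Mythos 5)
Your proof is correct and follows essentially the same route as the paper's: the same difference recursion $\bz_i=\cB_{\eta}\bz_{i-1}+\mu(I_{MN}-\mu\eta\cL)\bc_{i-1}$ obtained by subtracting~\eqref{eq: long term error model 1} from~\eqref{eq: error recursion wt 0}, the same quadratic bound $\|\bc_{i-1}\|\leq\frac{1}{2}\kappa'_d\|\bcwt_{i-1}\|^2$ fed by the fourth-order stability~\eqref{eq: mean-fourth stability}, and the same Cauchy--Schwarz/H\"older argument giving the $O(\mu^{\frac{3}{2}})$ proximity of the two variances. The only real difference is bookkeeping: you contract in the global $2$-norm via $\|\cB_{\eta}\|\leq 1-\mu\lambda_{\min}$, whereas the paper runs the identical estimate agentwise, stacking $\expec\|\bz_{k,i}\|^2$ into the vector $\text{MSP}_{z,i}$ and exploiting the nonnegativity of the combination matrix $C=I_N-\mu\eta L$ together with $G''=\diag\{\gamma_k\}_{k=1}^N$ from~\eqref{eq: gamma_k}; the paper's version yields per-agent bounds, but the scaling $O(\mu^3)/O(\mu)=O(\mu^2)$ at steady state is the same in both. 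One correction to your closing paragraph: the local-to-global issue with Assumption~\ref{assumption: local smoothness hessian} does not require splitting on the rare event $\{\|\bcwt_{i-1}\|>\epsilon\}$, and Appendix~\ref{app: perturbation sequence size} does not perform such a split. Because Assumption~\ref{assumption: strong convexity} bounds the Hessians two-sidedly, for $\|\Delta w_k\|>\epsilon$ one has $\|\nabla^2_{w_k}J_k(w^o_{k,\eta}+\Delta w_k)-\nabla^2_{w_k}J_k(w^o_{k,\eta})\|\leq 2\lambda_{k,\max}\leq\left(2\lambda_{k,\max}/\epsilon\right)\|\Delta w_k\|$, so the Lipschitz estimate holds \emph{globally} with the enlarged constant $\kappa'_d=\max\left\{\kappa_d,\,2\max_k\lambda_{k,\max}/\epsilon\right\}$ --- this is precisely the constant $\kappa'_d$ (inherited from Part I) used in~\eqref{eq: bound on norm of H tilde k}, after which $\expec\|\bc_{i-1}\|^2\leq\frac{(\kappa'_d)^2}{4}\expec\|\bcwt_{i-1}\|^4=O(\mu^2)$ is immediate and the step you flagged as delicate closes with no probabilistic argument at all.
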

\begin{proof}
See Appendix~\ref{app: proof of dimension of the approximation error}.
\end{proof}
\noindent We shall discuss now the mean and mean-square error stability of the long-term approximate model~\eqref{eq: long term error model 1}.
 \subsection{Stability of First-Order Error Moment}
Conditioning both sides of~\eqref{eq: long term error model 1}, invoking the conditions on the gradient noise from Assumption~\ref{assumption: gradient noise}, and computing the conditional expectations, we obtain:
\begin{equation}
\expec[\bcwt'_{i}|\bcF_{i-1}]=\cB_{\eta}\bcwt'_{i-1}+\mu^2\eta^2\cL^2\cw^o_{\eta}.
\end{equation}
Taking expectation again, we arrive at:
\begin{equation}
\label{eq: mean recursion of the long term model}
\expec\bcwt'_{i}=\cB_{\eta}\expec\bcwt'_{i-1}+\mu^2\eta^2\cL^2\cw^o_{\eta}.
\end{equation}
The above recursion is stable if the matrix $\cB_{\eta}$ in~\eqref{eq: matrix B_eta} is stable. This matrix has a similar form {to} the matrix $(I_{MN}-\mu\eta\cL)(I_{MN}-\mu\cH_{\infty})$ encountered {in Part I~\cite[Section III-A2]{nassif2018diffusion}}. Similarly, it can be verified that $\cB_{\eta}$ is stable when {condition~\eqref{eq: condition for stability}  and condition
\begin{equation}
\label{eq: condition 1}
0<\mu<\min_{1\leq k\leq N}\left\{\frac{2}{\lambda_{k,\max}}\right\}.
\end{equation} are satisfied. }In this case, we {obtain}
\begin{equation}
\label{eq: steady-state bias of long model}
\cwt'_{\infty}\triangleq\lim_{i\rightarrow\infty}\expec\bcwt'_{i}=\mu^2\eta^2(I_{MN}-(I_{MN}-\mu\eta\cL)(I_{MN}-\mu\cH_{\eta}))^{-1}\cL^2\cw^o_{\eta},
\end{equation}
where the {RHS} in the above expression is similar to the {RHS} {in equation (49)~\cite{nassif2018diffusion}} with $\cH_{\infty}$ replaced by $\cH_{\eta}$.

\begin{lemma}{\emph{(Mean stability of long-term model)}}
\label{lemma: mean stability long term}
Under Assumptions~\ref{assumption: strong convexity},~\ref{assumption: gradient noise},~\ref{assumption: combination matrix}, and~\ref{assumption: local smoothness hessian}, and for sufficiently small $\mu$, the steady-state bias $\cwt'_{\infty}=\lim_{i\rightarrow\infty}\expec\bcwt'_{i}$ of the long-term model~\eqref{eq: long term error model 1} given by~\eqref{eq: steady-state bias of long model} satisfies:
\begin{equation}
\label{eq: bound on the bias '}
{\mu\lim_{\mu\rightarrow 0}\left(\frac{1}{\mu}\lim_{i\rightarrow\infty}\|\expec\bcwt'_{i}\|\right)\leq \mu\frac{O(\eta^2)}{(O(1)+O(\eta))^2}.}
\end{equation}
\end{lemma}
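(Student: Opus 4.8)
The plan is to analyze the closed-form expression for $\cwt'_{\infty}$ in~\eqref{eq: steady-state bias of long model} directly by extracting its dependence on $\mu$ and $\eta$ through a diagonalizing transformation. First I would exploit the structure of the matrices $\cL=L\otimes I_M$ and $\cH_\eta=\diag\{H_{k,\eta}\}$. Using the eigendecomposition~\eqref{eigendecomposition of the Laplacian} of the Laplacian, $L=V\Lambda V^\top$, I would transform into the basis of the eigenvectors $\{v_m\}$ so that $\cL$ becomes block-diagonal with blocks $\lambda_m I_M$. The key obstacle is that $\cH_\eta$ does not commute with $\cL$ in general, so the transformed matrix $(I_{MN}-\mu\eta\cL)(I_{MN}-\mu\cH_\eta)$ is not simultaneously diagonalizable; however, for the purpose of an order-of-magnitude bound, I do not need exact diagonalization—I only need to track how the inverse in~\eqref{eq: steady-state bias of long model} scales with $\mu$ and $\eta$.

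The main analytical step is to expand the matrix $I_{MN}-\cB_\eta = I_{MN}-(I_{MN}-\mu\eta\cL)(I_{MN}-\mu\cH_\eta)$. Expanding the product gives $I_{MN}-\cB_\eta = \mu\eta\cL+\mu\cH_\eta-\mu^2\eta\cL\cH_\eta$, which to leading order is $\mu(\eta\cL+\cH_\eta)$. The plan is to factor out $\mu$ and write the inverse as $(I_{MN}-\cB_\eta)^{-1}=\frac{1}{\mu}(\eta\cL+\cH_\eta-\mu\eta\cL\cH_\eta)^{-1}$. Under Assumption~\ref{assumption: strong convexity} the matrix $\cH_\eta$ satisfies $\cH_\eta\geq\lambda_{\min}I_{MN}>0$ where $\lambda_{\min}=\min_k\lambda_{k,\min}$, and $\eta\cL\geq 0$, so $\eta\cL+\cH_\eta$ is uniformly invertible with $\|(\eta\cL+\cH_\eta)^{-1}\|=O\!\left(\frac{1}{O(1)+O(\eta)}\right)$: the smallest singular value grows like $\lambda_{\min}+O(\eta)$ because the positive semidefinite term $\eta\cL$ can only increase the spectrum. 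For sufficiently small $\mu$, the perturbation $-\mu\eta\cL\cH_\eta$ does not change this order, so $\|(I_{MN}-\cB_\eta)^{-1}\|=\frac{1}{\mu}\cdot O\!\left(\frac{1}{O(1)+O(\eta)}\right)$.

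Combining these pieces, I would bound the norm of~\eqref{eq: steady-state bias of long model} by taking norms term by term. We have $\|\cwt'_\infty\|\leq\mu^2\eta^2\|(I_{MN}-\cB_\eta)^{-1}\|\cdot\|\cL^2\cw^o_\eta\|$. The factor $\mu^2\eta^2$ and the factor $\frac{1}{\mu}\cdot\frac{1}{O(1)+O(\eta)}$ from the inverse combine to give $\mu\cdot\frac{\eta^2}{O(1)+O(\eta)}$, and the remaining factor $\|\cL^2\cw^o_\eta\|=O(1)$ (the Laplacian and the regularized solution are bounded independently of $\mu$) contributes another denominator factor, since a finer bound on $\cL\cw^o_\eta$ via the optimality condition~\eqref{eq: optimality condition} shows $\eta\cL\cw^o_\eta=-\col\{\nabla_{w_k}J_k(w^o_{k,\eta})\}$ is bounded, yielding the extra $(O(1)+O(\eta))^{-1}$ needed to match the claimed $\frac{O(\eta^2)}{(O(1)+O(\eta))^2}$. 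Dividing by $\mu$, taking the limit $i\to\infty$ to replace the transient, and then $\mu\to 0$ extracts precisely the constant multiplying $\mu$, giving the stated bound~\eqref{eq: bound on the bias '}.

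The step I expect to be the main obstacle is making the inverse bound $\|(\eta\cL+\cH_\eta)^{-1}\|=O((O(1)+O(\eta))^{-1})$ fully rigorous in the presence of the non-commuting cross term $-\mu^2\eta\cL\cH_\eta$, and correctly accounting for the second $(O(1)+O(\eta))^{-1}$ factor. The cleanest route around the non-commutativity is to observe that $\eta\cL+\cH_\eta$ is symmetric positive definite with smallest eigenvalue at least $\lambda_{\min}>0$ uniformly, so its inverse has norm bounded by $1/\lambda_{\min}$ regardless of $\eta$; the sharper $\eta$-dependence in the denominator then comes from pairing this inverse against $\cL^2\cw^o_\eta$ and invoking~\eqref{eq: optimality condition} rather than from the inverse alone. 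I would verify that the deviation of $\cw^o_\eta$ from the range considerations keeps $\eta\cL\cw^o_\eta$ bounded, so that the product telescopes to the advertised order. Assembling these estimates carefully—rather than the leading-order scalings—is where the bookkeeping must be done precisely.
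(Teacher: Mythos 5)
Your overall route---bounding the closed form~\eqref{eq: steady-state bias of long model} directly after factoring $I_{MN}-\cB_{\eta}=\mu\left(\eta\cL+\cH_{\eta}-\mu\eta\cL\cH_{\eta}\right)$---is the natural one, and it is in the same spirit as the paper's proof (which is inherited from Theorem~1 of Part~I, applied to this same closed form with $\cH_\infty$ replaced by $\cH_\eta$). The genuine gap is in how you produce the \emph{two} factors of $(O(1)+O(\eta))^{-1}$. Your claim that the smallest singular value of $\eta\cL+\cH_\eta$ ``grows like $\lambda_{\min}+O(\eta)$'' is false: $L$ is the Laplacian of a connected graph, so $\lambda_1=0$ with eigenvector $v_1=\mathds{1}_N/\sqrt{N}$, and for $x=v_1\otimes u$ one has $\cL x=0$ and $x^\top(\eta\cL+\cH_\eta)x\leq\max_k\lambda_{k,\max}$ \emph{independently of} $\eta$. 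Hence $\|(\eta\cL+\cH_\eta)^{-1}\|$ is $\Theta(1)$, not $O\left((O(1)+O(\eta))^{-1}\right)$; this is also visible in the paper's own Lemma~\ref{lemm: coefficient matrix}, where the transformed inverse retains a $\mu^{-1}\cdot O(1)$ block associated with $\lambda_1=0$. You concede this in your final paragraph and retreat to the uniform bound $1/\lambda_{\min}$, but then the only $\eta$-decay you have justified is $\eta\cL\cw^o_\eta=-\col\{\nabla_{w_k}J_k(w^o_{k,\eta})\}=O(1)$ from~\eqref{eq: optimality condition}, which gives $\|\cL^2\cw^o_\eta\|=O\left((O(1)+O(\eta))^{-1}\right)$---\emph{one} factor only. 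Assembling the steps you actually establish yields $\mu\,O(\eta^2)/(O(1)+O(\eta))$, weaker than~\eqref{eq: bound on the bias '} by a full factor of $(O(1)+O(\eta))$; the phrase ``the product telescopes to the advertised order'' asserts the missing factor but never supplies a mechanism for it.

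The missing idea is subspace structure: $\cL^2\cw^o_\eta$ lies in $\text{range}(\cL)$, i.e., it is orthogonal to $\text{null}(\cL)=\text{span}\{v_1\otimes I_M\}$. On that complement, $\eta\cL$ contributes at least $\eta\lambda_2>0$; but since $\cH_\eta$ couples the $m=1$ and $m\geq 2$ eigendirections (the off-diagonal blocks $H_{mn}$, $m\neq n$, of~\eqref{eq: H mn}), you cannot simply restrict the inverse to the subspace. You need a Schur-complement (block $2\times 2$) argument in the basis $\cV=V\otimes I_M$---mirroring what Appendix~\ref{app: coefficient matrix F eta} does for $I-\overline{\cF}_\eta$ via the splitting $X+Y$---to show that $\|(\eta\cL+\cH_\eta-\mu\eta\cL\cH_\eta)^{-1}y\|=O\left((O(1)+O(\eta))^{-1}\right)\|y\|$ for $y$ orthogonal to the null space; combined with $\|\cL^2\cw^o_\eta\|=O\left((O(1)+O(\eta))^{-1}\right)$ this gives~\eqref{eq: bound on the bias '}. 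The shortfall is not cosmetic: your weaker bound reads $O(\mu\eta)$ for large $\eta$ instead of $O(\mu)$, and Lemma~\ref{lemma: mean stability long term} is used \emph{squared} in~\eqref{eq: steady-state of the second order long term}; with your version the bias-squared term becomes $O(\mu^2\eta^2)$, which no longer stays $O(\mu)$ once $\eta$ is coupled to the step-size as $\eta=\mu^{-\epsilon}$ with $\epsilon\geq 1/2$, a regime the paper explicitly invokes in Section~\ref{sec: mean-square performance}. (A minor additional point: your argument also needs $\cw^o_\eta$, and hence $\nabla_{w_k}J_k(w^o_{k,\eta})$, bounded uniformly in $\eta$, which should be cited from Part~I rather than from boundedness ``independently of $\mu$.'')
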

\begin{proof}
The proof is similar to the proof of {Theorem 1 in Part I~\cite{nassif2018diffusion}} 
 with $\cH_{\infty}$  replaced by $\cH_{\eta}$.
\end{proof}
\subsection{Stability of Second-Order Error Moment}
In the following, we show that the  long term approximate model~\eqref{eq: long term error model 1} is also mean-square stable in the sense that $\expec\|\bwt'_{k,i}\|^2$ tends asymptotically to a region that is bounded by $O(\mu)$. We follow the same line of reasoning as {in Part I~\cite[Section III-A]{nassif2018diffusion}} where we studied the mean-square stability of the original model~\eqref{eq: error recursion wt}. Based on the inequality:
\begin{align}
\limsup_{i\rightarrow\infty}\expec\|\cw^o_{\eta}-\bcw'_i\|^2&=\limsup_{i\rightarrow\infty}\expec\|\cw^o_{\eta}-\cw'_{\infty}+\cw'_{\infty}-\bcw'_i\|^2\notag\\
&\leq2\|\cw^o_{\eta}-\cw'_{\infty}\|^2+2\limsup_{i\rightarrow\infty}\expec\|\cw'_{\infty}-\bcw'_i\|^2,\label{eq: bound on mean-square expectation long term}
\end{align}
where $\cw^o_{\eta}-\cw'_{\infty}=\cwt'_{\infty}$ is the steady-state bias of the long term model given by~\eqref{eq: steady-state bias of long model} and where $\cw'_{\infty}-\bcw'_i$ follows the recursion:
\begin{equation}
\label{eq: recursion for w' infinity w'i}
\cw'_{\infty}-\bcw'_{i}=\cB_{\eta}(\cw'_{\infty}-\bcw'_{i-1})-\mu(I_{MN}-\mu\eta\cL)\bs_{i}(\bcw_{i-1}),
\end{equation}
and from {Theorems 1 and 2 in~Part I~\cite{nassif2018diffusion} and previous} Lemma~\ref{lemma: mean stability long term}, we can establish the mean-square stability of~\eqref{eq: long term error model 1}. Let us introduce the mean-square perturbation vector ($\text{MSP}'$) at time $i$ relative to $\cw'_{\infty}$:
\begin{equation}
\text{MSP}'_i\triangleq\col\left\{\expec\|w'_{k,\infty}-\bw'_{k,i}\|^2\right\}_{k=1}^N.
\end{equation}

\begin{lemma}{\emph{(Mean-square stability of the long-term model)}}
\label{lemma: mean-square stability long term}
Under Assumptions~\ref{assumption: strong convexity},~\ref{assumption: gradient noise},~\ref{assumption: combination matrix}, and~\ref{assumption: local smoothness hessian}, the $\emph{MSP}'$ at time $i$ can be recursively bounded as:
\begin{equation}
\label{eq: evolution of the MSP' i}
\emph{MSP}'_i \preceq (I_N-\mu\eta L)(G'')^2\emph{MSP}'_{i-1}+3\mu^2 (I_N-\mu\eta L)\emph{\diag}\{\beta^2_k\}_{k=1}^N\emph{MSP}_{i-1}+\mu^2 (I_N-\mu\eta L)b,
\end{equation}
{where:
\begin{align}
G''&\triangleq\emph{\diag}\left\{\gamma_k\right\}_{k=1}^N,\label{eq: equation of G''}\\
b&\triangleq{\emph\col}\left\{\sigma^2_{s,k}+3\beta_k^2\|w^o_{k,\eta}\|^2+ 3\beta_k^2\|w^o_{k,\eta}-w_{k,\infty}\|^2\right\}_{k=1}^N,\\
\gamma_k&\triangleq\max\{|1-\mu\lambda_{k,\min}|,|1-\mu\lambda_{k,\max}|\}.\label{eq: gamma_k}
\end{align}
and $\emph{MSP}_{i}$ is the mean-square perturbation vector at time $i$ relative to the fixed point $\cw_\infty=\emph{\col}\{w_{k,\infty}\}_{k=1}^N$ of algorithm~\eqref{eq: distributed algorithm} in the absence of gradient noise (see~\cite[Section III-A3]{nassif2018diffusion}).}  A sufficiently small $\mu$ ensures the stability of the above recursion. It follows that
\begin{equation}
\label{eq: steady-state of the MSP'}
\|\limsup_{i\rightarrow\infty}{\emph{MSP}}'_i\|_{\infty}=O(\mu),
\end{equation}
and that 
\begin{equation}
\label{eq: steady-state of the second order long term}
\limsup_{i\rightarrow\infty}\expec\|\bcwt'_i\|^2=O(\mu)+\frac{O(\mu^2\eta^4)}{(O(1)+O(\eta))^{4}}= O(\mu).
\end{equation}
\end{lemma}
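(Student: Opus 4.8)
The plan is to turn the block recursion~\eqref{eq: recursion for w' infinity w'i} into a componentwise inequality for $\text{MSP}'_i$ and then analyze the resulting linear recursion. Writing $e_{k,i}\triangleq w'_{k,\infty}-\bw'_{k,i}$, and using $I_{MN}-\mu\eta\cL=(I_N-\mu\eta L)\otimes I_M$ together with $I_{MN}-\mu\cH_{\eta}=\diag\{I_M-\mu H_{k,\eta}\}$, the $k$-th block of~\eqref{eq: recursion for w' infinity w'i} reads $e_{k,i}=\sum_{\ell}[I_N-\mu\eta L]_{k\ell}\big[(I_M-\mu H_{\ell,\eta})e_{\ell,i-1}-\mu\bs_{\ell,i}(\bw_{\ell,i-1})\big]$. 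By Assumption~\ref{assumption: combination matrix} the matrix $I_N-\mu\eta L$ has nonnegative entries, and since $L\mathds{1}_N=0$ its rows sum to one, so it is doubly stochastic. I would apply Jensen's inequality along each row to move the squared norm inside the convex combination, then condition on $\bcF_{i-1}$: the zero-mean property~\eqref{eq: mean gradient noise condition} eliminates the cross term, leaving a deterministic part bounded by $\gamma_\ell^2\|e_{\ell,i-1}\|^2$ (using $\|I_M-\mu H_{\ell,\eta}\|\leq\gamma_\ell$, which follows from Assumption~\ref{assumption: strong convexity} and definition~\eqref{eq: gamma_k}) and a noise part controlled by~\eqref{eq: condition on second-order moment of gradient noise}.

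The delicate point is that the driving noise in~\eqref{eq: long term error model 1} is evaluated at the \emph{original} iterate $\bw_{\ell,i-1}$ rather than at $\bw'_{\ell,i-1}$. To surface $\text{MSP}_{i-1}$ I would decompose $\bw_{\ell,i-1}=w^o_{\ell,\eta}-(w^o_{\ell,\eta}-w_{\ell,\infty})-(w_{\ell,\infty}-\bw_{\ell,i-1})$ and invoke $\|a+b+c\|^2\leq 3(\|a\|^2+\|b\|^2+\|c\|^2)$, which yields $\expec\|\bw_{\ell,i-1}\|^2\leq 3\|w^o_{\ell,\eta}\|^2+3\|w^o_{\ell,\eta}-w_{\ell,\infty}\|^2+3[\text{MSP}_{i-1}]_\ell$. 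Substituting this bound produces exactly the constant vector $b$ and the coupling term $3\mu^2\diag\{\beta_k^2\}\text{MSP}_{i-1}$; reassembling the rows with the common left factor $(I_N-\mu\eta L)$ then gives~\eqref{eq: evolution of the MSP' i}.

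For stability I would work in the $\infty$-norm. The coefficient matrix has entries $[(I_N-\mu\eta L)(G'')^2]_{k\ell}=[I_N-\mu\eta L]_{k\ell}\gamma_\ell^2\geq 0$ with row sums at most $\gamma_{\max}^2\triangleq\max_k\gamma_k^2$, so $\|(I_N-\mu\eta L)(G'')^2\|_\infty\leq\gamma_{\max}^2$. Condition~\eqref{eq: condition 1} forces $\gamma_k<1$, and expanding~\eqref{eq: gamma_k} gives $\gamma_{\max}^2=1-\Theta(\mu)$, so the matrix is a contraction and the recursion is stable for small $\mu$. Since $(I-(I_N-\mu\eta L)(G'')^2)^{-1}=\sum_{j\geq 0}((I_N-\mu\eta L)(G'')^2)^j$ is entrywise nonnegative with $\infty$-norm $O(1/\mu)$, taking $\limsup_i$ bounds the steady-state $\text{MSP}'$ by this inverse applied to the forcing vector. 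The forcing has $\infty$-norm $O(\mu^2)$: its constant part is $\mu^2\cdot O(1)$, while the $\text{MSP}_{i-1}$-part is $3\mu^2\cdot O(1)\cdot\|\limsup_i\text{MSP}_{i-1}\|_\infty=O(\mu^3)$, because the original recursion is mean-square stable with $\|\limsup_i\text{MSP}_{i-1}\|_\infty=O(\mu)$ by Part I~\cite{nassif2018diffusion}. Hence $\|\limsup_i\text{MSP}'_i\|_\infty=O(1/\mu)\cdot O(\mu^2)=O(\mu)$, which is~\eqref{eq: steady-state of the MSP'}.

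Finally, to obtain~\eqref{eq: steady-state of the second order long term} I would split $\bcwt'_i=\cwt'_\infty+(\cw'_\infty-\bcw'_i)$ and use $\expec\|\bcwt'_i\|^2\leq 2\|\cwt'_\infty\|^2+2\,\mathds{1}_N^\top\text{MSP}'_i$. The second term is at most $2N\|\text{MSP}'_i\|_\infty=O(\mu)$, and the first is controlled by the mean-stability bound~\eqref{eq: bound on the bias '} of Lemma~\ref{lemma: mean stability long term}, giving $\|\cwt'_\infty\|^2=\frac{O(\mu^2\eta^4)}{(O(1)+O(\eta))^4}$; adding the two reproduces the stated expression, which is $O(\mu)$ since $\eta^4/(O(1)+O(\eta))^4=O(1)$. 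I expect the main obstacle to lie in the order bookkeeping around this near-unit contraction: one must establish that $\gamma_{\max}^2=1-\Theta(\mu)$ so that the inverse is precisely $O(1/\mu)$ and no larger, and simultaneously confirm that the coupling to the original error through $\text{MSP}_{i-1}$ remains a strictly higher-order $O(\mu^2)$ contribution after inversion, so that it does not degrade the final $O(\mu)$ rate.
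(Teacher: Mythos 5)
Your proposal is correct and follows essentially the same route as the paper's proof: the same row-wise Jensen argument using the stochasticity of $I_N-\mu\eta L$, the same three-term decomposition of $\bw_{\ell,i-1}$ yielding the vector $b$ and the coupling term $3\mu^2\,\text{diag}\{\beta_k^2\}\,\text{MSP}_{i-1}$ (which the paper imports as bound~(136) from Part~I), and the same iteration-plus-contraction bookkeeping with $\|(I-C(G'')^2)^{-1}\|_\infty=O(1/\mu)$ and $\|\limsup_i\text{MSP}_i\|_\infty=O(\mu)$ to get~\eqref{eq: steady-state of the MSP'}, followed by the split~\eqref{eq: bound on mean-square expectation long term} combined with Lemma~\ref{lemma: mean stability long term} for~\eqref{eq: steady-state of the second order long term}. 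The only difference is that you spell out steps the paper delegates to Part~I, which does not change the argument.
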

\begin{proof}
See Appendix~\ref{app: proof of mean-square stability of the long term model}.
\end{proof}

\section{Mean-square-error performance}
\label{sec: mean-square performance}
We established in {Theorem~2 in Part I~\cite{nassif2018diffusion}} that a network running strategy~\eqref{eq: distributed algorithm} is mean-square-error stable for sufficiently small $\mu$. {Specifically,} we showed that $\limsup_{i\rightarrow\infty}\expec\|\cw^o_{\eta}-\bcw_{i}\|^2=O(\mu)$. In the following, we assess the size of the network mean-square-deviation (MSD) using the definition~\cite[Chapter 11]{sayed2014adaptation}:
\begin{equation}
\label{eq: network MSD performance}
\text{MSD}\triangleq{\mu\lim_{\mu\rightarrow 0}\left(\limsup_{i\rightarrow\infty}\frac{1}{\mu}\,\expec\left(\frac{1}{N}\|\cw^o_{\eta}-\bcw_{i}\|^2\right)\right)}.
\end{equation}
In addition to Assumptions~\ref{assumption: strong convexity},~\ref{assumption: gradient noise},~\ref{assumption: combination matrix},~\ref{assumption: local smoothness hessian}, and condition~\eqref{eq: condition on fourth-order moment of gradient noise} on the individual costs, $J_k(w_k)$, the gradient noise process, $\bs_{k,i}(\bw_k)$, and the combination matrix, $I_N-\mu\eta L$, we introduce a smoothness condition on the noise covariance matrices.

For any $\bw_k\in\bcF_{i-1}$, we let
\begin{equation}
\label{eq: R ski definition}
R_{s,k,i}(\bw_k)\triangleq\expec[\bs_{k,i}(\bw_k)\bs_{k,i}^\top(\bw_k)|\bcF_{i-1}]
\end{equation}
denote the conditional second-order moment of the gradient noise process, which generally depends on $i$ because the statistical distribution of $\bs_{k,i}(\bw_k)$ can be iteration-dependent, and is random since it depends on the random iterate $\bw_k$. We assume that, in the limit, this covariance matrix tends to a constant value when evaluated at $w^o_{k,\eta}$ and we denote the limit by:
\begin{equation}
\label{eq: R sk definition}
R_{s,k,\eta}\triangleq\lim_{i\rightarrow\infty}\expec[\bs_{k,i}(w^o_{k,\eta})\bs_{k,i}^\top(w^o_{k,\eta})|\bcF_{i-1}].
\end{equation}
\begin{assumption}{\emph{(Smoothness condition on the noise covariance)}}
\label{assumption: Smoothness condition on noise covariance}
It is assumed that the conditional second-order moment of the noise process is locally Lipschitz continuous in a small neighborhood around $w^o_{k,\eta}$, namely,
\begin{equation}
\label{eq: smoothness condition on noise covariance}
\|R_{s,k,i}(w^o_{k,\eta}+\Delta w_k)-R_{s,k,i}(w^o_{k,\eta})\|\leq\kappa_d\|\Delta w_k\|^{{\theta}},
\end{equation}
for small perturbations $\|\Delta w_k\|\leq\epsilon$, and for some constant $\kappa_d\geq 0$ and exponent $0<{\theta}\leq 4$.
\qed
\end{assumption}
\noindent One useful conclusion that follows from Assumption~\ref{assumption: Smoothness condition on noise covariance} is that, after sufficient iterations, we can express the covariance matrix of the gradient noise process, $\bs_{k,i}(\bw_k)$, in terms of the limiting matrix $R_{s,k,\eta}$ defined in~\eqref{eq: R sk definition}. Specifically, following the same proof used to establish Lemma 11.1 in~\cite{sayed2014adaptation}, we can show that under the smoothness condition~\label{eq: smoothness condition on noise covariance} and for small step-size, the covariance matrix of the gradient noise process, $\bs_{k,i}(\bw_{k,i-1})$, at each agent $k$ satisfies for $i\gg 1$:
\begin{equation}
\label{eq: relation on the covariance}
\expec\bs_{k,i}(\bw_{k,i-1})\bs_{k,i}^{\top}(\bw_{k,i-1})=R_{s,k,\eta}+O\left(\mu^{\min\left\{1,\frac{{{\theta}}}{2}\right\}}\right).
\end{equation}
{For clarity of presentation, we sketch the proof in Appendix~\ref{app: proof of limiting second order of gradient noise} where we used results from Theorems~2 and~3 in Part I~\cite{nassif2018diffusion}.}

Before studying the steady-state network performance, we establish some properties of the matrix:
\begin{equation}
\cF_{\eta}\triangleq\cB_{\eta}^\top\otimes_b\cB_{\eta}^\top,\label{eq: matrix F_eta}
\end{equation}
which is defined in terms of the block Kronecker operation using blocks of size $M\times M$. {In the derivation that follows,} we shall use the block Kronecker product $\otimes_b$ operator~\cite{koning1991block} and the block vectorization operator $\bvc(\cdot)$. As explained in~\cite{sayed2014adaptation}, these {operations} preserve the locality of the blocks in the original matrix arguments.  Since $\rho(\cF_{\eta})=\left(\rho(\cB_{\eta})\right)^2$, the matrix $\cF_{\eta}$ is stable under conditions~\eqref{eq: condition for stability}  and~\eqref{eq: condition 1}. This matrix plays a critical role in characterizing the performance of the distributed multitask algorithm. In our derivations, the matrix $\cF_\eta$ will also appear transformed under the orthonormal transformation:
\begin{equation}
\label{eq: transformed matrix F_eta}
\overline{\cF}_{\eta}\triangleq(\cV\otimes_b\cV)^{\top}\cF_{\eta}(\cV\otimes_b\cV),
\end{equation}
{where $\cV\triangleq V\otimes I_M$.}
\begin{lemma}{\emph{(Coefficient matrix $\cF_{\eta}$)}}
\label{lemm: coefficient matrix}
For sufficiently small step-size, it holds that
\begin{equation}
(I-\cF_{\eta})^{-1}=O(\mu^{-1}),
\end{equation}
and
\begin{align}
\label{eq: relation inverse I-F}
(I-\overline{\cF}_{\eta})^{-1}&=X^{-1}+W\\
&=\mu^{-1}\cdot\left[\begin{array}{cc}
O(1)&0\\
0&(O(1)+O(\eta))^{-1}\\
\end{array}\right]+\mu^{-1}\cdot\left[\begin{array}{cc}
(O(1)+O(\eta))^{-1}&(O(1)+O(\eta))^{-1}\\
(O(1)+O(\eta))^{-1}&(O(1)+O(\eta))^{-2}\\
\end{array}\right]
\end{align}
where $X$ is an $N^2\times N^2$ block diagonal matrix with each block of dimension $M^2\times M^2$:
\begin{equation}
X=\mu\cdot\emph{\diag}\left\{\emph{\diag}\left\{(1-\mu\eta\lambda_m)(1-\mu\eta\lambda_p)(H_{mm}\oplus H_{pp})+\eta(\lambda_m+\lambda_p-\mu\eta\lambda_m\lambda_p){I_{M^2}}\right\}_{p=1}^N\right\}_{m=1}^N,
\end{equation}
with $\oplus$ denoting the Kronecker sum operator~\cite{bernstein2005matrix}:
\begin{align}
H_{mm}\oplus H_{pp}=H_{mm}\otimes I_M+I_M\otimes H_{pp},\label{eq: H m p plus}\\
H_{mn}\triangleq(v_m^{\top}\otimes I_M)\cH_\eta(v_n\otimes I_M).\label{eq: H mn}
\end{align}
The matrix $W$ is an $N^2\times N^2$ block matrix arising from the matrices $\{H_{mn}| m\neq n\}$. Moreover, we can also write:
\begin{equation}
(I-\cF_{\eta})^{-1}=(\cV\otimes_b\cV)X^{-1}(\cV\otimes_b\cV)^{\top}+\mu^{-1}(O(1)+O(\eta))^{-1}.\label{eq: approximation of inverse I-F eta}
\end{equation}
\end{lemma}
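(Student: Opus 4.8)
The plan is to use the eigenstructure of the Laplacian to bring $\cF_{\eta}$ to a near–block–diagonal form, isolate its dominant block–diagonal part (which is exactly $X$), and recover the inverse by a resolvent/perturbation argument while tracking the orders in $\mu$ and $\eta$ block by block. First I would pass to the transformed matrix $\overline{\cF}_{\eta}$. Using $L=V\Lambda V^{\top}$ and $\cV=V\otimes I_M$, together with the multiplicativity $(A\otimes_b B)(C\otimes_b D)=(AC)\otimes_b(BD)$ and the transpose rule $(A\otimes_b B)^{\top}=A^{\top}\otimes_b B^{\top}$ of the block Kronecker product, one gets $\overline{\cF}_{\eta}=\overline{\cB}_{\eta}^{\top}\otimes_b\overline{\cB}_{\eta}^{\top}$ with $\overline{\cB}_{\eta}=\cV^{\top}\cB_{\eta}\cV=(I-\mu\eta(\Lambda\otimes I_M))(I-\mu\,\cV^{\top}\cH_{\eta}\cV)$. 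The $(m,n)$ block of $\overline{\cB}_{\eta}$ is then $(1-\mu\eta\lambda_m)(\delta_{mn}I_M-\mu H_{mn})$, whose diagonal blocks are $O(1)$ and off-diagonal blocks $O(\mu)$. Hence the super-block of $\overline{\cF}_{\eta}$ indexed by $((m,p),(n,q))$ equals $[\overline{\cB}_{\eta}]_{nm}^{\top}\otimes[\overline{\cB}_{\eta}]_{qp}^{\top}$, so its super-block-diagonal part ($m=n,\,p=q$) is $D_{\eta}^{\top}\otimes_b D_{\eta}^{\top}$ with $D_{\eta}=\diag\{(1-\mu\eta\lambda_m)(I_M-\mu H_{mm})\}$, while every off-super-block-diagonal entry carries at least one $O(\mu)$ factor.

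Second I would identify $X$ as the leading super-block-diagonal part of $I-\overline{\cF}_{\eta}$. Expanding $(I_M-\mu H_{mm})\otimes(I_M-\mu H_{pp})=I_{M^2}-\mu(H_{mm}\oplus H_{pp})+O(\mu^2)$ and $(1-\mu\eta\lambda_m)(1-\mu\eta\lambda_p)=1-\mu\eta(\lambda_m+\lambda_p)+\mu^2\eta^2\lambda_m\lambda_p$, the $(m,p)$ super-block of $I-D_{\eta}^{\top}\otimes_b D_{\eta}^{\top}$ becomes $\mu\big[(1-\mu\eta\lambda_m)(1-\mu\eta\lambda_p)(H_{mm}\oplus H_{pp})+\eta(\lambda_m+\lambda_p-\mu\eta\lambda_m\lambda_p)I_{M^2}\big]+O(\mu^2)$, which is precisely the $(m,p)$ super-block of $X$. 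Assumption~\ref{assumption: strong convexity} makes each super-block positive definite, hence invertible, and block-wise $X^{-1}=O(\mu^{-1})$: the block at $(m,p)=(1,1)$ is $\mu^{-1}O(1)$ because $\lambda_1=0$ removes the $\eta$-term there, whereas every block with $\lambda_m+\lambda_p>0$ is $\mu^{-1}(O(1)+O(\eta))^{-1}$. Partitioning the $N^2$ super-block indices into $\{(1,1)\}$ and its complement then yields the stated block form of $X^{-1}$.

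Third, writing $I-\overline{\cF}_{\eta}=X+\Delta$ with $\Delta$ the off-super-block-diagonal remainder ($\Delta=O(\mu)$), I would set $W\triangleq(I-\overline{\cF}_{\eta})^{-1}-X^{-1}$ and use the resolvent identity $W=-X^{-1}\Delta(I-\overline{\cF}_{\eta})^{-1}$, equivalently iterate $(I-\overline{\cF}_{\eta})^{-1}=X^{-1}-X^{-1}\Delta X^{-1}+\cdots$. Tracking orders block by block—using $X^{-1}=O(\mu^{-1})$ with its special $(1,1)$ entry, $\Delta=O(\mu)$, and $(I-\overline{\cF}_{\eta})^{-1}=O(\mu^{-1})$ (which follows since $\mu^{-1}(I-\overline{\cF}_{\eta})$ has $O(1)+O(\eta)$ entries and is invertible for stable $\overline{\cF}_{\eta}$)—gives the four-block pattern of $W$: the $(1,2)$, $(2,1)$ blocks are $\mu^{-1}(O(1)+O(\eta))^{-1}$ and the $(2,2)$ block is $\mu^{-1}(O(1)+O(\eta))^{-2}$, while for the $(1,1)$ block the first-order term vanishes (the diagonal of $\Delta$ is zero) and the order $\mu^{-1}(O(1)+O(\eta))^{-1}$ comes from the second-order term routed through a block with $\lambda_m+\lambda_p>0$. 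Undoing the orthonormal transformation, $(I-\cF_{\eta})^{-1}=(\cV\otimes_b\cV)(X^{-1}+W)(\cV\otimes_b\cV)^{\top}$; since $(\cV\otimes_b\cV)^{\top}(\cV\otimes_b\cV)=(\cV^{\top}\cV)\otimes_b(\cV^{\top}\cV)=I$ is orthonormal, it preserves the $O(\mu^{-1})$ magnitude, and collecting the $W$-contribution as the $\mu^{-1}(O(1)+O(\eta))^{-1}$ term yields~\eqref{eq: approximation of inverse I-F eta}, with the global bound $(I-\cF_{\eta})^{-1}=O(\mu^{-1})$ following since all blocks are at most that order.

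The main obstacle is the order bookkeeping in this last step: because both the diagonal part $X$ and the off-diagonal part $\Delta$ of $I-\overline{\cF}_{\eta}$ are $O(\mu)$, the correction $X^{-1}\Delta$ is only $O(1)$ and the Neumann expansion does not decay in $\mu$; the refinement therefore hinges on the $\eta$-weighting (block diagonal dominance as $\eta$ grows) and, crucially, on the distinguished role of the zero eigenvalue $\lambda_1=0$, which alone prevents the $(1,1)$ super-block from inheriting the $(O(1)+O(\eta))^{-1}$ decay and forces one to pass to second order there. Carefully resolving these coupled $\mu$--$\eta$ orders block by block is where the bulk of the work lies.
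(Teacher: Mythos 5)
Your proposal matches the paper's proof essentially step for step: the same orthonormal transformation to $\overline{\cF}_{\eta}=\overline{\cB}_{\eta}^{\top}\otimes_b\overline{\cB}_{\eta}^{\top}$, the same identification of the super-block-diagonal part $X$ of $Z=I-\overline{\cF}_{\eta}$ (including the special role of $\lambda_1=0$ and positive definiteness via the Kronecker-sum eigenvalue property), and the same splitting $Z=X+Y$ inverted by a perturbation identity --- the paper uses $(X+Y)^{-1}=X^{-1}-X^{-1}Y(I+X^{-1}Y)^{-1}X^{-1}$ together with the block-inversion formula for $I+X^{-1}Y$, which is equivalent to your resolvent/once-iterated expansion. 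Your observation that the $(1,1)$ coarse block of $W$ must be routed through a block with $\lambda_m+\lambda_p>0$ (since the super-diagonal of $Y$ vanishes) is precisely how the paper's zero $(1,1)$ block of $X^{-1}Y$ yields $W_{11}=\mu^{-1}(O(1)+O(\eta))^{-1}$, so the $\mu$--$\eta$ bookkeeping reproduces the paper's block patterns exactly.
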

\begin{proof}
See Appendix~\ref{app: coefficient matrix F eta}.
\end{proof}
\noindent As we shall see in Theorem~\ref{lemma: steady-state MSD performance}, it turns out that the decomposition in~\eqref{eq: relation inverse I-F} is very useful to highlight some important facts arising in the steady-state performance of the multitask algorithm.

\begin{lemma}{\emph{(Steady-state network performance)}}
\label{lemma: steady-state performance}Under Assumptions~\ref{assumption: strong convexity},~\ref{assumption: gradient noise},~\ref{assumption: combination matrix},~\ref{assumption: local smoothness hessian}, and~\ref{assumption: Smoothness condition on noise covariance}, and condition~\eqref{eq: condition on fourth-order moment of gradient noise}, it holds that:
\begin{align}
\limsup_{i\rightarrow\infty}\frac{1}{N}\expec\|\cw^o_{\eta}-\bcw_i\|^2&=\frac{1}{N}\sum_{n=0}^{\infty}\emph{\tr}\left(\cB_{\eta}^n\cY(\cB_{\eta}^\top)^n\right)+O(\mu^{1+{{\theta}}_m})\nonumber\\
&=\frac{1}{N}(\emph{\bvc}(\cY^\top))^{\top}(I-\cF_\eta)^{-1}\emph{\bvc}\left(I_{MN}\right)+O(\mu^{1+{{\theta}}_m}),\label{eq: steady-state network performance}
\end{align}
where ${{\theta}}_{m}=\frac{1}{2}\min\{1,{{\theta}}\}$, $\cB_{\eta}$ and $\cF_\eta$ are defined in~\eqref{eq: matrix B_eta} and~\eqref{eq: matrix F_eta}, and
\begin{align}
\cY&\triangleq\mu^2(I_{MN}-\mu\eta\cL)\cS_\eta(I_{MN}-\mu\eta\cL),\label{eq: definition cY}\\
\cS_\eta&\triangleq\emph{\diag}\left\{R_{s,k,\eta}\right\}_{k=1}^N,\label{eq: definition cS}
\end{align}
\end{lemma}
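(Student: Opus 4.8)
The plan is to evaluate the steady-state second moment through the long-term model~\eqref{eq: long term error model 1} and then transfer the result back to the original error recursion. By Lemma~\ref{lemma: dimension of the approximation error} we have $\limsup_{i\to\infty}\expec\|\bcwt_i\|^2=\limsup_{i\to\infty}\expec\|\bcwt'_i\|^2+O(\mu^{3/2})$, so it suffices to compute the steady-state second moment of $\bcwt'_i$; since $\frac{1}{N}\expec\|\cw^o_\eta-\bcw_i\|^2=\frac{1}{N}\expec\|\bcwt_i\|^2$ and $1+\theta_m\le\frac{3}{2}$, this $O(\mu^{3/2})$ term is subsumed into the claimed $O(\mu^{1+\theta_m})$ error. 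First I would iterate~\eqref{eq: long term error model 1}, writing $\bcwt'_i=\cB_\eta^i\bcwt'_0-\mu\sum_{j=1}^i\cB_\eta^{i-j}(I_{MN}-\mu\eta\cL)\bs_j(\bcw_{j-1})+\sum_{j=0}^{i-1}\cB_\eta^j\,\mu^2\eta^2\cL^2\cw^o_\eta$. Because $\cB_\eta$ is stable under~\eqref{eq: condition for stability} and~\eqref{eq: condition 1}, the transient $\cB_\eta^i\bcwt'_0\to0$ and the last sum converges to the deterministic bias $\cwt'_\infty=(I_{MN}-\cB_\eta)^{-1}\mu^2\eta^2\cL^2\cw^o_\eta$ of~\eqref{eq: steady-state bias of long model}.

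Next I would form $\expec\|\bcwt'_i\|^2$ and let $i\to\infty$. The deterministic bias $\cwt'_\infty$ is uncorrelated with the stochastic sum: since $\expec[\bs_j(\bcw_{j-1})\mid\bcF_{j-1}]=0$ by~\eqref{eq: mean gradient noise condition} and the weighting matrices $\cB_\eta^{i-j}(I_{MN}-\mu\eta\cL)$ are deterministic, every bias--noise cross term vanishes; the same conditioning argument kills all temporal cross terms $j\neq j'$ in the noise sum, while Assumption~\ref{assumption: gradient noise} renders the per-instant moment block diagonal. This leaves
\begin{equation*}
\limsup_{i\to\infty}\expec\|\bcwt'_i\|^2=\|\cwt'_\infty\|^2+\mu^2\sum_{n=0}^\infty\tr\!\left(\cB_\eta^n(I_{MN}-\mu\eta\cL)\,\Sigma_\infty\,(I_{MN}-\mu\eta\cL)(\cB_\eta^\top)^n\right),
\end{equation*}
where $\Sigma_\infty$ is the limiting noise second moment. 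From Lemma~\ref{lemma: mean stability long term}, $\|\cwt'_\infty\|^2=O(\mu^2)$, which is absorbed into the error term.

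The delicate step, and the one I expect to be the main obstacle, is replacing $\Sigma_\infty$ by $\cS_\eta=\diag\{R_{s,k,\eta}\}$. Since the driving process in~\eqref{eq: long term error model 1} is the original gradient noise evaluated at $\bcw_{j-1}$, I would invoke Assumption~\ref{assumption: Smoothness condition on noise covariance} through relation~\eqref{eq: relation on the covariance}, giving $\Sigma_\infty=\cS_\eta+O(\mu^{\min\{1,\theta/2\}})$. The point requiring care is that this perturbation is multiplied by the Lyapunov sum $\mu^2\sum_n\cB_\eta^n(\cdot)(\cB_\eta^\top)^n$: because $\cB_\eta=I_{MN}-O(\mu)$, this sum behaves like $O(\mu^{-1})$ (consistent with Lemma~\ref{lemm: coefficient matrix}), so the resulting error is $\mu^2\cdot O(\mu^{-1})\cdot O(\mu^{\min\{1,\theta/2\}})=O(\mu^{1+\theta_m})$ with $\theta_m=\frac12\min\{1,\theta\}$. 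Substituting $\cS_\eta$ and recognizing $\cY=\mu^2(I_{MN}-\mu\eta\cL)\cS_\eta(I_{MN}-\mu\eta\cL)$ then yields the first equality in~\eqref{eq: steady-state network performance}.

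Finally, to obtain the closed form I would convert the infinite Lyapunov sum using the block-vectorization identities $\tr(AB)=(\bvc(A^\top))^\top\bvc(B)$ and $\bvc(XCZ)=(Z^\top\otimes_b X)\bvc(C)$. Applying these to each term, together with the symmetry $(\cB_\eta^n\cY(\cB_\eta^\top)^n)^\top=\cB_\eta^n\cY^\top(\cB_\eta^\top)^n$, gives $\tr(\cB_\eta^n\cY(\cB_\eta^\top)^n)=(\bvc(\cY^\top))^\top\cF_\eta^n\,\bvc(I_{MN})$, where I use $\cF_\eta^n=(\cB_\eta^\top)^n\otimes_b(\cB_\eta^\top)^n$ and the multiplicativity and transpose rules of the block Kronecker product. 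Summing over $n$ and invoking the stability $\rho(\cF_\eta)=\rho(\cB_\eta)^2<1$ to evaluate the Neumann series $\sum_{n=0}^\infty\cF_\eta^n=(I-\cF_\eta)^{-1}$ produces the second equality, and dividing throughout by $N$ completes the argument.
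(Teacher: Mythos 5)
Your proposal is correct and arrives at the paper's result, but by a mildly different derivation path. The paper first centers the long-term error, $\bz_i=\bcwt'_i-\expec\,\bcwt'_i$, which removes the deterministic driving term $\mu^2\eta^2\cL^2\cw^o_\eta$ at the outset, and then runs an energy-conservation argument on weighted norms $\expec\|\bz_i\|^2_\Sigma$ for an arbitrary positive semi-definite $\Sigma$, finally choosing $\Sigma$ as the solution of the discrete-time Lyapunov equation $\Sigma-\cB_\eta^\top\Sigma\cB_\eta=I_{MN}$ and block-vectorizing to obtain $(\bvc(\cY^\top))^\top(I-\cF_\eta)^{-1}\bvc(I_{MN})$. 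You instead unroll recursion~\eqref{eq: long term error model 1} explicitly, kill the bias--noise and temporal cross terms by the martingale-difference property, and sum the resulting Lyapunov series term by term before vectorizing — the same series the paper recovers at the end via the Neumann expansion of $(I-\cF_\eta)^{-1}$. Both routes use the identical ingredients: the transfer through Lemma~\ref{lemma: dimension of the approximation error}, the absorption of $\|\cwt'_\infty\|^2=O(\mu^2)$ via Lemma~\ref{lemma: mean stability long term}, relation~\eqref{eq: relation on the covariance} for the noise covariance, and the $O(\mu^{-1})$ size of the Lyapunov sum, and your error bookkeeping is sound (the noise-covariance error $O(\mu^{1+\min\{1,\theta/2\}})$ and the $O(\mu^{3/2})$ model-mismatch term are both dominated by $O(\mu^{1+\theta_m})$ with $\theta_m=\frac{1}{2}\min\{1,\theta\}$). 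The one place your route demands more care than you give it is the passage from the finite-$i$ convolution sum $\mu^2\sum_{j=1}^{i}\cB_\eta^{i-j}(I_{MN}-\mu\eta\cL)\,\expec\bigl[\bs_j\bs_j^\top\bigr](I_{MN}-\mu\eta\cL)(\cB_\eta^\top)^{i-j}$ to the steady-state expression with a single matrix $\Sigma_\infty$: the per-instant second moments are time-varying, and~\eqref{eq: relation on the covariance} only holds for $j\gg1$, so one must split off the early iterates and use the geometric decay of $\cB_\eta^{i-j}$ to justify the interchange of $i\to\infty$ with the infinite sum. The paper's weighted-norm recursion with the $\limsup$ sandwich (the constant $b_o=\tr(\Sigma)\cdot O(\mu^{2+\min\{1,\theta/2\}})$) sidesteps this interchange entirely, at the cost of being less transparent about where each contribution to the variance comes from; your expansion is more explicit but should include this splitting argument to be fully rigorous.
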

\begin{proof}
{The proof is a direct extension of the arguments used to establish Theorem 11.2 in~\cite{sayed2014adaptation} for single-task diffusion adaptation. See Appendix~\ref{app: steady-state performance}.}
\end{proof}

\begin{theorem}{\emph{(Network MSD performance)}}
\label{lemma: steady-state MSD performance}Under the same conditions of Lemma~\ref{lemma: steady-state performance}, it holds from Lemma~\ref{lemm: coefficient matrix} that
the steady-state network \emph{MSD} defined in~\eqref{eq: network MSD performance} can be written as:
\begin{equation}
\label{eq: final expression for the network MSD alternative}
\emph{MSD}=\frac{\mu}{2N}\sum_{m=1}^N\emph{\tr}\left(\left(\sum_{k=1}^N[v_m]^2_kH_{k,\eta}+\eta\lambda_mI\right)^{-1}\left(\sum_{k=1}^N[v_m]^2_kR_{s,k,\eta}\right)\right)+\frac{O(\mu)}{(O(1)+O(\eta))}.
\end{equation}
\end{theorem}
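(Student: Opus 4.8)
The plan is to substitute the decomposition of $(I-\cF_\eta)^{-1}$ from Lemma~\ref{lemm: coefficient matrix} into the exact steady-state expression of Lemma~\ref{lemma: steady-state performance} and retain only the first-order-in-$\mu$ contribution, following the template used for Theorem~11.2 in~\cite{sayed2014adaptation}. Writing
\[(I-\cF_{\eta})^{-1}=(\cV\otimes_b\cV)X^{-1}(\cV\otimes_b\cV)^{\top}+\mu^{-1}(O(1)+O(\eta))^{-1},\]
the second term is pure remainder: since $\cY=\mu^2(I_{MN}-\mu\eta\cL)\cS_\eta(I_{MN}-\mu\eta\cL)=O(\mu^2)$ we have $\bvc(\cY^\top)=O(\mu^2)$, and its pairing with $\mu^{-1}(O(1)+O(\eta))^{-1}\bvc(I_{MN})$ produces precisely the term $\frac{O(\mu)}{O(1)+O(\eta)}$ appearing in~\eqref{eq: final expression for the network MSD alternative}. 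It therefore remains to evaluate $\frac{1}{N}(\bvc(\cY^\top))^{\top}(\cV\otimes_b\cV)X^{-1}(\cV\otimes_b\cV)^{\top}\bvc(I_{MN})$ to leading order.

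First I would push both endpoints into the eigenbasis via the block-Kronecker/block-vectorization identities (which preserve block locality, see~\cite{koning1991block,sayed2014adaptation}) and the orthonormality of $\cV=V\otimes I_M$, obtaining $(\cV\otimes_b\cV)^{\top}\bvc(I_{MN})=\bvc(\cV^\top I_{MN}\cV)=\bvc(I_{MN})$ and $(\cV\otimes_b\cV)^{\top}\bvc(\cY^\top)=\bvc(\cV^\top\cY\cV)$, using that $\cY$ is symmetric. Two structural facts then collapse the quadratic form: $\bvc(I_{MN})$ is supported only on the diagonal index pairs $(m,m)$ (where it equals $\vc(I_M)$), and $X$---hence $X^{-1}$---is block-diagonal over the pairs $(m,p)$. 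The expression reduces to
\[\frac{1}{N}\sum_{m=1}^N\big[\vc(Y_{mm})\big]^\top X_{mm}^{-1}\,\vc(I_M),\]
where $X_{mm}$ is the $(m,m)$ diagonal block of $X$ and $Y_{mm}$ the $(m,m)$ block of $\cV^\top\cY\cV$. Because $\cV^\top(I_{MN}-\mu\eta\cL)\cV=\diag\{(1-\mu\eta\lambda_m)I_M\}_m$, this block is $Y_{mm}=\mu^2(1-\mu\eta\lambda_m)^2\sum_{k=1}^N[v_m]_k^2 R_{s,k,\eta}$, while $X_{mm}$ is built from $H_{mm}=\sum_{k=1}^N[v_m]_k^2 H_{k,\eta}$ and $\eta\lambda_m$.

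Next I would interpret $X_{mm}^{-1}\vc(I_M)=\vc(\mathcal{Z}_m)$ as the solution of a Sylvester (Lyapunov) equation. Since $H_{mm}$ is symmetric, $(H_{mm}\oplus H_{mm})\vc(\mathcal{Z}_m)=\vc(H_{mm}\mathcal{Z}_m+\mathcal{Z}_m H_{mm})$, so the defining relation reads
\[\mu(1-\mu\eta\lambda_m)^2(H_{mm}\mathcal{Z}_m+\mathcal{Z}_m H_{mm})+\mu\eta\lambda_m(2-\mu\eta\lambda_m)\mathcal{Z}_m=I_M.\]
Dropping the $O(\mu\eta\lambda_m)$ relative corrections carried by $(1-\mu\eta\lambda_m)$ and $(2-\mu\eta\lambda_m)$ leaves $\mu[(H_{mm}+\eta\lambda_m I)\mathcal{Z}_m+\mathcal{Z}_m(H_{mm}+\eta\lambda_m I)]=I_M$, whose unique solution (for symmetric positive-definite $H_{mm}+\eta\lambda_m I$) is $\mathcal{Z}_m=\tfrac{1}{2\mu}(H_{mm}+\eta\lambda_m I)^{-1}$. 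Substituting into $[\vc(Y_{mm})]^\top\vc(\mathcal{Z}_m)=\tr(Y_{mm}\mathcal{Z}_m)$ gives the $m$-th contribution $\tfrac{\mu}{2}\tr\big((H_{mm}+\eta\lambda_m I)^{-1}R_m\big)$ with $R_m=\sum_k[v_m]_k^2 R_{s,k,\eta}$, up to higher-order terms. Summing over $m$, dividing by $N$, and applying the MSD definition~\eqref{eq: network MSD performance} (multiply by $\mu$ and extract the $\mu\to0$ coefficient) reproduces~\eqref{eq: final expression for the network MSD alternative}.

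The main obstacle is the uniform order-bookkeeping in \emph{both} $\mu$ and $\eta$: I must confirm that every discarded piece---the $W$-contribution, the off-diagonal blocks routed through $W$, the factor $(1-\mu\eta\lambda_m)^2$ in $Y_{mm}$, and the $O(\mu\eta\lambda_m)$ defect in the Sylvester equation---stays collectively within $\frac{O(\mu)}{O(1)+O(\eta)}$. This is delicate because the stability constraint~\eqref{eq: condition for stability} only guarantees $\mu\eta\lambda_m\leq 2$, so $\mu\eta\lambda_m$ may be $O(1)$ rather than vanishing; one must show that the surviving term depends on $\eta\lambda_m$ (not $\mu\eta\lambda_m$) while all $(1-\mu\eta\lambda_m)$ corrections are of strictly higher order relative to $\frac{\mu}{O(1)+O(\eta)}$. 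A secondary point is justifying the interchange of limits in~\eqref{eq: network MSD performance}, which rests on the mean-square stability from Lemma~\ref{lemma: mean-square stability long term} and the $O(\mu^{1+\theta_m})$ control already supplied by Lemma~\ref{lemma: steady-state performance}.
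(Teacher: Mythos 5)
Your proposal follows essentially the same route as the paper's Appendix on this theorem: substitute the decomposition of $(I-\cF_{\eta})^{-1}$ from Lemma~\ref{lemm: coefficient matrix} into Lemma~\ref{lemma: steady-state performance}, absorb the $W$-part into $\frac{O(\mu)}{(O(1)+O(\eta))}$ via $\cY=O(\mu^2)$, use the fact that $[(v^{\top}_m\otimes I_M)\otimes_b(v_p^{\top}\otimes I_M)]\bvc(I_{MN})$ vanishes unless $m=p$, and convert $([Z_{mm}]_{mm})^{-1}\vc(I_M)$ into a Lyapunov equation whose solution yields $\frac{\mu}{2}\tr\left(\left(H_{mm}+\eta\lambda_m I\right)^{-1}R_m\right)$ per mode. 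The one point where you diverge — and where you correctly flag an obstacle — is dropping the $(1-\mu\eta\lambda_m)$ and $(2-\mu\eta\lambda_m)$ factors before solving; the paper instead solves the Lyapunov equation \emph{exactly} with $T_m=\mu(1-\mu\eta\lambda_m)^2H_{mm}+\frac{\mu\eta}{2}\lambda_m(2-\mu\eta\lambda_m)I$, obtaining the $m$-th term $\frac{\mu}{2}\tr\left(\left(H_{mm}+\frac{\eta\lambda_m(2-\mu\eta\lambda_m)}{2(1-\mu\eta\lambda_m)^2}I\right)^{-1}R_m\right)$, in which the $(1-\mu\eta\lambda_m)^2$ prefactor inherited from $\cY$ cancels exactly against the one in $T_m$. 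Your worry about $\mu\eta\lambda_m$ being possibly $O(1)$ then dissolves without any uniform-in-$\mu\eta$ error estimate: the MSD definition~\eqref{eq: network MSD performance} divides by $\mu$ and takes $\mu\rightarrow 0$ at fixed $\eta$, under which $\frac{\eta\lambda_m(2-\mu\eta\lambda_m)}{2(1-\mu\eta\lambda_m)^2}\rightarrow\eta\lambda_m$, so the corrections are eliminated by the definitional limit rather than by mid-derivation truncation (and even in the coupled regime $\eta=\mu^{-\epsilon}$, $\epsilon<1$, one has $\mu\eta\rightarrow 0$, so the same conclusion holds). If you restructure your argument to carry the exact factors through and defer all simplification to the final limit, your proof coincides with the paper's and the flagged gap closes.
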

\begin{proof}
See Appendix~\ref{app: steady-state MSD performance}.
\end{proof}
\noindent As the derivation in Appendix~\ref{app: steady-state MSD performance} reveals, the second term on the {RHS} of~\eqref{eq: final expression for the network MSD alternative}  results from the matrix $W$ in~\eqref{eq: relation inverse I-F} which is zero when $\{H_{mn}=0,m\neq n\}$. When the Hessian matrices are uniform across the agents:
\begin{equation}
H_{k,\eta}\equiv H_\eta, \qquad k=1,\dots,N,
\end{equation}
we have $H_{mn}=0$ for $m\neq n$. 
 In this case, the network MSD in~\eqref{eq: final expression for the network MSD alternative} {simplifies} to:
\begin{equation}
\label{eq: final expression for the network MSD uniform covariance}
\begin{split}
\text{MSD}=\frac{\mu}{2N}\sum_{m=1}^N{\tr}\left(\left(H_{\eta}+\eta\lambda_mI\right)^{-1}\left(\sum_{k=1}^N[v_m]^2_kR_{s,k,\eta}\right)\right).
\end{split}
\end{equation}
Moreover, in the non-uniform Hessian matrices scenario, by letting $\eta=\mu^{-\epsilon}$ with $\epsilon>0$ chosen such that Assumption~\ref{assumption: combination matrix} is satisfied, we obtain:
\begin{equation}
\frac{O(\mu)}{(O(1)+O(\eta))}=O(\mu^{1+\epsilon}).
\end{equation}
In this case, the first term on the {RHS} of~\eqref{eq: final expression for the network MSD alternative} dominates the factor $O(\mu^{1+\epsilon})$ and when we evaluate the network MSD according to definition~\eqref{eq: network MSD performance}, the last term on the {RHS} of~\eqref{eq: final expression for the network MSD alternative} disappears when computing the limit as $\mu\rightarrow 0$. As we will see by simulations, the first  term on the {RHS} of~\eqref{eq: final expression for the network MSD alternative} provides a good approximation for the network MSD for any $ \eta\geq 0$.

{The first  term on the {RHS} of~\eqref{eq: final expression for the network MSD alternative} reveals explicitly the influence of the step-size $\mu$, regularization strength $\eta$, network topology (through the eigenvalues $\lambda_m$ and eigenvectors $v_m$ of the Laplacian),  gradient noise (through the covariance matrices $R_{s,k,\eta}$), and data characteristics (through the Hessian matrices $H_{k,\eta}$) on the network MSD performance. Observe that this term consists of the sum of $N$ individual terms, each associated with an eigenvalue $\lambda_m$ of the Laplacian matrix, and given by:
\begin{equation}
\label{eq: individual terms of the MSD}\frac{\mu}{2N}{\tr}\left(\left(\overline{H}_{m,\eta}+\eta\lambda_mI\right)^{-1}\overline{R}_{m,\eta}\right),
\end{equation}
where $\overline{H}_{m,\eta}$ and $\overline{R}_{m,\eta}$ are transformed versions of $\cH_{\eta}$ in~\eqref{eq: H of eta hessian} and $\cS_\eta$ in~\eqref{eq: definition cS} at the $m$-th eigenvalue $\lambda_m$, respectively:
\begin{align}
\overline{H}_{m,\eta}&\triangleq\sum_{k=1}^N[v_m]^2_kH_{k,\eta}=(v_m^\top\otimes I_M)\cH_{\eta}(v_m\otimes I_M),\\
\overline{R}_{m,\eta}&\triangleq\sum_{k=1}^N[v_m]^2_kR_{s,k,\eta}=(v_m^\top\otimes I_M)\cS_{\eta}(v_m\otimes I_M).
\end{align}
As shown in Section~\ref{subsec: uniform data profile}, under some assumptions on the data and noise characteristics, the individual terms in~\eqref{eq: individual terms of the MSD} are decaying functions of $\eta$ or $\lambda_m$.
}

{Before proceeding, we note that expression~\eqref{eq: final expression for the network MSD alternative} can be written alternatively as:
\begin{equation}
\label{eq: final expression for the network MSD}
\begin{split}
\text{MSD}=&\frac{\mu}{2N}{\tr}\left(\left(\sum_{k=1}^NH_{k,\eta}\right)^{-1}\left(\sum_{k=1}^NR_{s,k,\eta}\right)\right)+\\
&\frac{\mu}{2N}\sum_{m=2}^N{\tr}\left(\left(\sum_{k=1}^N[v_m]^2_kH_{k,\eta}+\eta\lambda_mI\right)^{-1}\left(\sum_{k=1}^N[v_m]^2_kR_{s,k,\eta}\right)\right)+\frac{O(\mu)}{(O(1)+O(\eta))},
\end{split}
\end{equation}
where we used the fact that $\lambda_1=0$ and $v_1=\frac{1}{\sqrt{N}}\mathds{1}_N$. Expression~\eqref{eq: final expression for the network MSD}} allows us to recover the network MSD performance of the single-task diffusion adaptation employed to {estimate $w^\star$ given by:
\begin{equation}
\label{eq: w star}
w^\star\triangleq\arg\min_{w}\sum_{k=1}^NJ_k(w).
\end{equation}}
To see this, we recall {the expression for the} network MSD performance of single-task diffusion adaptation derived in~\cite[pp.~594]{sayed2014adaptation}:
\begin{equation}
\label{eq: final expression for the network MSD diffusion}
\text{MSD}=\frac{\mu}{2N}\tr\left(\left(\sum_{k=1}^NH_{k,\star}\right)^{-1}\left(\sum_{k=1}^NR_{s,k,\star}\right)\right),
\end{equation}
where $H_{k,\star}\triangleq\nabla_{w_k}^2J_k(w^\star)$ and $R_{s,k,\star}$ is the covariance of the gradient noise in~\eqref{eq: R sk definition} at $w^\star$. In order to estimate $w^\star$  using the multitask strategy~\eqref{eq: distributed algorithm}, a very large $\eta$ needs to be chosen. In this case, we have $H_{k,\eta}=H_{k,\star}$ and $R_{s,k,\eta}=R_{s,k,\star}$. Moreover, the second and third terms on the {RHS} of expression~\eqref{eq: final expression for the network MSD} will be $O(\mu/\eta)$ which are negligible for a very large $\eta$. Thus, we obtain~\eqref{eq: final expression for the network MSD diffusion}.

\section{Multitask learning benefit}
\label{sec: multitask learning benefit}
Now that we have studied in {detail} the mean-square performance of the multitask strategy~\eqref{eq: distributed algorithm} relative to $\cw^o_{\eta}$, the minimizer of the regularized cost~\eqref{eq: global problem}, we will use the results to examine the benefit of multitask learning {compared to the non-cooperative solution} under the smoothness assumption.

Since each cost is strongly convex, each agent $k$ is able to estimate $w^o_k$ on its {own, if desired,} in a non-cooperative manner by running strategy~\eqref{eq: distributed algorithm} with $\eta=0$. {We know} from previous established results on single-agent adaptation~\cite[pp.~390]{sayed2014adaptation} that the network MSD {in that case will be given by:}
\begin{equation}
\label{eq: noncooperative MSD}
\text{MSD}_{\text{ncop}}=\frac{\mu}{2N}\tr\left(\sum_{k=1}^NH_{k,o}^{-1}R_{s,k,o}\right),
\end{equation}
where $H_{k,o}\triangleq \nabla_{w_k}^2J_k(w^o_k)$ and $R_{s,k,o}$ is the covariance of the gradient noise in~\eqref{eq: R sk definition} at $w^o_k$. {Note that expression~\eqref{eq: final expression for the network MSD alternative} allows us to recover the mean-square-error performance of stand-alone adaptive agents. In particular, it can be easily verified that~\eqref{eq: final expression for the network MSD alternative} reduces to expression $\frac{\mu}{2}\tr(H_{k,o}^{-1}R_{s,k,o})$ for the mean-square deviation of single agent learner~\cite[pp.~390]{sayed2014adaptation} when the network size is set to $N=1$ and the topology is removed. }

When $\eta>0$ is used, the graph Laplacian regularizer~\eqref{eq: global problem} induces a bias relative to $\cw^o$. However, when $\cw^o$ is smooth, we expect that promoting the smoothness of $\cw^o$ through regularization can improve the network MSD performance despite the bias induced in the estimation. 

\subsection{Induced bias relative to $\cw^o$}
The bias of the strategy~\eqref{eq: distributed algorithm} is given by:
\begin{equation}
\label{eq: steady state bias with respect to wo}
\expec\left(\cw^o-\bcw_i\right)=(\cw^o-\cw^o_{\eta})+\expec\left(\cw^o_{\eta}-\bcw_i\right).
\end{equation}
From the triangle inequality we have:
\begin{equation}
\limsup_{i\rightarrow\infty}\|\expec\left(\cw^o-\bcw_i\right)\|\leq\|\cw^o-\cw^o_{\eta}\|+\limsup_{i\rightarrow\infty}\|\expec\left(\cw^o_{\eta}-\bcw_i\right)\|,
\end{equation}
where $\limsup_{i\rightarrow\infty}\|\expec\left(\cw^o_{\eta}-\bcw_i\right)\|$ can be replaced by $\|\cwt'_{\infty}\|$ in~\eqref{eq: steady-state bias of long model}. {From expression~(31) in Part I~\cite{nassif2018diffusion} we know that the smoother $\cw^o$ is, the smaller $\|\cw^o-\cw^o_{\eta}\|$ will be. }
Furthermore, we know from {Theorem~4 in Part I~\cite{nassif2018diffusion}} that $\limsup_{i\rightarrow\infty}\|\expec\left(\cw^o_{\eta}-\bcw_i\right)\|$ is $O(\mu)$. Thus, for a smooth signal $\cw^o$ and a small $\mu$, the bias in~\eqref{eq: steady state bias with respect to wo} will be small. 

\subsection{Network MSD relative to $\cw^o$}
The mean-square-error performance of the strategy~\eqref{eq: distributed algorithm} relative to  $\cw^o$ is given by:
\begin{equation}
\label{eq: steady state MSE with respect to wo}
\expec\left\|\cw^o-\bcw_i\right\|^2=\|\cw^o-\cw^o_{\eta}\|^2+\expec\left\|\cw^o_{\eta}-\bcw_i\right\|^2+2(\cw^o-\cw^o_{\eta})^\top\expec\left(\cw^o_{\eta}-\bcw_i\right),
\end{equation}
and the network MSD can be expressed as:
\begin{equation}
\label{eq: steady state MSD with respect to wo}
\overline{\text{MSD}}=\frac{1}{N}\limsup_{i\rightarrow\infty}\expec\left\|\cw^o-\bcw_i\right\|^2=\text{MSD}+\frac{1}{N}\|\cw^o-\cw^o_{\eta}\|^2+\frac{2}{N}(\cw^o-\cw^o_{\eta})^\top\limsup_{i\rightarrow\infty}\expec\left(\cw^o_{\eta}-\bcw_i\right),
\end{equation}
where we used the bar notation to denote the network MSD relative to $\cw^o$ and where $\text{MSD}$ is given in~\eqref{eq: steady-state network performance} or~\eqref{eq: final expression for the network MSD alternative}. Note that  $\limsup_{i\rightarrow\infty}\expec\left(\cw^o_{\eta}-\bcw_i\right)$ can be replaced by $\cwt'_{\infty}$ in~\eqref{eq: steady-state bias of long model}. In order to improve the network MSD compared to the non-cooperative case~\eqref{eq: noncooperative MSD}, the regularization strength $\eta$ must be chosen such that:
\begin{equation}
\overline{\text{MSD}}\leq\text{MSD}_{\text{ncop}},
\end{equation}
and the optimal choice of $\eta$ is the one minimizing $\overline{\text{MSD}}$ in~\eqref{eq: steady state MSD with respect to wo} subject to $\eta\geq 0$.

\begin{figure}
\centering
\includegraphics[scale=0.535]{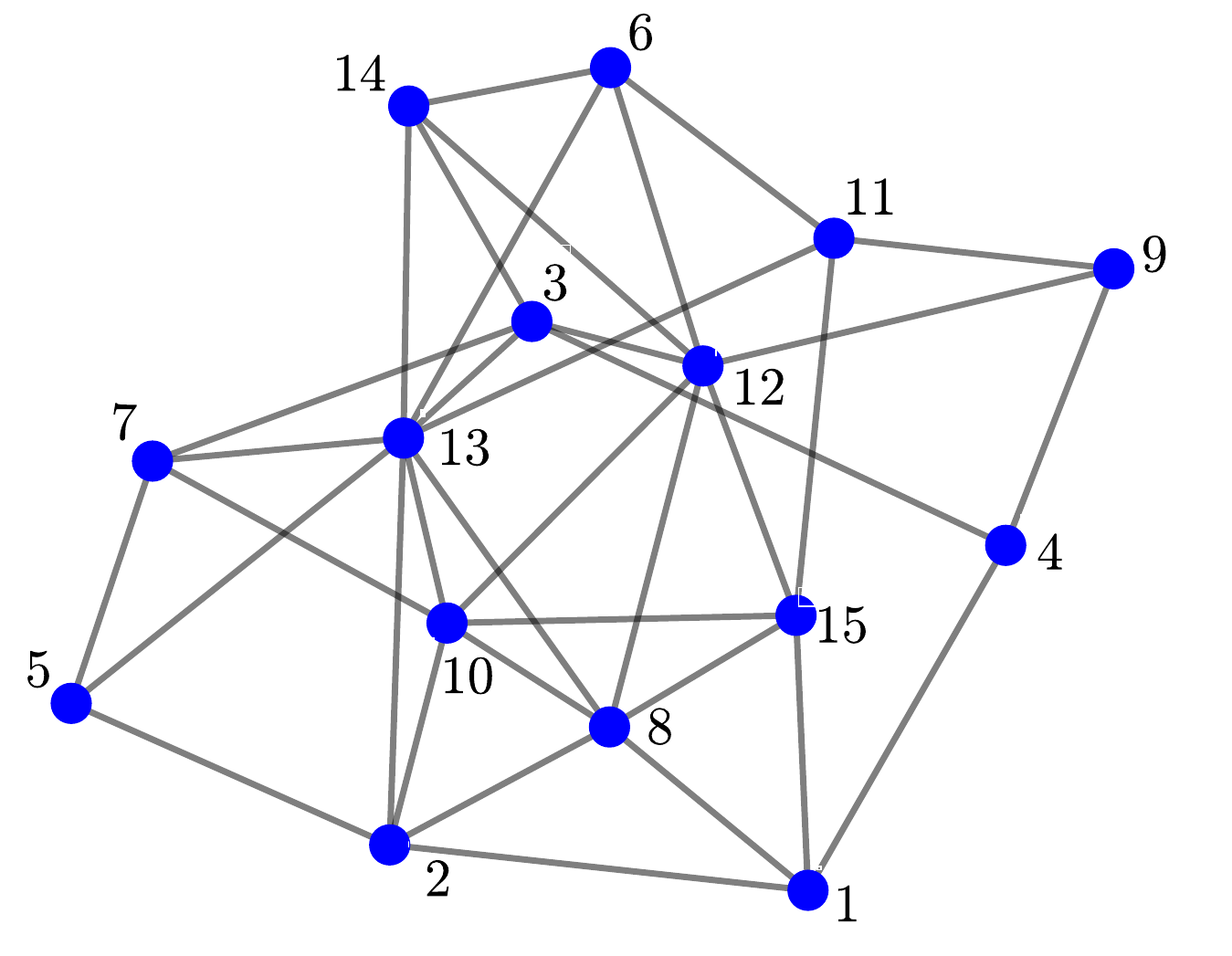}
\includegraphics[scale=0.435]{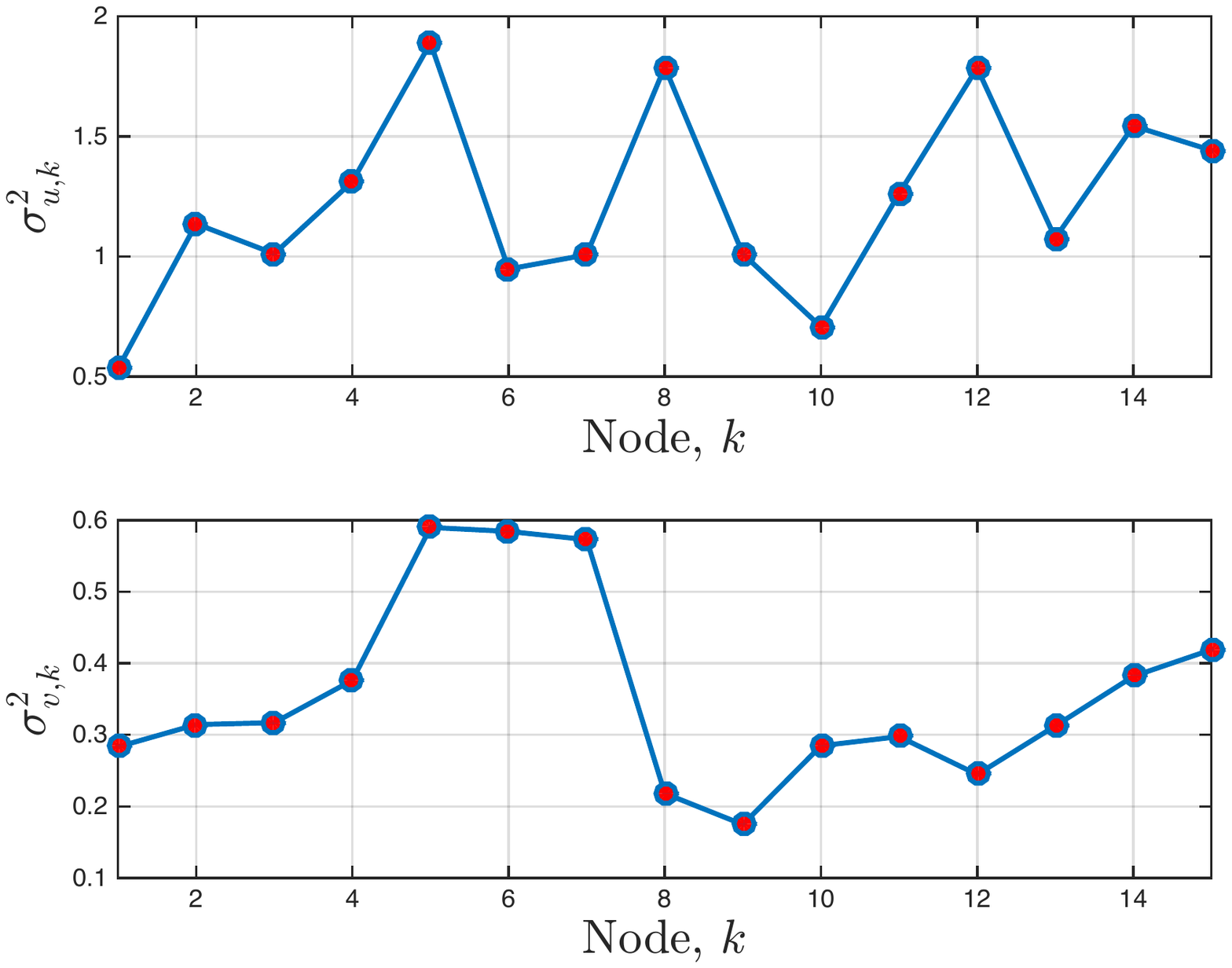}
\caption{Illustrative example. \textit{(Left)} Network topology. \textit{(Right)} Data and noise variances.}
\label{fig: network topology}
\end{figure}

\section{Simulation results}
\label{sex: experiments}
We consider a connected network of $N=15$ nodes and $M=5$ with the topology shown in Fig.~\ref{fig: network topology} (left). Each agent is subjected to streaming data $\{\boldsymbol{d}_k(i),\bu_{k,i}\}$ that are assumed to satisfy a linear regression model~\cite{sayed2014adaptation,chen2014multitask}:
\begin{equation}
\label{eq: linear data model}
\boldsymbol{d}_k(i)=\bu_{k,i} w^o_k+\bv_k(i),\qquad k=1,\ldots,N,
\end{equation}
for some unknown $M\times 1$ vector $w^o_k$ with $\bv_k(i)$ a measurement noise. A mean-square-error cost is associated with agent $k$:
\begin{equation}
\label{eq: MSE costs}
J_k(w_k)=\frac{1}{2}\,\expec|\boldsymbol{d}_k(i)-\bu_{k,i}w_k|^2, \qquad k=1,\ldots,N.
\end{equation}
The processes $\{\boldsymbol{d}_k(i),\bu_{k,i},\bv_k(i)\}$ are assumed to represent zero-mean jointly wide-sense stationary random processes satisfying: i) $\expec\bu^{\top}_{k,i}\bu_{\ell,j}=R_{u,k}\delta_{k,\ell}\delta_{i,j}$ where $R_{u,k}>0$ and the Kronecker delta  $\delta_{m,n}=1$ if $m=n$ and zero otherwise; ii) $\expec\bv_k(i)\bv_{\ell}(j)=\sigma^2_{v,k}\delta_{k,\ell}\delta_{i,j}$; iii) the regression and noise processes $\{\bu_{\ell,j},\bv_{k}(i)\}$ are independent of each other. We set $a_{k\ell}=0.1$ if $\ell\in\cN_k$ and $0$ otherwise. It turns out that the Laplacian matrix has 15 distinct eigenvalues. We generate  $\cw^o$ according to $\cw^o=\cV\cwb^o=\col\{\wb^o_m\}_{m=1}^N$ with:
\begin{equation}
\label{eq: spectral content of w o experiment}
\wb^o_m=\frac{1}{\sqrt{M}}\cdot\col\left\{e^{-\tau_j\lambda_m}\right\}_{j=1}^M.
\end{equation}
{Recall from Part I~\cite{nassif2018diffusion} that $S(\cw)$ in~\eqref{eq: quadratic regularization} can be written as:
\begin{equation}
\label{eq: S(w) smooth}
S(\cw)=\cw^\top\cL \cw=\sum_{m=2}^N\lambda_m\|{\wb_{m}}\|^2,
\end{equation}
where ${\wb_{m}}=(v_m^\top\otimes I_M)\cw$.} From~\eqref{eq: S(w) smooth}, we observe that the larger $\{\tau_j\geq 0\}$ are, the smoother the signal $\cw^o$ is. Note that, for MSE networks, it holds that $H_k(w_k)=\nabla_{w_k}^2J_k(w_k)=R_{u,k}$ $\forall w_k$. Thus, the fixed point bias $\cwt_{\infty}=\cw_\eta^o-\cw_{\infty}$ {given by~(49) in Part I~\cite{nassif2018diffusion}} is equal to the long-term model bias $\cwt_{\infty}'$ in~\eqref{eq: steady-state bias of long model}. Furthermore, from the definition~\eqref{eq: gradient noise process}, the gradient noise process at agent $k$ is given by:
\begin{equation}
\label{eq: MSE network gradient noise}
\bs_{k,i}(\bw_{k})=(\bu_{k,i}^{\top}\bu_{k,i}-R_{u,k})(w^o_k-\bw_{k})+\bu_{k,i}^\top\bv_{k}(i).
\end{equation}
Consider the covariance $R_{s,k,\eta}$ defined in~\eqref{eq: R sk definition}. From~\eqref{eq: MSE network gradient noise}, we have:
\begin{equation}
\label{eq: covariance noise MSE}
R_{s,k,\eta}=\expec[(\bu_{k,i}^{\top}\bu_{k,i}-R_{u,k})W_{k,\eta}(\bu_{k,i}^{\top}\bu_{k,i}-R_{u,k})]+\sigma^2_{v,k}R_{u,k},
\end{equation}
where
\begin{equation}
\label{eq: W k eta}
W_{k,\eta}=(w^o_k-w^o_{k,\eta})(w^o_k-w^o_{k,\eta})^\top.
\end{equation}
To evaluate~\eqref{eq: covariance noise MSE}, we need the fourth order moment of the regressors. Let us assume that the regressors are zero-mean real Gaussian. In this case, using the fact that $W_{k,\eta}$ is symmetric, we obtain~\cite{isserlis1918formula}:
\begin{equation}
\expec[\bu_{k,i}^{\top}\bu_{k,i}W_{k,\eta}\bu_{k,i}^{\top}\bu_{k,i}]=2R_{u,k}W_{k,\eta}R_{u,k}+R_{u,k}\tr(R_{u,k}W_{k,\eta}).
\end{equation}
Replacing the above expression in~\eqref{eq: covariance noise MSE}, we obtain:
\begin{equation}
\label{eq: covariance noise MSE 1}
R_{s,k,\eta}=R_{u,k}W_{k,\eta}R_{u,k}+R_{u,k}\tr(R_{u,k}W_{k,\eta})+\sigma^2_{v,k}R_{u,k}.
\end{equation}


\subsection{Uniform data profile}
\label{subsec: uniform data profile}
In this setting, we assume uniform covariance matrices scenario where $R_{u,k}=R_u$ $\forall k$. This scenario is encountered for example in distributed denoising problems in wireless sensor networks (or image denoising)~\cite{shuman2013emerging,shuman2018distributed}. In this case, $R_u$ is equal to one. In such problems, the $N$ sensors in the network are observing $N$-dimensional signal, with each entry of the signal corresponding to one sensor. Using the prior knowledge that the signal is smooth w.r.t. the underlying topology, the sensor task is to denoise the corresponding entry of the signal by performing local computations and cooperating with its neighbor in order to improve the error variance.

It turns out that in this scenario, the output of the network $\cw^o_{\eta}$ can be interpreted as the output of a low-pass graph filter applied to the graph signal $\cw^o$~\cite{nassif2018distributed,shuman2013emerging,sandryhaila2013discrete}. To see this, let us first recall the notion of graph frequencies, graph Fourier transform, and graph {filters~\cite{shuman2013emerging,sandryhaila2013discrete,ortega2018graph,shuman2018distributed}}. Consider the connected graph $\cG = \{\cN, \cE,A\}$ equipped with a Laplacian matrix $L$, which can be decomposed as $L =
V\Lambda V^{\top}$. A graph signal supported on the set $\cN$ is defined as a vector $x\in\mathbb{R}^N$ whose $k$-th component $x_k\in\mathbb{R}$ represents the value of the signal at the $k$-th node. By analogy to the classical Fourier analysis, the eigenvectors of the Laplacian matrix $L$ are used to define a graph Fourier basis $V$ and the eigenvalues are considered as the graph frequencies. The graph Fourier transform (GFT) transforms a graph signal $x$ into the graph frequency domain according to $\overline{x}=V^{\top}x$ where $\{\overline{x}_1,\ldots,\overline{x}_N\}$ are called the spectrum of $x$. A graph filter $\Phi$ is an operator that acts upon a graph signal $x$ by amplifying or attenuating its spectrum as: $\Phi=\sum_{m=1}^N\Phi(\lambda_m)\overline{x}_mv_m$. The frequency response of the filter $\Phi(\lambda)$ controls how much $\Phi$ amplifies the signal spectrum. Low frequencies correspond to small eigenvalues, and low-pass or smooth filters correspond to decaying functions $\Phi(\lambda)$. Since we are dealing with vectors $w_k\in\mathbb{R}^M$ instead of scalars $x_k$, the graph transformation $\overline{x}=V^{\top}x$ becomes $\cwb = (V^{\top}\otimes I_M)\cw$. In the uniform covariance matrices scenario, we have $\cH^o_{\eta}=I_N\otimes R_u$ and {relation~(24) in Part I~\cite{nassif2018diffusion}} reduces to:
\begin{equation}
\label{eq: output of the filter for uniform covariances}
{\wb^o_{m,\eta}}=\left(I_M-\eta\lambda_m\left(\eta\lambda_mI_M+R_u\right)^{-1}\right)\wb^o_{m}, \quad m=1,\ldots,N.
\end{equation}
since $(v_m^\top\otimes I_M)\cH^o_{\eta}(v_n\otimes I_M)=R_u$ if $m=n$ and zero otherwise. {By applying the matrix inversion identity~\cite{kailath1980linear},} it holds that the $m$-th subvector corresponding to the $m$-th eigenvalue (or graph frequency) of $\cwb^o_{\eta}=(V^{\top}\otimes I_M)\cw^o$ satisfies:
\begin{align}
{\wb^o_{m,\eta}}&=\left(\eta\lambda_mI_M+R_u\right)^{-1}R_u \wb^o_{m}, \label{eq: output of the filter for uniform covariances mse}\\
\|{\wb^o_{m,\eta}}\|&~\leq\frac{1}{1+\eta\frac{\lambda_m}{\lambda_{\max}\left(R_u\right)}}\|\wb^o_{m}\|,\label{eq: output of the filter for uniform covariances norm}
\end{align}
If $\eta=0$, we are in the case of an all-pass graph filter since the frequency content of the output signal $\cw^o_{\eta}$, is the same as the frequency content of the input signal $\cw^o$. For $\eta>0$, we are in the case of a low-pass graph filter since the norm of the $m$-th frequency content of the output signal $\cw^o_{\eta}$, namely, $\|{\wb^o_{m,\eta}}\|$, is less than or equal to the norm of the $m$-th frequency content of the input signal  $\cw^o$, namely, $\|\wb^o_{m}\|$. For fixed $\eta$, as $m$ increases, the ratio in~\eqref{eq: output of the filter for uniform covariances norm} decreases. This validates the low-pass filter interpretation. The regularization parameter $\eta$ controls the sharpness of the low-pass filter. We illustrate this in Fig.~\ref{fig: uniform data profile} (left). In the experiment, we set $R_u=I_M$ and $\tau_j=0$ in~\eqref{eq: spectral content of w o experiment} so that $\|\wb^o_m\|^2=1$ $\forall m$ and we illustrate in Fig.~\ref{fig: uniform data profile} (left) the squared $\ell_2$-norm of ${\wb^o_{m,\eta}}$ for different values of $\eta$ from~\eqref{eq: output of the filter for uniform covariances mse}. In order to visualize the frequency response of the graph filter, we also plot in dashed lines the ratio $\frac{\|\wb^o_{m,\eta}\|^2}{\|\wb^o_m\|^2}$ from~\eqref{eq: output of the filter for uniform covariances mse} for $\lambda_m\in[0,1.2]$ with $\|\wb^o_m\|^2=1$ $\forall m$.

We observe that a similar behavior arises when studying the network MSD for smooth signal $\cw^o$. When $\cw^o$ is smooth, $W_{k,\eta}$ in~\eqref{eq: W k eta} is small and in this case, the covariance matrix $R_{s,k,\eta}$ in~\eqref{eq: covariance noise MSE 1} can be approximated by $R_{s,k,\eta}\approx \sigma^2_{v,k}R_{u}$. In this case, the network MSD expression in~\eqref{eq: final expression for the network MSD uniform covariance} can be approximated as:
\begin{equation}
\label{eq: final expression for the network MSD case 1 a}
\begin{split}
\text{MSD}\approx\frac{\mu M}{2}\frac{1}{N^2}\left(\sum_{k=1}^N\sigma^2_{v,k}\right)+\frac{\mu}{2N}\sum_{m=2}^N\left(\sum_{k=1}^N[v_m]^2_k\sigma^2_{v,k}\right){\tr}\left(\left(I_M+\eta\lambda_mR_u^{-1}\right)^{-1}\right).
\end{split}
\end{equation}
Since the trace of a matrix is equal to the sum of its eigenvalues, we have:
\begin{equation}
\label{eq: trace relation}
{\tr}\left(\left(I_M+\eta\lambda_mR_u^{-1}\right)^{-1}\right)=\sum_{q=1}^M\frac{1}{1+\frac{\eta\lambda_m}{\lambda_{q}(R_u)}}.
\end{equation}
The above expression shows that for a fixed $\lambda_m$, as $\eta$ increases, the above trace decreases. We conclude that, when $\eta$ increases, the sum on the {RHS} of~\eqref{eq: final expression for the network MSD case 1 a} decreases. By further assuming uniform noise profile, i.e., $\sigma^2_{v,k}=\sigma^2_{v}$ $\forall k$, expression~\eqref{eq: final expression for the network MSD case 1 a} reduces to:
\begin{equation}
\label{eq: final expression for the network MSD case 1 b}
\text{MSD}=\sum_{m=1}^N\text{MSD}(\lambda_m)\qquad\text{with}\qquad\text{MSD}(\lambda_m)\approx\frac{\mu}{2N}\sigma^2_{v}{\tr}\left(\left(I_M+\eta\lambda_mR_u^{-1}\right)^{-1}\right).
\end{equation}
From~\eqref{eq: trace relation}, we conclude that, for a fixed $\eta$, as $\lambda_m$ increases, the corresponding trace term in~\eqref{eq: final expression for the network MSD case 1 b} decreases. This case is illustrated numerically in Fig.~\ref{fig: uniform data profile} (right).  In the experiment, we set $R_u=I_M$, $\sigma^2_v=0.1$,  $\mu=0.005$, and $\tau_j=7+j$ in~\eqref{eq: spectral content of w o experiment} so that $\cw^o$ is smooth and we illustrate $\text{MSD}(\lambda_m)$ in~\eqref{eq: final expression for the network MSD case 1 b} for different values of $\eta$ .
\begin{figure}
\centering
\includegraphics[scale=0.5]{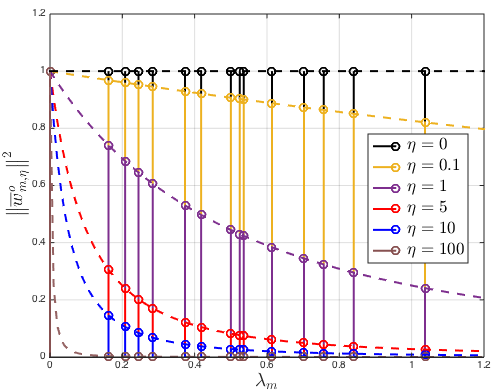}
\includegraphics[scale=0.5]{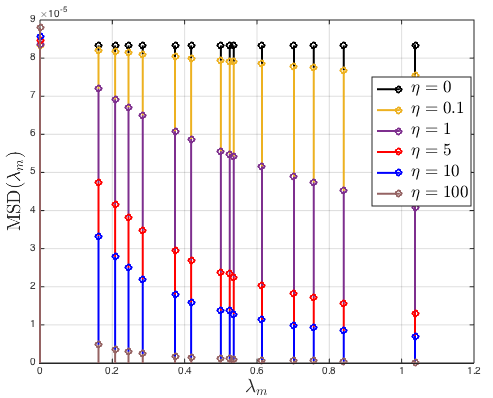}
\caption{Uniform data profile scenario. \textit{(Left)} Spectral content of $\cw^o_{\eta}$ from~\eqref{eq: output of the filter for uniform covariances mse} for different $\eta$.  Dashed lines correspond to the ratio $\frac{\|\wb^o_{m,\eta}\|^2}{\|\wb^o_m\|^2}$ for $\lambda_m\in[0,1.2]$. \textit{(Right)}  Network MSD at $\lambda_m$ from~\eqref{eq: final expression for the network MSD case 1 b} (relative to $\cw^o_{\eta}$ with $\cw^o$ a {smooth} signal) for different $\eta$  at $\mu=0.005$.}
\label{fig: uniform data profile}
\end{figure}
\subsection{Varying data profile}

We assume that $R_{u,k}=\sigma^2_{u,k}I_M$ for all $k$. In this case, expression~\eqref{eq: covariance noise MSE 1} reduces to:
\begin{equation}
R_{s,k,\eta}=\sigma^4_{u,k}(W_{k,\eta}+\|w^o_k-w^o_{k,\eta}\|^2I_M)+\sigma^2_{v,k}\sigma^2_{u,k}I_M.
\end{equation}
The variances $\sigma^2_{u,k}$ and $\sigma^2_{v,k}$ are given in Fig.~\ref{fig: network topology} (right). We set $\tau_j=j$ in~\eqref{eq: spectral content of w o experiment}.

In order to characterize the influence of the step-size $\mu$ and the regularization parameter $\eta$ on the performance of the algorithm relative to $\cw^o_{\eta}$, we report in Fig.~\ref{fig: network MSD} (left) the squared norm of the fixed point bias $\cwt_{\infty}=\cw_\eta^o-\cw_{\infty}$ {given by~(49) in Part I~\cite{nassif2018diffusion}} (which is equal to the long-term model bias $\cwt'_{\infty}$  in~\eqref{eq: steady-state bias of long model}) as a function of $\eta$ where $\eta=\mu^{-\epsilon}$ with $\epsilon\geq-1$ for $\mu =\{10^{-3}, 10^{-4}, 10^{-5}\}$. We observe that for small $\eta$, the squared norm of the bias increases $40$ dB per decade (when $\eta$ goes from $\eta_1$ to $10\eta_1$). This means that the squared norm of the bias is on the order of $\eta^4$ for fixed $\mu$. For large $\eta$, the bias becomes constant and we see that, when $\mu$ goes from $\mu_1$ to $10\mu_1$, it increases $20$ dB. This means that the squared norm of the bias is on the order of $\mu^2$. Finally, we observe that for fixed $\eta$, the squared norm of the bias is on the order of $\mu^2$. In Fig.~\ref{fig: network MSD} (middle), we report the network MSD learning curves relative to $\cw^o_{\eta}$ obtained by running strategy~\eqref{eq: distributed algorithm} for $\eta=5$ and for two different values of $\mu$. The curves are obtained by averaging the trajectories $\{\frac{1}{N}\expec\|\cw^o_{\eta}-\bcw_i\|^2\}$ over $200$ repeated experiments. In the simulations, we use the following approximation for the network MSD expression in~\eqref{eq: final expression for the network MSD alternative}:
\begin{equation}
\label{eq: final expression for the network MSD case 2}
\text{MSD}_{\text{app}}=
\frac{\mu}{2N}\sum_{m=1}^N\left(\sum_{k=1}^N[v_m]^2_k\sigma^2_{u,k}+\eta\lambda_m\right)^{-1}{\tr}\left(\sum_{k=1}^N[v_m]^2_kR_{s,k,\eta}\right),
\end{equation}
where we use the subscript ``app'' to indicate that it is an approximation. Compared with~\eqref{eq: final expression for the network MSD alternative}, we see that the term $\frac{O(\mu)}{(O(1)+O(\eta))}$ has been removed in~\eqref{eq: final expression for the network MSD case 2}. It is observed that the learning curves tend to the same MSD value predicted by the theoretical expression~\eqref{eq: steady-state network performance} (with $O(\mu^{1+{{\theta}}_m})$ removed). Furthermore, we observe that the MSD predicted by expression~\eqref{eq: final expression for the network MSD case 2} provides a good approximation for the performance of strategy~\eqref{eq: distributed algorithm}.  Finally, it can be observed that the MSD is on the order of $\mu$. In Fig.~\ref{fig: network topology} (right), we report the MSD learning curves relative to $\cw^o_{\eta}$ obtained by running strategy~\eqref{eq: distributed algorithm} for $\mu=10^{-3}$ and for six different values of $\eta$. As it can be seen, by increasing $\eta$, the network MSD decreases. Furthermore, it is observed that expression~\eqref{eq: final expression for the network MSD case 2} provides a good approximation for the network MSD for any $\eta\geq 0$.

In Fig.~\ref{fig: network MSD bar 1}, we characterize the influence of $\eta$ on the performance of strategy~\eqref{eq: distributed algorithm} relative to $\cw^o$ with $\cw^o$ a smooth signal generated according to~\eqref{eq: spectral content of w o experiment} with $\tau_j=7+j$. We set $\mu=5\cdot10^{-3}$. In Fig.~\ref{fig: network MSD bar 1} (left), we plot $\overline{\text{MSD}}$ for $\eta\in[0,350]$. To generate $\overline{\text{MSD}}$ we use~\eqref{eq: steady state MSD with respect to wo} with $\text{MSD}$ replaced by $\text{MSD}_{\text{app}}$ in~\eqref{eq: final expression for the network MSD case 2} which has a low computational complexity. As it can be seen from this plot, $\eta=4$ gives the best $\overline{\text{MSD}}$. In Fig.~\ref{fig: network MSD bar 1} (right), we report the network learning curves $\frac{1}{N}\expec\|\cw^o-\bcw_i\|^2$ for $\eta=\{0,4,350\}$.  It is observed that the learning curves tend to the same $\overline{\text{MSD}}$ value predicted by the theoretical expression~\eqref{eq: steady state MSD with respect to wo}  with $\text{MSD}$ replaced by~\eqref{eq: steady-state network performance} (with $O(\mu^{1+{{\theta}}_m})$ removed) or~\eqref{eq: final expression for the network MSD case 2}.

\begin{figure*}
\centering
\includegraphics[scale=0.334]{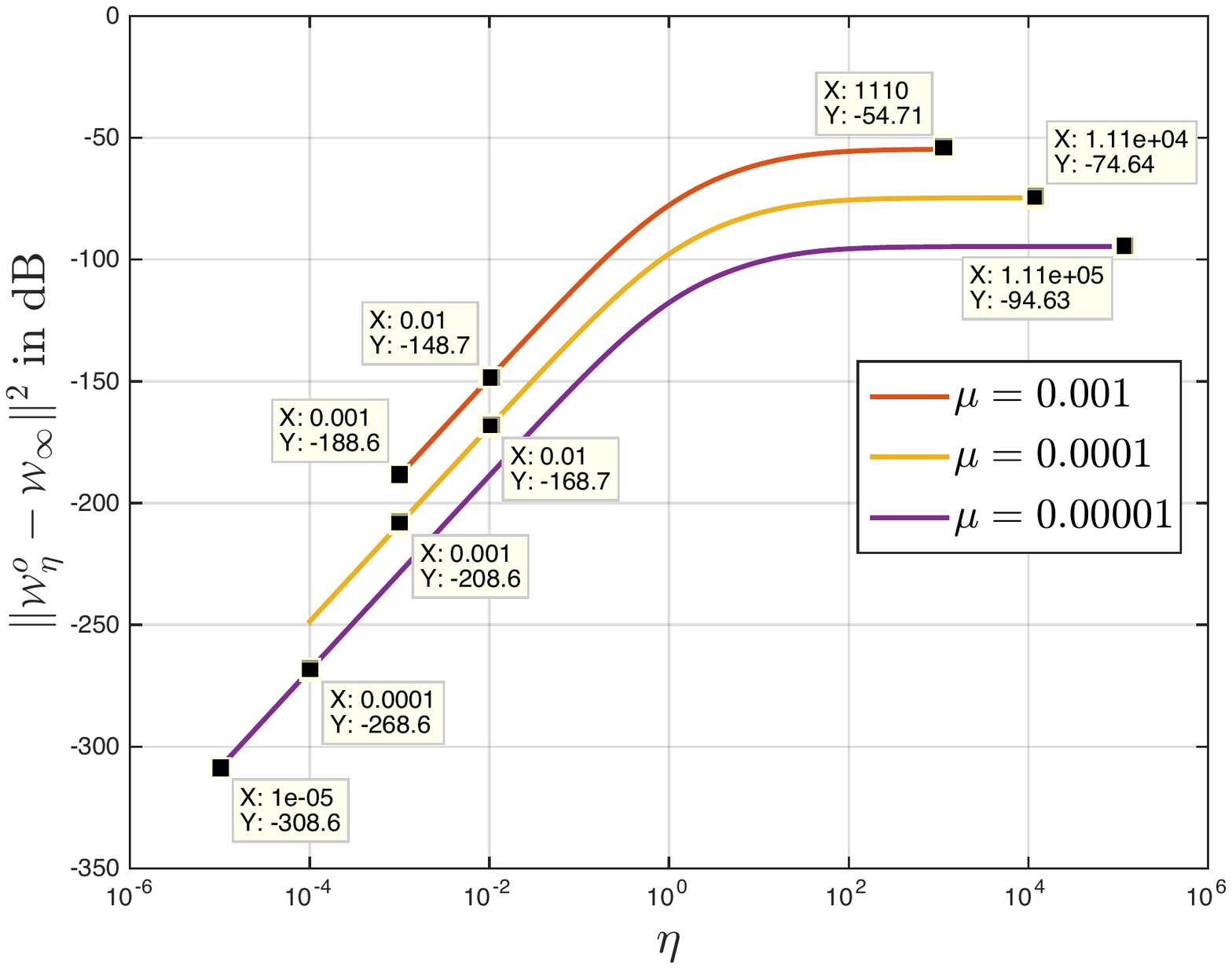}
\includegraphics[scale=0.334]{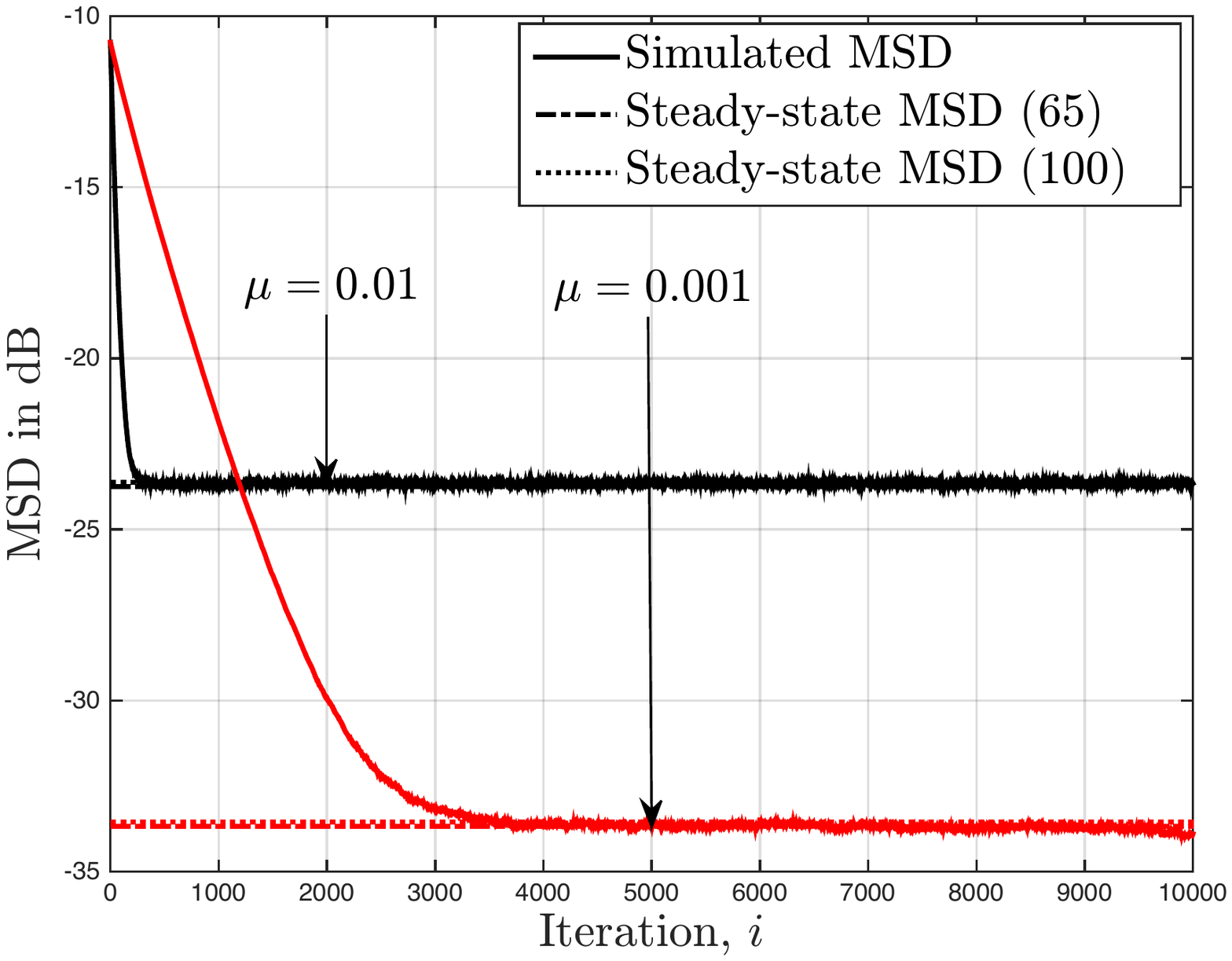}
\includegraphics[scale=0.334]{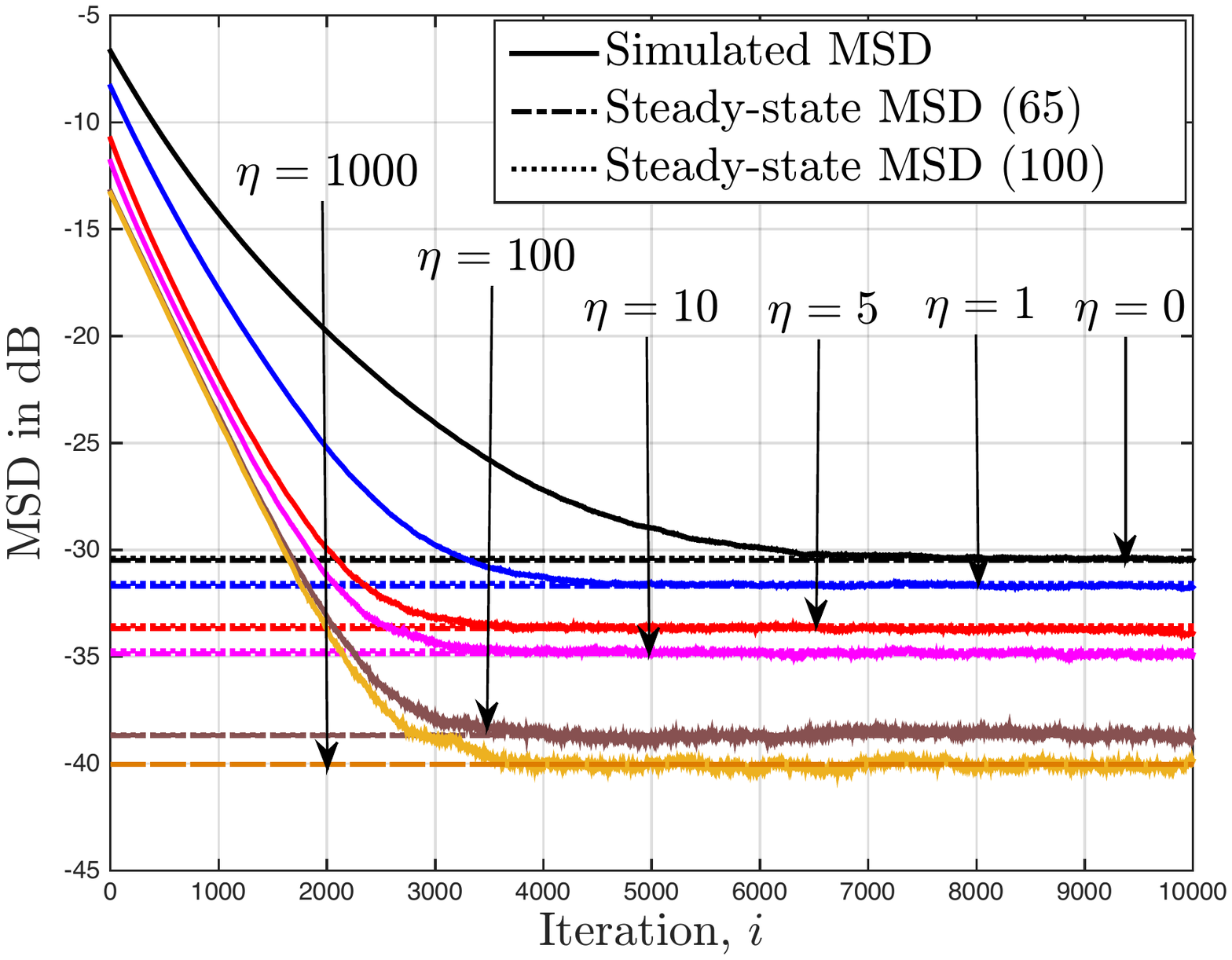}
\caption{Network performance relative to $\cw^o_{\eta}$.  \textit{(Left)} Squared $\ell_2$-norm of the {bias~\eqref{eq: steady-state bias of long model}}. \textit{(Middle)} Evolution of the learning curves for fixed regularization strength $\eta=5$, varying step-size $\mu$. \textit{(Right)} Evolution of the learning curves for fixed  $\mu=0.001$, varying $\eta$.}
\label{fig: network MSD}
\end{figure*}

\begin{figure*}
\centering
\includegraphics[scale=0.5]{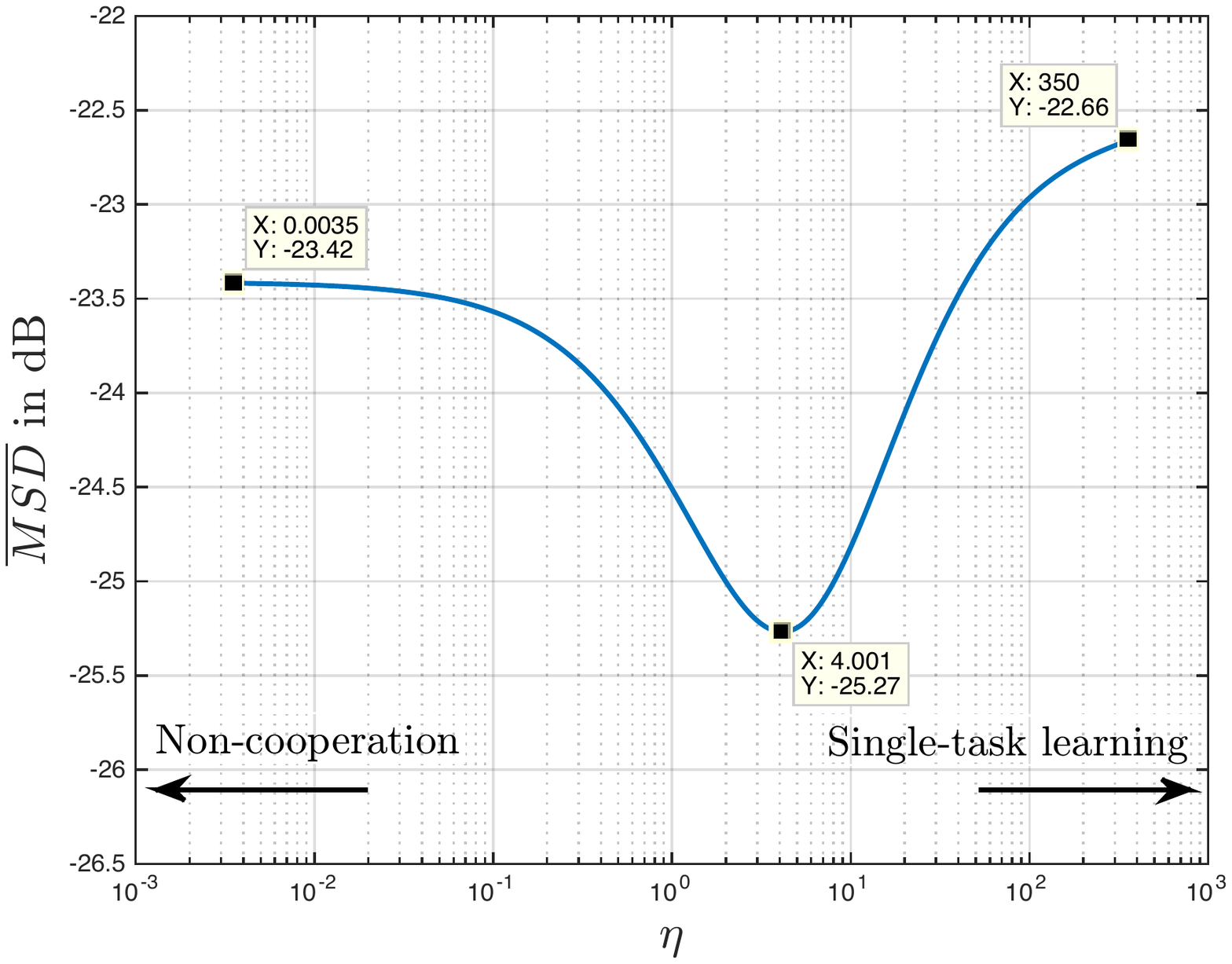}
\includegraphics[scale=0.5]{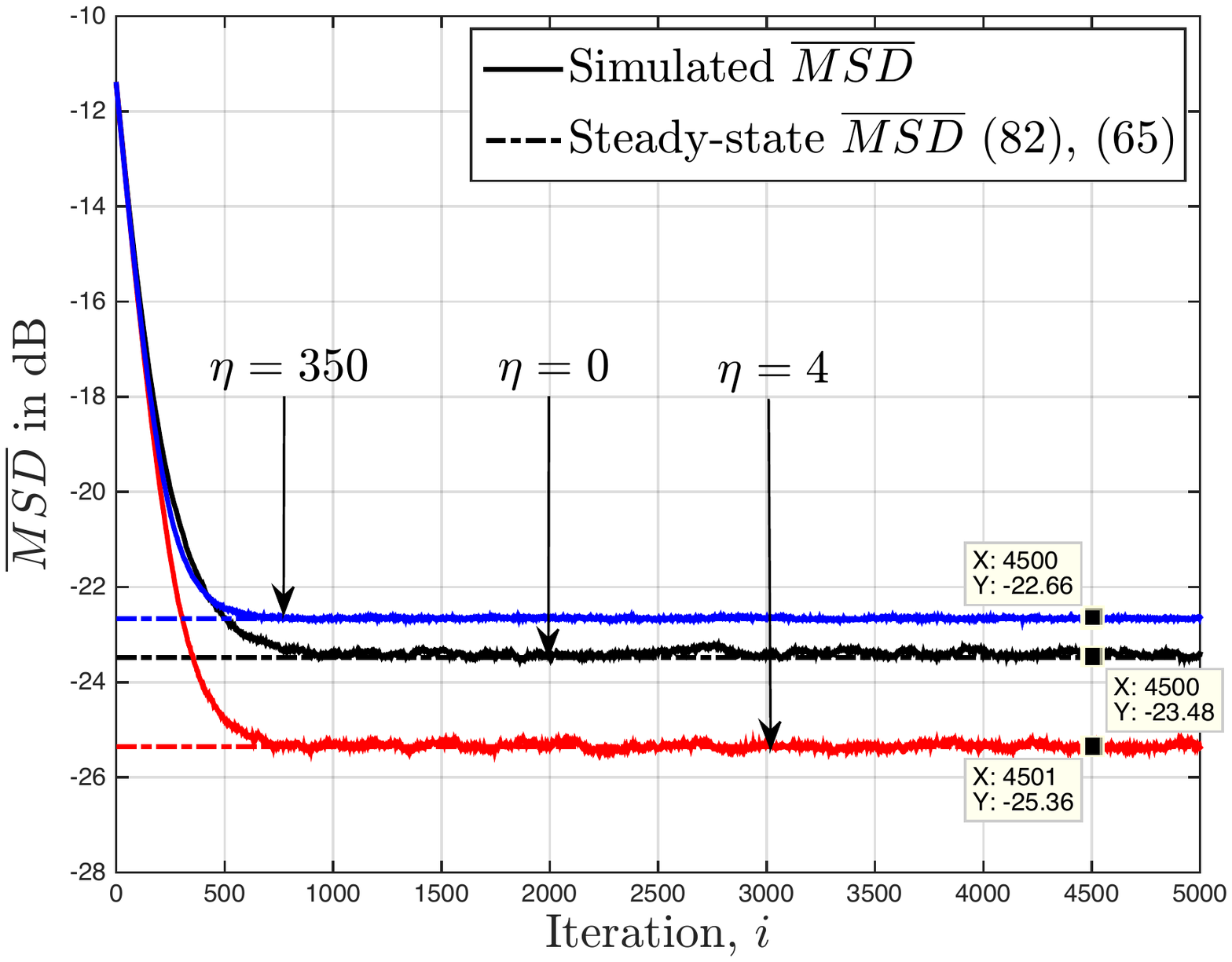}
\caption{Network  performance relative to $\cw^o$ with a smooth signal $\cw^o$ at $\mu=0.005$. \textit{(Left)} Network $\overline{\text{MSD}}$ as a function of the regularization strength $\eta\in[0,350]$. \textit{(Right)} Evolution of the network $\overline{\text{MSD}}$ learning curve for three different values of $\eta$.}
\label{fig: network MSD bar 1}
\end{figure*}

\section{Conclusion}
In this {paper, and its accompanying Part I~\cite{nassif2018diffusion},} we considered multitask inference problems where agents in the network have individual parameter vectors to estimate subject to a smoothness condition over the graph. Based on diffusion adaptation, we proposed a strategy that allows the network to minimize a global cost consisting of the aggregate sum of the individual costs regularized by a term promoting smoothness. We showed that, for small step-size parameter, the network is able to approach the minimizer of the regularized problem to arbitrarily good accuracy levels. Furthermore, we showed how the regularization strength influences the limit point and the steady-state mean-square-error (MSE) performance of the algorithm. Analytical expressions illustrating these effects are derived. These expressions revealed explicitly the influence of the network topology, data settings, step-size parameter, and regularization strength on the network MSE performance and provided insights into the design of effective multitask strategies for distributed inference over networks.  Illustrative examples were considered and links to spectral graph filtering were also provided.

\begin{appendices}

\section{Size of the perturbation sequence $\bc_{i-1}$}
\label{app: perturbation sequence size}
\noindent The argument is similar to the one presented in~\cite[Sec. 10.1]{sayed2014adaptation}. For ease of reference, we {provide a sketch of the proof here.} Let $\bwt_{k,i-1}=w^o_{k,\eta}-\bw_{k,i-1}$. For each agent $k$, we have {from eq.~(174) in Part I~\cite{nassif2018diffusion}}:
\begin{align}
\|\widetilde{\bH}_{k,i-1}\|&\triangleq\|H_{k,\eta}-\bH_{k,i-1}\|\nonumber\\
&\leq\int_{0}^1\|\nabla_{w_k}^2J_k(w_{k,\eta}^o)-\nabla_{w_k}^2J_k(w_{k,\eta}^o-t\bwt_{k,i-1})\|dt\nonumber\\
&\leq\int_{0}^1\kappa'_d\|t\bwt_{k,i-1}\|dt=\frac{1}{2}\kappa'_d\|\bwt_{k,i-1}\|,\label{eq: bound on norm of H tilde k}
\end{align}
{so that}
\begin{equation}
\|\widetilde{\bcH}_{i-1}\|=\max_{1\leq k\leq N}\|\widetilde{\bH}_{k,i-1}\|\leq\frac{1}{2}\kappa'_d\max_{1\leq k\leq N}\|\bwt_{k,i-1}\|\leq\frac{1}{2}\kappa'_d\|\bcwt_{i-1}\|,
\end{equation}
and, consequently,
\begin{align}
\limsup_{i\rightarrow\infty}\expec\|\bc_{i-1}\|&\leq\limsup_{i\rightarrow\infty}\expec\|\widetilde{\bcH}_{i-1}\|\|\bcwt_{i-1}\|\nonumber\\
&\leq\frac{1}{2}\kappa'_d\limsup_{i\rightarrow\infty}\expec\|\bcwt_{i-1}\|^2=O(\mu),\label{eq: bound on norm of c}
\end{align}
from {Theorem~2 in Part I~\cite{nassif2018diffusion}}.

In the following, we {argue} that $\|\bc_{i-1}\|=O(\mu)$ asymptotically with \textit{high probability}. Let us introduce the nonnegative random variable $\bu=\|\bc_{i-1}\|$ and let us recall Markov's inequality which states that for any \textit{nonnegative} random variable $\bu$ and $\xi>0$, it holds that:
\begin{equation}
\label{eq: Markov inequality}
\text{Prob}(\bu\geq\xi)\leq\left(\expec\bu\right)/\xi.
\end{equation}
Let $r_c=n\mu$, for any constant integer $n\geq 1$ that we are free to choose. We then conclude from~\eqref{eq: bound on norm of c} and~\eqref{eq: Markov inequality} that for $i\gg1$:
\begin{align}
\text{Prob}(\|\bc_{i-1}\|<r_c)&=1-\text{Prob}(\|\bc_{i-1}\|\geq r_c)\notag\\
&\geq1-\expec\|\bc_{i-1}\|/r_c\notag\\
&\geq1-O(1/n)
\end{align}
where the term $O(1/n)$ is independent of $\mu$. This result shows that the probability of having  $\|\bc_{i-1}\|$ bounded by $r_c$ can be made arbitrarily close to one by selecting a large enough value for $n$. Once the value for $n$ has been fixed to meet a desired confidence level, then $r_c=O(\mu)$.

\section{Proof of Lemma~\ref{lemma: dimension of the approximation error}}
\label{app: proof of dimension of the approximation error}
\noindent To simplify the notation, we introduce the difference:
\begin{equation}
\bz_i\triangleq\bcwt_i-\bcwt_i'.
\end{equation}
Subtracting recursions~\eqref{eq: error recursion wt 0}  and~\eqref{eq: long term error model 1}, we get:
\begin{equation}
\label{eq: recursion for z_i}
\bz_i=\cB_{\eta}\bz_{i-1}+\mu(I_{MN}-\mu\eta\cL)\bc_{i-1}.
\end{equation}
in terms of the random perturbation sequence $\bc_{i-1}$ given in~\eqref{eq: definition of c i-1}. For each agent $k$, we have from~\eqref{eq: bound on norm of H tilde k}:
\begin{equation}
\begin{split}
\|\widetilde{\bH}_{k,i-1}\|&\leq\frac{1}{2}\kappa'_d\|w^o_{k,\eta}-\bw_{k,i-1}\|.
\end{split}
\end{equation}
{It follows that}
\begin{equation}
\begin{split}
\|\bc_{k,i-1}\|&\leq\|\widetilde{\bH}_{k,i-1}\|\|w^o_{k,\eta}-\bw_{k,i-1}\|\\
&\leq\frac{1}{2}\kappa'_d\|w^o_{k,\eta}-\bw_{k,i-1}\|^2\\
&\leq\frac{1}{2}\kappa'_d\|\cw^o_{\eta}-\bcw_{i-1}\|^2.
\end{split}
\end{equation}
and
\begin{equation}
\label{eq: bound on squared ck}
\|\bc_{k,i-1}\|^2\leq\frac{1}{4}(\kappa'_d)^2\|\cw^o_{\eta}-\bcw_{i-1}\|^4.
\end{equation}
Applying Jensen's inequality~\cite[pp.~77]{boyd2004convex} {to} the convex function $\|\cdot\|^2$, we obtain from~\eqref{eq: recursion for z_i}:
\begin{equation}
\label{eq: jensen's inequality for the difference}
\expec\|\bz_{k,i}\|^2\leq\sum_{\ell\in\cN_k}[C]_{k,\ell}\expec\|(I-\mu H_{\ell,\eta})\bz_{\ell,i-1}+\mu\bc_{\ell,i-1}\|^2.
\end{equation}
where $C$ {is given by:
\begin{equation}
C\triangleq I_{N}-\mu\eta L.\label{eq: combination matrix C}
\end{equation}}
{Next note that}
\begin{align}
&\expec\|(I-\mu H_{k,\eta})\bz_{k,i-1}+\mu\bc_{k,i-1}\|^2\nonumber\\
&=\expec\left\|t\frac{1}{t}(I_M-\mu H_{k,\eta})\bz_{k,i-1}+\mu(1-t)\frac{1}{1-t} \bc_{k,i-1}\right\|^2\nonumber\\
&\leq \frac{1}{t}\expec\|(I_M-\mu H_{k,\eta})\bz_{k,i-1}\|^2+\mu^2\frac{1}{1-t}\expec\| \bc_{k,i-1}\|^2,
\end{align}
for any arbitrary positive number $t\in(0,1)$. We select $t=\gamma_k$ with $\gamma_k$ defined in~\eqref{eq: gamma_k}. {This gives}
\begin{equation}
\label{eq: jensen's inequality for the difference 1}
\expec\|(I-\mu H_{k,\eta})\bz_{k,i-1}+\mu\bc_{k,i-1}\|^2\leq \gamma_k\expec\|\bz_{k,i-1}\|^2+\mu^2\frac{1}{1-\gamma_k}\expec\| \bc_{k,i-1}\|^2,
\end{equation}
Let us introduce the mean-square perturbation vector at time $i$:
\begin{equation}
\text{MSP}_{z,i}\triangleq\col\left\{\expec\|\bz_{k,i}\|^2\right\}_{k=1}^N.
\end{equation}
Replacing~\eqref{eq: jensen's inequality for the difference 1} into~\eqref{eq: jensen's inequality for the difference}, and using~\eqref{eq: bound on squared ck}, we obtain:
\begin{equation}
\text{MSP}_{z,i}\preceq CG''\text{MSP}_{z,i-1}+\mu^2\frac{1}{4}(\kappa'_d)^2C(I_N-G'')^{-1}(\mathds{1}_N\otimes\expec\|\cw^o_{\eta}-\bcw_{i-1}\|^4),
\end{equation}
with $G''=\diag\{\gamma_k\}_{k=1}^N$. Iterating the above recursion starting from $i=1$, we obtain:
\begin{equation}
\text{MSP}_{z,i}\preceq (CG'')^i\text{MSP}_{z,0}+\mu^2\frac{1}{4}(\kappa'_d)^2\sum_{j=0}^{i-1}(CG'')^jC(I_N-G'')^{-1}(\mathds{1}_N\otimes\expec\|\cw^o_{\eta}-\bcw_{i-1-j}\|^4).
\end{equation}
Under Assumption~\ref{assumption: combination matrix} and condition~\eqref{eq: condition 1}, the matrix $CG''$ is guaranteed to be stable. Following similar arguments {to the ones} used to {establish eq.~(68) in Part I (Appendix E)~\cite{nassif2018diffusion}}, and from {Theorem~3 in Part I~\cite{nassif2018diffusion}}, we conclude that:
\begin{equation}
\|\limsup_{i\rightarrow\infty}\text{MSP}_{z,i}\|_{\infty}=O(\mu^2),
\end{equation}
where we used the fact that $\|(I_N-G'')^{-1}\|_{\infty}\leq O(\mu^{-1})$.  It follows that
 \begin{equation}
 \limsup_{i\rightarrow\infty}\expec\|\bcwt_i-\bcwt_i'\|^2=O(\mu^2).
 \end{equation}
 Finally note that
 \begin{align}
 \expec\|\bcwt_i'\|^2=\expec\|\bcwt'_i-\bcwt_i+\bcwt_i\|^2&\leq\expec\|\bcwt'_i-\bcwt_i\|^2+\expec\|\bcwt_i\|^2+2|\expec(\bcwt'_i-\bcwt_i)^\top\bcwt_i|\notag\\
 &\leq\expec\|\bcwt'_i-\bcwt_i\|^2+\expec\|\bcwt_i\|^2+2\sqrt{\expec\|\bcwt'_i-\bcwt_i\|^2\expec\|\bcwt_i\|^2}
 \end{align}
where we used $|\expec\bx|\leq\expec|\bx|$ from Jensen's inequality and where we applied H\"older's inequality:$$\expec|\bx^\top\by|\leq\left(\expec|\bx|^p\right)^{\frac{1}{p}}\left(\expec|\bx|^q\right)^{\frac{1}{q}}, \quad{\text{when }1/p+1/q=1}.$$ Hence we get:
 \begin{equation}
 \limsup_{i\rightarrow\infty}\left( \expec\|\bcwt_i'\|^2-\expec\|\bcwt_i\|^2\right)\leq O(\mu^2)+\sqrt{O(\mu^3)}=O(\mu^{\frac{3}{2}}),
 \end{equation}
 since $\mu^2<\mu^{\frac{3}{2}}$ for small $\mu\ll 1$.

\section{Proof of Lemma~\ref{lemma: mean-square stability long term}}
\label{app: proof of mean-square stability of the long term model}
\noindent From~\eqref{eq: recursion for w' infinity w'i}, we have:
\begin{equation}
w'_{k,\infty}-\bw'_{k,i}=\sum_{\ell=1}^N[C]_{k\ell}\bphi'_{\ell,i},
\end{equation}
where $C$ is defined in~\eqref{eq: combination matrix C} and where $\bphi'_{k,i}$ is  given by:
\begin{equation}
\bphi'_{k,i}=(I_M-\mu H_{k,\eta})(w'_{k,\infty}-\bw'_{k,i-1})-\mu\bs_{k,i}(\bw_{k,i-1}).
\end{equation}
Applying Jensen's inequality~\cite[pp.~77]{boyd2004convex} {to} the convex function $\|\cdot\|^2$, we obtain:
\begin{equation}
\label{eq: relation by Jensen's 1}
\expec\|w'_{k,\infty}-\bw'_{k,i}\|^2\leq\sum_{\ell=1}^N[C]_{k\ell}\expec\|\bphi'_{\ell,i}\|^2.
\end{equation}
Under Assumption~\ref{assumption: gradient noise}, we have:
\begin{equation}
\label{eq: relation conditioned 1}
\begin{split}
\expec[\|\bphi'_{k,i}\|^2|\bcF_{i-1}]&=\|w'_{k,\infty}-\bw'_{k,i-1}\|^2_{\Sigma_{k}}+\mu^2\expec[\|\bs_{k,i}(\bw_{k,i-1})\|^2|\bcF_{i-1}].
\end{split}
\end{equation}
where $\Sigma_{k}\triangleq(I_M-\mu H_{k,\eta})^2$, which due to Assumption~\ref{assumption: strong convexity}, can be bounded as follows:
\begin{equation}
0<\Sigma_{k}\leq\gamma_k^2I_M,
\end{equation}
where $\gamma_k$ is given by~\eqref{eq: gamma_k}. Taking expectation again in~\eqref{eq: relation conditioned 1}, and using the {bound (136) on $\expec[\|\bs_{k,i}(\bw_{k,i-1})\|^2|\bcF_{i-1}]$ from Part I~\cite{nassif2018diffusion}}, we obtain:
\begin{equation}
\label{eq: relation unconditioned 1}
\begin{split}
\expec\|\bphi'_{k,i}\|^2&=\expec\|w'_{k,\infty}-\bw'_{k,i-1}\|^2_{\Sigma_{k}}+\mu^2\expec\|\bs_{k,i}(\bw_{k,i-1})\|^2\\
&\leq\gamma_k^2\expec\|w'_{k,\infty}-\bw'_{k,i-1}\|^2+3\mu^2\beta^2_k\expec\|w_{k,\infty}-\bw_{k,i-1}\|^2+\mu^2\left(3\beta_k^2\|w^o_{k,\eta}-w_{k,\infty}\|^2+3\beta_k^2\|w^o_{k,\eta}\|^2+\sigma^2_{s,k}\right).
\end{split}
\end{equation}
Now, combining~\eqref{eq: relation unconditioned 1} and~\eqref{eq: relation by Jensen's 1}, we obtain~\eqref{eq: evolution of the MSP' i}.

Iterating~\eqref{eq: evolution of the MSP' i} starting from $i=1$, we get:
\begin{equation}
\label{eq: recursion MSP 1}
\text{MSP}'_i \preceq (C(G'')^2)^i\text{MSP}'_{0}+\mu^2\sum_{j=0}^{i-1}(C(G'')^2)^jC(3\diag\{\beta^2_k\}\text{MSP}_{i-1-j}+b).
\end{equation}
Under Assumption~\ref{assumption: combination matrix} and condition~\eqref{eq: condition 1}, the matrix $C(G'')^2$ is guaranteed to be stable. Using the fact that $b=O(1)$ from {Theorem~1 in Part I~\cite{nassif2018diffusion}}, $1-\|(G'')^2\|_{\infty}=O(\mu)$, and that $\|\lim_{i\rightarrow\infty}\text{MSP}_i\|_{\infty}=O(\mu)$ from {Theorem~2 in Part I~\cite{nassif2018diffusion}}, and following similar arguments as the one used to {establish eq.~(68) in Part I (Appendix E)~\cite{nassif2018diffusion}}, we conclude~\eqref{eq: steady-state of the MSP'}.

From~\eqref{eq: bound on mean-square expectation long term}, Lemma~\ref{lemma: mean stability long term}, and~\eqref{eq: steady-state of the MSP'}, we conclude~\eqref{eq: steady-state of the second order long term}.

\section{Limiting second-order moment of gradient noise}
\label{app: proof of limiting second order of gradient noise}
\noindent It is shown in~\cite[Sec.~4.1]{sayed2014adaptation} that the conditional noise covariance matrix satisfying Assumptions~\ref{assumption: gradient noise} and~\ref{assumption: Smoothness condition on noise covariance} satisfies more globally a condition of the following form for all $\bw_k\in\bcF_{i-1}$:
\begin{equation}
\label{eq: global smoothness condition on noise covariance}
\|R_{s,k,i}(\bw_k)-R_{s,k,i}(w^o_{k,\eta})\|\leq\kappa_d\|w^o_{k,\eta}-\bw_k\|^{{\theta}}+\kappa'_d\|w^o_{k,\eta}-\bw_k\|^2,
\end{equation}
for some nonnegative constant $\kappa'_d$. By adding and subtracting the same term we have:
\begin{equation}
\begin{split}
\expec[\bs_{k,i}(\bw_{k,i-1})\bs_{k,i}^\top(\bw_{k,i-1})|\bcF_{i-1}]=&\expec[\bs_{k,i}(w_{k,\eta}^o)\bs_{k,i}^\top(w_{k,\eta}^o)|\bcF_{i-1}]+\expec[\bs_{k,i}(\bw_{k,i-1})\bs_{k,i}^\top(\bw_{k,i-1})|\bcF_{i-1}]-\\
&\qquad\expec[\bs_{k,i}(w_{k,\eta}^o)\bs_{k,i}^\top(w_{k,\eta}^o)|\bcF_{i-1}]
\end{split}
\end{equation}
which using definition~\eqref{eq: R ski definition} can be rewritten as:
\begin{equation}
\expec[\bs_{k,i}(\bw_{k,i-1})\bs_{k,i}^\top(\bw_{k,i-1})|\bcF_{i-1}]=\expec[\bs_{k,i}(w_{k,\eta}^o)\bs_{k,i}^\top(w_{k,\eta}^o)|\bcF_{i-1}]+R_{s,k,i}(\bw_{k,i-1})-R_{s,k,i}(w_{k,\eta}^o).
\end{equation}
Subtracting $R_{s,k,\eta}$ defined by~\eqref{eq: R sk definition} from both sides and computing expectations, we get:
\begin{equation}
\expec\bs_{k,i}(\bw_{k,i-1})\bs_{k,i}^\top(\bw_{k,i-1})-R_{s,k,\eta}=\expec\left(\expec[\bs_{k,i}(w_{k,\eta}^o)\bs_{k,i}^\top(w_{k,\eta}^o)|\bcF_{i-1}]-R_{s,k,\eta}\right)+\expec\left(R_{s,k,i}(\bw_{k,i-1})-R_{s,k,i}(w_{k,\eta}^o)\right).
\end{equation}
It then follows from the triangle inequality of norms, and from Jensen's inequality that:
\begin{align}
&\|\expec\bs_{k,i}(\bw_{k,i-1})\bs_{k,i}^\top(\bw_{k,i-1})-R_{s,k,\eta}\|\nonumber\\
&\leq\left\|\expec\left(\expec[\bs_{k,i}(w_{k,\eta}^o)\bs_{k,i}^\top(w_{k,\eta}^o)|\bcF_{i-1}]-R_{s,k,\eta}\right)\right\|+\|\expec\left(R_{s,k,i}(\bw_{k,i-1})-R_{s,k,i}(w_{k,\eta}^o)\right)\|\nonumber\\
&\leq\expec\|\expec[\bs_{k,i}(w_{k,\eta}^o)\bs_{k,i}^\top(w_{k,\eta}^o)|\bcF_{i-1}]-R_{s,k,\eta}\|+\expec\|R_{s,k,i}(\bw_{k,i-1})-R_{s,k,i}(w_{k,\eta}^o)\|.
\end{align}
Computing the limit superior of both sides and using~\eqref{eq: R sk definition} to annihilate the limit of the first term on the {RHS}, we conclude that:
\begin{equation}
\label{eq: relation 1 on the covariance}
\limsup_{i\rightarrow\infty}\|\expec\bs_{k,i}(\bw_{k,i-1})\bs_{k,i}^\top(\bw_{k,i-1})-R_{s,k,\eta}\|\leq\limsup_{i\rightarrow\infty}\expec\|R_{s,k,i}(\bw_{k,i-1})-R_{s,k,i}(w_{k,\eta}^o)\|.
\end{equation}
We next use the smoothness condition~\eqref{eq: global smoothness condition on noise covariance} to bound the right-most term as follows:
\begin{equation}
\|R_{s,k,i}(\bw_{k,i-1})-R_{s,k,i}(w_{k,\eta}^o)\|\leq\kappa_d\left(\|\bwt_{k,i-1}\|^4\right)^{\frac{{{\theta}}}{4}}+\kappa'_d\|\bwt_{k,i-1}\|^2.
\end{equation}
Under expectation and in the limit, we have:
\begin{align}
\limsup_{i\rightarrow\infty}\expec\|R_{s,k,i}(\bw_{k,i-1})-R_{s,k,i}(\bw_{k,\eta}^o)\|&\leq\limsup_{i\rightarrow\infty}\left\{\kappa_d\expec\left(\|\bwt_{k,i-1}\|^4\right)^{\frac{{{\theta}}}{4}}+\kappa'_d\expec\|\bwt_{k,i-1}\|^2\right\}\nonumber\\
&\leq\limsup_{i\rightarrow\infty}\left\{\kappa_d\left(\expec\|\bwt_{k,i-1}\|^4\right)^{\frac{{{\theta}}}{4}}+\kappa'_d\expec\|\bwt_{k,i-1}\|^2\right\}=O(\mu^{\frac{{{\theta}}'}{2}}),
\end{align}
where we applied Jensen's inequality to the function $f(x)=x^{\frac{{{\theta}}}{4}}$; this function is concave over $x\geq 0$ for ${{\theta}}\in(0,4]$. Moreover, in the last step we called upon the results in {Theorems~2 and 3 in Part I~\cite{nassif2018diffusion}} where it is shown that the second and fourth order moments of $\bwt_{k,i-1}$ are asymptotically bounded by $O(\mu)$ and $O(\mu^2)$, respectively. Accordingly, the exponent ${{\theta}}'\triangleq\min\{{{\theta}},2\}$ since $O(\mu^{{{\theta}}/2})$ dominates $O(\mu)$ for values of ${{\theta}}\in(0,2]$ and $O(\mu)$ dominates $O(\mu^{{{\theta}}/2})$ for values of ${{\theta}}\in[2,4]$. Substituting into~\eqref{eq: relation 1 on the covariance} gives:
\begin{equation}
\limsup_{i\rightarrow\infty}\|\expec\bs_{k,i}(\bw_{k,i-1})\bs_{k,i}^\top(\bw_{k,i-1})-R_{s,k,\eta}\|=O(\mu^{\frac{{{\theta}}'}{2}}),
\end{equation}
which leads to~\eqref{eq: relation on the covariance}.
\section{Proof of Lemma~\ref{lemm: coefficient matrix}}
\label{app: coefficient matrix F eta}
\noindent Consider the matrix $\overline{\cF}_{\eta}$ in~\eqref{eq: transformed matrix F_eta}. Using the block Kronecker product property:
\begin{equation}
\label{eq: block property}
(\cA\otimes_b\cB)(\cC\otimes_b\cD)=(\cA\cC\otimes_b\cB\cD),
\end{equation}
it can be verified that:
\begin{equation}
\label{eq: transformed F_eta}
\overline{\cF}_{\eta}=\overline{\cB}_{\eta}^{\top}\otimes_b\overline{\cB}_{\eta}^{\top},
\end{equation}
where
\begin{equation}
\label{eq: transformed B_eta}
\overline{\cB}_{\eta}=\cV^{\top}\cB_{\eta}\cV=(I_{MN}-\mu\eta\cJ)(I_{MN}-\mu\cV^{\top}\cH_{\eta}\cV)
\end{equation}
so that:
\begin{equation}
\overline{\cB}_{\eta}^{\top}=\left[
\begin{array}{cccc}
I-\mu H_{11}&-\mu(1-\mu\eta\lambda_2)H_{21}&\ldots&-\mu(1-\mu\eta\lambda_N)H_{N1}\\
-\mu H_{12}&(1-\mu\eta\lambda_2)(I-\mu H_{22})&\ldots&-\mu(1-\mu\eta\lambda_N)H_{N2}\\
\vdots&\vdots&&\vdots\\
-\mu H_{1N}&-\mu(1-\mu\eta\lambda_2)H_{2N}&\ldots&(1-\mu\eta\lambda_N)(I-\mu H_{NN})
\end{array}
\right],
\end{equation}
where $H_{mn}$ is defined in~\eqref{eq: H mn}. It can be verified that the matrix $Z=I-\overline{\cF}_{\eta}$ is $N\times N$ blocks $Z_{mn}$ with each block of size $M^2N\times M^2N$:
\begin{equation}
\label{eq: Z}
Z=I-\overline{\cF}_{\eta}=\left[
\begin{array}{cccc}
Z_{11}&Z_{12}&\ldots&Z_{1N}\\
Z_{21}&Z_{22}&\ldots&Z_{2N}\\
\vdots&\vdots&&\vdots\\
Z_{N1}&Z_{N2}&\ldots&Z_{NN}
\end{array}
\right],
\end{equation}
We denote by $[Z_{mn}]_{pq}$ the $M^2\times M^2$ $(p,q)$-th block of  $Z_{mn}$. We have:
\begin{equation}
\label{eq: Z blocks}
[Z_{mn}]_{pq}=
\left\lbrace
\begin{array}{lr}
I-(1-\mu\eta\lambda_m)(1-\mu\eta\lambda_p)[(I-\mu H_{mm})\otimes(I-\mu H_{pp})],&\text{if }m=n,~p=q\\
\mu(1-\mu\eta\lambda_m)(1-\mu\eta\lambda_q)[(I-\mu H_{mm})\otimes H_{qp}],&\text{if }m=n,~p\neq q\\
\mu(1-\mu\eta\lambda_n)(1-\mu\eta\lambda_p)[H_{nm}\otimes(I-\mu H_{pp})],&\text{if }m\neq n,~p= q\\
-\mu^2(1-\mu\eta\lambda_n)(1-\mu\eta\lambda_q)[H_{nm}\otimes H_{qp}],&\text{if }m\neq n,~p\neq q
\end{array}
\right.
\end{equation}
We have:
\begin{equation}
(I-\mu H_{mm})\otimes(I-\mu H_{pp})=I-\mu H_{mm}\oplus H_{pp}+\mu^2H_{mm}\otimes H_{pp},
\end{equation}
where $H_{mm}\oplus H_{pp}$ is given by~\eqref{eq: H m p plus} and
\begin{equation}
\label{eq: relation on 1 -product}
1-(1-\mu\eta\lambda_m)(1-\mu\eta\lambda_p)=\mu\eta(\lambda_m+\lambda_p-\mu\eta\lambda_m\lambda_p).
\end{equation}
Thus,
\begin{equation}
\label{eq: Z blocks 1}
[Z_{mn}]_{pq}=\mu\cdot
\left\lbrace
\begin{array}{lr}
(1-\mu\eta\lambda_m)(1-\mu\eta\lambda_p)(H_{mm}\oplus H_{pp})+\eta(\lambda_m+\lambda_p-\mu\eta\lambda_m\lambda_p)I+O(\mu),&\text{if }m=n,~p=q\\
(1-\mu\eta\lambda_m)(1-\mu\eta\lambda_q)(I\otimes H_{qp})+O(\mu),&\text{if }m=n,~p\neq q\\
(1-\mu\eta\lambda_n)(1-\mu\eta\lambda_p)(H_{nm}\otimes I)+O(\mu),&\text{if }m\neq n,~p= q\\
-\mu(1-\mu\eta\lambda_n)(1-\mu\eta\lambda_q)[H_{nm}\otimes H_{qp}]=O(\mu),&\text{if }m\neq n,~p\neq q
\end{array}
\right.
\end{equation}
Before proceeding, we recall the following useful properties of the Kronecker and Kronecker sum products~\cite{bernstein2005matrix}. Let $\{\lambda_i(A),i=1,\ldots,M\}$ and $\{\lambda_j(B),j=1,\ldots,M\}$ denote the eigenvalues of any two $M\times M$ matrices $A$ and $B$, respectively. Then,
\begin{align}
\{\lambda(A\otimes B)\}&=\{\lambda_i(A)\lambda_j(B)\}_{i=1,j=1}^{M,M},\\
\{\lambda(A\oplus B)\}&=\{\lambda_i(A)+\lambda_j(B)\}_ {i=1,j=1}^{M,M}.\label{eq: property of the kronecker sum}
\end{align}
From~\eqref{eq: Z},~\eqref{eq: Z blocks}, and~\eqref{eq: Z blocks 1}, it can be verified that the matrix $Z$ can be written as:
\begin{equation}
\label{eq: Z as the sum of two matrices}
Z=X+Y.
\end{equation}
The matrix $X$ is $N^2\times N^2$ block diagonal defined as:
\begin{equation}
\label{eq: X}
X\triangleq\mu\cdot\diag\left\{\diag\left\{ [Z_{mm}]_{pp}\right\}_{p=1}^N\right\}_{m=1}^N=\mu\cdot\left[
\begin{array}{cc}
O(1)&0\\
0&O(1)+O(\eta)
\end{array}
\right],
\end{equation}
where we used the fact that for $m=1$ and $p=1$, we have $ [Z_{11}]_{11}=\mu\cdot H_{11}\oplus H_{11}+O(\mu^2)$ which is $\mu\cdot O(1)$. This is due to property~\eqref{eq: property of the kronecker sum} and the fact that $H_{11}=\frac{1}{N}\sum_{k=1}^NH_{k,\eta}>0$. For the remaining blocks of $X$, from~\eqref{eq: relation on 1 -product}, property~\eqref{eq: property of the kronecker sum}, and the fact that $H_{mm}=\sum_{k=1}^N[v_m]_k^2H_{k,\eta}>0$, it can be verified that the matrix:
\begin{equation}
\label{eq: Z_mm pp}
[Z_{mm}]_{pp}=\mu(1-\mu\eta\lambda_m)(1-\mu\eta\lambda_p)(H_{mm}\oplus H_{pp})+\mu\eta(\lambda_m+\lambda_p-\mu\eta\lambda_m\lambda_p)I
\end{equation}
is also positive definite when:
\begin{equation}
0<(1-\mu\eta\lambda_m)(1-\mu\eta\lambda_p)\leq1.
\end{equation}
Furthermore, in this case, we have $[Z_{mm}]_{pp}=\mu\cdot (O(1)+O(\eta))$. The matrix $Y=Z-X$ in~\eqref{eq: Z as the sum of two matrices} is an $N^2\times N^2$ block matrix where each block is $M^2\times M^2$ given by:
\begin{equation}
\label{eq: Y blocks}
[Y_{mn}]_{pq}=\mu\cdot
\left\lbrace
\begin{array}{lr}
0,&\text{if }m=n,~p=q\\
(1-\mu\eta\lambda_m)(1-\mu\eta\lambda_q)(I\otimes H_{qp})+O(\mu)\leq O(1),&\text{if }m=n,~p\neq q\\
(1-\mu\eta\lambda_n)(1-\mu\eta\lambda_p)(H_{nm}\otimes I)+O(\mu)\leq O(1),&\text{if }m\neq n,~p= q\\
-\mu(1-\mu\eta\lambda_n)(1-\mu\eta\lambda_q)[H_{nm}\otimes H_{qp}]\leq O(\mu),&\text{if }m\neq n,~p\neq q
\end{array}
\right.
\end{equation}
Applying the matrix inversion {identity~\cite{kailath1980linear}}, we obtain:
\begin{equation}
(X+Y)^{-1}=X^{-1}-X^{-1}Y(I+X^{-1}Y)^{-1}X^{-1},
\end{equation}
From~\eqref{eq: X}, we have:
\begin{equation}
\label{eq: inverse X}
X^{-1}=\mu^{-1}\cdot\diag\left\{\diag\left\{ ([Z_{mm}]_{pp})^{-1}\right\}_{p=1}^N\right\}_{m=1}^N=\mu^{-1}\cdot\left[
\begin{array}{cc}
O(1)&0\\
0&(O(1)+O(\eta))^{-1}
\end{array}
\right].
\end{equation}
From~\eqref{eq: inverse X} and~\eqref{eq: Y blocks}, we have:
\begin{equation}
\label{eq: XY}
X^{-1}Y=\left[
\begin{array}{cc}
0&O(1)\\
(O(1)+O(\eta))^{-1}&(O(1)+O(\eta))^{-1}
\end{array}
\right],
\end{equation}
and
\begin{equation}
I+X^{-1}Y=\left[
\begin{array}{cc}
O(1)&O(1)\\
(O(1)+O(\eta))^{-1}&O(1)
\end{array}
\right].
\end{equation}
Applying the block inversion formula to $I+X^{-1}Y$, we obtain:
\begin{equation}
\label{eq: inverse I+XY}
(I+X^{-1}Y)^{-1}=\left[
\begin{array}{cc}
O(1)&O(1)\\
(O(1)+O(\eta))^{-1}&O(1)
\end{array}
\right].
\end{equation}
Finally, from~\eqref{eq: inverse X},~\eqref{eq: XY}, and~\eqref{eq: inverse I+XY}, we conclude that:
\begin{equation}
X^{-1}Y(I+X^{-1}Y)^{-1}X^{-1}=\mu^{-1}\cdot\left[
\begin{array}{cc}
(O(1)+O(\eta))^{-1}&(O(1)+O(\eta))^{-1}\\
(O(1)+O(\eta))^{-1}&(O(1)+O(\eta))^{-2}
\end{array}
\right].
\end{equation}
Consider now the matrix $(I-\cF_{\eta})^{-1}=(\cV\otimes_b\cV)(I-\overline{\cF}_{\eta})^{-1}(\cV\otimes_b\cV)^{\top}$. It can be verified that:
\begin{align}
(I-\cF_{\eta})^{-1}&=(\cV\otimes_b\cV)X^{-1}(\cV\otimes_b\cV)^{\top}+(\cV\otimes_b\cV)X^{-1}Y(I+X^{-1}Y)^{-1}X^{-1}(\cV\otimes_b\cV)^{\top}\nonumber\\
&=\sum_{m=1}^N\sum_{p=1}^N[(v_m\otimes I_M)\otimes_b(v_p\otimes I_M)]([Z_{mm}]_{pp})^{-1}[(v^{\top}_m\otimes I_M)\otimes_b(v_p^{\top}\otimes I_M)]+\mu^{-1}(O(1)+O(\eta))^{-1}.\label{eq: inverse of I- F eta}
\end{align}

\section{Proof of Lemma~\ref{lemma: steady-state performance}}
\label{app: steady-state performance}
\noindent Recursion~\eqref{eq: long term error model 1} for the long term model includes a constant driving term on the {RHS} represented by $\mu^2\eta^2\cL^2\cw^o_{\eta}$. To facilitate the variance analysis, we introduce the centered variable:
\begin{equation}
\bz_i\triangleq\bcwt'_{i}-\expec\,\bcwt'_{i}.
\end{equation}
Subtracting~\eqref{eq: mean recursion of the long term model} from~\eqref{eq: long term error model 1}, we obtain:
\begin{equation}
\label{eq: recursion centered variable}
\bz_i=\cB_{\eta}\bz_{i-1}-\mu(I_{MN}-\mu\eta\cL)\bs_{i}(\bcw_{i-1}),
\end{equation}
where the deterministic driving terms are removed. Although we are interested in evaluating the asymptotic size of $\expec\|\bcwt'_{i}\|^2$, we can rely on the centered variable $\bz_i$ for this purpose, since, from Lemma~\ref{lemma: mean stability long term}, it holds for $i\gg 1$:
\begin{equation}
\expec\|\bz_{i}\|^2=\expec\|\bcwt'_{i}\|^2-\|\expec\bcwt'_{i}\|^2=\expec\|\bcwt'_{i}\|^2+O(\mu^2),
\end{equation}
for fixed $\eta$. Furthermore, we established in Lemma~\ref{lemma: dimension of the approximation error}, that the error variances $\expec\|\bcwt'_{i}\|^2$ and $\expec\|\bcwt_{i}\|^2$ are within $O(\mu^{\frac{3}{2}})$ from each other. Therefore, we may evaluate the mean-square error in terms of the mean-square value of the variable $\bz_i$ by employing the correction:
\begin{equation}
\label{eq: mse of the centered variable}
\limsup_{i\rightarrow \infty}\frac{1}{N}\expec\|\bcwt_{i}\|^2=\limsup_{i\rightarrow \infty}\frac{1}{N}\expec\|\bz_{i}\|^2+O(\mu^{\frac{3}{2}}).
\end{equation}
We therefore continue with recursion~\eqref{eq: recursion centered variable} and proceed to examine how the mean-square value of $\bz_i$ evolves over time by relying on the energy conservation arguments.

Let $\Sigma$ denote an arbitrary symmetric positive semi-definite matrix that we are free to choose. Equating the squared weighted values of both sides of~\eqref{eq: recursion centered variable} and taking expectations conditioned on the past history gives:
\begin{equation}
\expec\left[\|\bz_i\|_{\Sigma}^2|\bcF_{i-1}\right]=\|\bz_{i-1}\|^2_{\cB_{\eta}^\top\Sigma\cB_{\eta}}+\mu^2\expec\left[\|\bs_{i}\|^2_{(I_{MN}-\mu\eta\cL)\Sigma(I_{MN}-\mu\eta\cL)}|\bcF_{i-1}\right]
\end{equation}
Taking expectation again removes the conditioning and we get:
\begin{equation}
\label{eq: mean-square error z_i}
\expec\|\bz_i\|_{\Sigma}^2=\expec\left(\|\bz_{i-1}\|^2_{\cB_{\eta}^\top\Sigma\cB_{\eta}}\right)+\mu^2\expec\left(\|\bs_{i}\|^2_{(I_{MN}-\mu\eta\cL)\Sigma(I_{MN}-\mu\eta\cL)}\right).
\end{equation}
Consider the right-most term. We have:
\begin{equation}
\label{eq: gradient noise covariance weighted}
\mu^2\expec\left(\|\bs_{i}\|^2_{(I_{MN}-\mu\eta\cL)\Sigma(I_{MN}-\mu\eta\cL)}\right)=\mu^2\tr\left[(I_{MN}-\mu\eta\cL)\Sigma(I_{MN}-\mu\eta\cL)\expec\left(\bs_i(\bcw_{i-1})\bs_i^{\top}(\bcw_{i-1})\right)\right].
\end{equation}
Using~\eqref{eq: relation on the covariance} and the fact that the gradient noises across the agents are uncorrelated {under condition~\eqref{eq: uncorrelated gradient noises}}, we obtain:
\begin{equation}
\limsup_{i\rightarrow\infty}\|\expec\left(\bs_i(\bcw_{i-1})\bs_i^{\top}(\bcw_{i-1})\right)-\cS_\eta\|=O(\mu^{\min\left\{1,\frac{{{\theta}}}{2}\right\}})
\end{equation}
Using the sub-multiplicative property of the $2-$induced norm, we conclude that:
\begin{equation}
\limsup_{i\rightarrow\infty}\mu^2\|(I_{MN}-\mu\eta\cL)\Sigma(I_{MN}-\mu\eta\cL)\left(\expec(\bs_i(\bcw_{i-1})\bs_i^{\top}(\bcw_{i-1})-\cS_\eta\right)\|=\tr(\Sigma)\cdot O(\mu^{2+\min\left\{1,\frac{{{\theta}}}{2}\right\}}),
\end{equation}
where we used the fact that $\|\Sigma\|\leq\tr(\Sigma)$ for any positive semi-definite $\Sigma$. Using the fact that $|\tr(X)|\leq c\|X\|$ for any square matrix $X$, we obtain:
\begin{equation}
\limsup_{i\rightarrow\infty}|\mu^2\expec\|\bs_{i}\|^2_{(I_{MN}-\mu\eta\cL)\Sigma(I_{MN}-\mu\eta\cL)}-\tr(\Sigma\cY)|=\tr(\Sigma)\cdot O(\mu^{2+\min\left\{1,\frac{{{\theta}}}{2}\right\}})=b_1,
\end{equation}
$b_1=\tr(\Sigma)\cdot O(\mu^{2+\min\left\{1,\frac{{{\theta}}}{2}\right\}})\geq 0$. The above relation then implies that, given $\epsilon >0$, there exists an $i_o$ large enough such that for all $i>i_o$ it holds that
\begin{equation}
|\mu^2\expec\|\bs_{i}\|^2_{(I_{MN}-\mu\eta\cL)\Sigma(I_{MN}-\mu\eta\cL)}-\tr(\Sigma\cY)|\leq b_1+\epsilon.
\end{equation}
If we select $\epsilon=\tr(\Sigma)\cdot O(\mu^{2+\min\left\{1,\frac{{{\theta}}}{2}\right\}})$ and introduce the sum $b_o=b_1+\epsilon$, then we arrive at:
\begin{equation}
\label{eq: bounds on gradient noise covariance weighted}
\tr(\Sigma\cY)-b_o\leq\mu^2\expec\|\bs_{i}\|^2_{(I_{MN}-\mu\eta\cL)\Sigma(I_{MN}-\mu\eta\cL)}\leq\tr(\Sigma\cY)+b_o,
\end{equation}
for some non-negative constant $b_o=\tr(\Sigma)\cdot O(\mu^{2+\min\left\{1,\frac{{{\theta}}}{2}\right\}})$. Substituting~\eqref{eq: bounds on gradient noise covariance weighted} into~\eqref{eq: mean-square error z_i} we obtain for $i\gg 1$:
\begin{equation}
\expec\|\bz_i\|_{\Sigma}^2\leq\expec\|\bz_{i-1}\|^2_{\cB_{\eta}^\top\Sigma\cB_{\eta}}+\tr(\Sigma\cY)+b_o.
\end{equation}
Using the sub-additivity  property of the limit superior, we obtain:
\begin{equation}
\limsup_{i\rightarrow\infty}\expec\|\bz_i\|_{\Sigma}^2\leq\limsup_{i\rightarrow\infty}\expec\|\bz_{i-1}\|^2_{\cB_{\eta}^\top\Sigma\cB_{\eta}}+\tr(\Sigma\cY)+b_o.
\end{equation}
Grouping terms we get:
\begin{equation}
\limsup_{i\rightarrow\infty}\expec\|\bz_i\|_{\Sigma-{\cB_{\eta}^\top\Sigma\cB_{\eta}}}^2\leq\tr(\Sigma\cY)+b_o,
\end{equation}
We conclude that the limit superior of the error variance satisfies:
\begin{equation}
\label{eq: limit superior of the error variance}
\limsup_{i\rightarrow\infty}\expec\|{\bz}_i\|_{\Sigma-{\cB_{\eta}^\top\Sigma\cB_{\eta}}}^2=\tr(\Sigma\cY)+\tr(\Sigma)O(\mu^{2+\min\left\{1,\frac{{{\theta}}}{2}\right\}}),
\end{equation}

In order to obtain identity as a weighting matrix on the mean-square value of ${\bz}_i$ in~\eqref{eq: limit superior of the error variance}, we select $\Sigma$ as the solution to the following discrete time Lyapunov equation:
\begin{equation}
\label{eq: choice of Sigma}
\Sigma-{\cB_{\eta}^\top\Sigma\cB_{\eta}}=I_{MN}.
\end{equation}
We know that $\cB_{\eta}$ is stable under conditions~\eqref{eq: condition for stability}  and~\eqref{eq: condition 1}. Accordingly, we are guaranteed that the above Lyapunov equation has a unique solution $\Sigma$, and moreover, this solution is symmetric and non-negative definite as desired. We can then focus on evaluating the {RHS} of~\eqref{eq: limit superior of the error variance}.

For this purpose, we start by applying the block vectorization operation to both sides of~\eqref{eq: choice of Sigma} to find that:
\begin{equation}
\bvc(\Sigma)-(\cB_{\eta}^\top\otimes_b\cB_{\eta}^\top)\bvc(\Sigma)=\bvc(I_{MN}),
\end{equation}
so that in terms of the matrix $\cF_{\eta}$ defined in~\eqref{eq: matrix F_eta}, we can write:
\begin{equation}
\bvc(\Sigma)=(I-\cF_{\eta})^{-1}\bvc(I_{MN}).
\end{equation}
Now, substituting this $\Sigma$ into~\eqref{eq: limit superior of the error variance}, we obtain $\expec\|\bz_i\|^2$ on the left-hand side while the term $\tr(\Sigma\cY)$ on the {RHS} becomes:
\begin{equation}
\label{eq: trace sigma Y}
\tr(\Sigma\cY)=[\bvc(\cY^{\top})]^{\top}(I-\cF_{\eta})^{-1}\bvc(I_{MN}).
\end{equation}
Likewise the second term on the {RHS} of~\eqref{eq: limit superior of the error variance} becomes:
\begin{equation}
\label{eq: higher order term}
O(\mu^{2+\min\left\{1,\frac{{{\theta}}}{2}\right\}})\cdot\tr(\Sigma)=O(\mu^{2+\min\left\{1,\frac{{{\theta}}}{2}\right\}})\cdot[\bvc(I_{MN})]^{\top}(I-\cF_{\eta})^{-1}\bvc(I_{MN}).
\end{equation}
We now verify that $|[\bvc(I_{MN})]^{\top}(I-\cF_{\eta})^{-1}\bvc(I_{MN})|=O(1/\mu)$. This result will permit us to assess the size of the second term on the  {RHS} of~\eqref{eq: limit superior of the error variance}. We have:
\begin{equation}
\label{eq: size of higher order term}
|[\bvc(I)]^{\top}(I-\cF_{\eta})^{-1}\bvc(I)|\leq\|\bvc(I)\|^2\cdot\|(I-\cF_{\eta})^{-1}\|\leq r\cdot\|(I-\cF_{\eta})^{-1}\|_1\|\bvc(I)\|^2=O(\mu^{-1}),
\end{equation}
where we used a positive constant $r$ to account for the fact that matrix norms are equivalent.

Returning to~\eqref{eq: limit superior of the error variance}, and using~\eqref{eq: choice of Sigma},~\eqref{eq: trace sigma Y},~\eqref{eq: higher order term}, and~\eqref{eq: size of higher order term}, we conclude that:
\begin{equation}
\limsup_{i\rightarrow\infty}\expec\|{\bz}_i\|^2=[\bvc(\cY^{\top})]^{\top}(I-\cF_{\eta})^{-1}\bvc(I_{MN})+O(\mu^{1+\min\left\{1,\frac{{{\theta}}}{2}\right\}}),
\end{equation}
with  ${{\theta}}\in(0,4]$. But since $\cF_{\eta}$ is a stable matrix, we can employ the expansion:
\begin{equation}
(I-\cF_{\eta})^{-1}=\sum_{n=0}^{\infty}\cF_{\eta}^n=\sum_{n=0}^{\infty}(\cB_{\eta}^\top)^n\otimes_b(\cB_{\eta}^\top)^n,
\end{equation}
and write:
\begin{equation}
[\bvc(\cY^{\top})]^{\top}(I-\cF_{\eta})^{-1}\bvc(I_{MN})=\sum_{n=0}^{\infty}\tr(\cB_{\eta}^n\cY(\cB_{\eta}^\top)^n),
\end{equation}
This series converges to the trace value of the unique solution of the following Lyapunov equation:
\begin{equation}
\cX-\cB_{\eta}\cX\cB_{\eta}^\top=\cY,
\end{equation}
where
\begin{equation}
\label{eq: X and X'}
\cX=\sum_{n=0}^{\infty}\cB_{\eta}^n\cY(\cB_{\eta}^\top)^n.
\end{equation}
Consequently, using~\eqref{eq: mse of the centered variable}, we obtain:
\begin{equation}
\label{eq: global mse ss}
\begin{split}
\limsup_{i\rightarrow \infty}\frac{1}{N}\expec\|\bcwt_{i}\|^2&=\frac{1}{N}\tr(\cX)+O(\mu^{1+{{\theta}}_m})\\
&=\frac{1}{N}\sum_{n=0}^{\infty}\tr(\cB_{\eta}^n\cY(\cB_{\eta}^{\top})^n)+O(\mu^{1+{{\theta}}_m}),
\end{split}
\end{equation}
where ${{\theta}}_m=\frac{1}{2}\min\{1,{{\theta}}\}$ and  where $\tr(\cX)=[\bvc(\cY^{\top})]^{\top}(I-\cF_{\eta})^{-1}\bvc(I_{MN})$ which is $O(\mu)$ since $\|\cY\|=O(\mu^2)$ and $\|(I-\cF_{\eta})^{-1}\|=O(\mu^{-1})$. Therefore the value of $\tr(\cX)$ is $O(\mu)$, which dominates the factor $O(\mu^{1+{{\theta}}_m})$.

\section{Proof of Theorem~\ref{lemma: steady-state MSD performance}}
\label{app: steady-state MSD performance}
\noindent From~\eqref{eq: steady-state network performance} and~\eqref{eq: inverse of I- F eta}, we have:
\begin{equation}
\begin{split}
&(\bvc(\cY^{\top}))^{\top}(I-\cF_{\eta})^{-1}\bvc(I_{MN})=O(\mu)(O(1)+O(\eta))^{-1}+\\
&\qquad\sum_{m=1}^N\sum_{p=1}^N(\bvc(\cY^{\top}))^{\top}[(v_m\otimes I_M)\otimes_b(v_p\otimes I_M)]([Z_{mm}]_{pp})^{-1}[(v^{\top}_m\otimes I_M)\otimes_b(v_p^{\top}\otimes I_M)]\bvc(I_{MN})
\end{split}
\end{equation}
where the $\bvc$ operation is relative to blocks of size $M\times M$. Using the property $\bvc(\cA\cC\cB)=(\cB^{\top}\otimes_b\cA)\bvc(\cC)$, we obtain:
\begin{equation}
[(v^{\top}_m\otimes I_M)\otimes_b(v_p^{\top}\otimes I_M)]\bvc(I_{MN})=\left\lbrace
\begin{array}{lr}
\bvc(I_M)=\vc(I_M),& \text{if } m=p\\
0, &\text{if  } m\neq p
\end{array}
\right.
\end{equation}
and we conclude that:
\begin{equation}
\label{eq: relation expression}
\begin{split}
&(\bvc(\cY^{\top}))^{\top}(I-\cF_{\eta})^{-1}\bvc(I_{MN})=O(\mu)(O(1)+O(\eta))^{-1}+\sum_{m=1}^N(\bvc(\cY^{\top}))^{\top}[(u_m\otimes I_M)\otimes_b(u_m\otimes I_M)]x_m
\end{split}
\end{equation}
where
\begin{equation}
x_m\triangleq([Z_{mm}]_{mm})^{-1}\vc(I_M).
\end{equation}
This vector is the unique solution to the linear system of equations:
\begin{equation}
[Z_{mm}]_{mm}x_m=\vc(I_M),
\end{equation}
or, equivalently, by using~\eqref{eq: Z_mm pp}:
\begin{equation}
\begin{split}
&\mu\left[(1-\mu\eta\lambda_m)^2(H_{mm}\otimes I)+\frac{\eta}{2}\lambda_m(2-\mu\eta\lambda_m)\right]x_m\\
&\qquad+\mu\left[(1-\mu\eta\lambda_m)^2(I\otimes H_{mm})+\frac{\eta}{2}\lambda_m(2-\mu\eta\lambda_m)I\right]x_m=\vc(I_M),
\end{split}
\end{equation}
Let $X_m=\text{unvec}(x_m)$. Applying the property $\vc(ACB)=(B^{\top}\otimes A)\vc(C)$, we obtain:
\begin{equation}
\vc(X_mT_m)+\vc(T_mX_m)=\vc(I_M)
\end{equation}
where
\begin{equation}
T_m\triangleq\mu(1-\mu\eta\lambda_m)^2H_{mm}+\frac{\mu\eta}{2}\lambda_m(2-\mu\eta\lambda_m)I.
\end{equation}
We conclude from the above equation that $X_m$ is the unique solution to the continuous time Lyapunov equation:
\begin{equation}
X_mT_m+T_mX_m=I_M,
\end{equation}
whose solution is given by:
\begin{equation}
X_m=\frac{1}{2}T_m^{-1}=\frac{1}{2\mu}\left((1-\mu\eta\lambda_m)^2H_{mm}+\frac{\eta}{2}\lambda_m(2-\mu\eta\lambda_m)I\right)^{-1}.
\end{equation}
Using the definitions~\eqref{eq: definition cY},~\eqref{eq: definition cS}, and applying properties
\begin{equation}
\bvc(\cA\cC\cB)=(\cB^{\top}\otimes_b\cA)\bvc(\cC),\quad\text{and}\quad\tr(\cA\cB)=(\bvc(\cB^{\top}))^\top\bvc(\cA),
\end{equation} we get:
\begin{align}
&\sum_{m=1}^N[\bvc(\cY^{\top})]^{\top}[(v_m\otimes I_M)\otimes_b(v_m\otimes I_M)]\vc(X_m)\nonumber\\
&=\sum_{m=1}^N\tr\left(\text{unbvec}\{(v_m\otimes I_M)\otimes_b(v_m\otimes I_M)\bvc(X_m)\}\cY\right)\nonumber\\
&=\sum_{m=1}^N\tr\left((v_m\otimes I_M)X_m(v^{\top}_m\otimes I_M)\cY\right)\nonumber\\
&=\mu^2\sum_{m=1}^N(1-\mu\eta\lambda_m)^2\tr\left((v^{\top}_m\otimes I_M)\cS_\eta(v_m\otimes I_M)X_m\right)\nonumber\\
&=\frac{\mu}{2}\sum_{m=1}^N(1-\mu\eta\lambda_m)^2\tr\left(\left(\sum_{k=1}^N[v_m]^2_kR_{s,k,\eta}\right)\left((1-\mu\eta\lambda_m)^2\left(\sum_{k=1}^N[v_m]^2_kH_{k,\eta}\right)+\frac{\eta}{2}\lambda_m(2-\mu\eta\lambda_m)I\right)^{-1}\right)\nonumber\\
&=\frac{\mu}{2}\sum_{m=1}^N\tr\left(\left(\sum_{k=1}^N[v_m]^2_kH_{k,\eta}+\frac{\eta\lambda_m(2-\mu\eta\lambda_m)}{2(1-\mu\eta\lambda_m)^2}I\right)^{-1}\left(\sum_{k=1}^N[v_m]^2_kR_{s,k,\eta}\right)\right).
\end{align}
Substituting into~\eqref{eq: relation expression} and~\eqref{eq: steady-state network performance}, we conclude:
\begin{equation}
\label{eq: final expression for the network MSE}
\begin{split}
\limsup_{i\rightarrow\infty}\frac{1}{N}\expec\|\cw^o_{\eta}-\bcw_i\|^2&=\frac{\mu}{2N}\sum_{m=1}^N\tr\left(\left(\sum_{k=1}^N[v_m]^2_kH_{k,\eta}+\frac{\eta\lambda_m(2-\mu\eta\lambda_m)}{2(1-\mu\eta\lambda_m)^2}I\right)^{-1}\left(\sum_{k=1}^N[v_m]^2_kR_{s,k,\eta}\right)\right)\\
&\qquad\qquad+\frac{O(\mu)}{(O(1)+O(\eta))}+O(\mu^{1+{{\theta}}_m}).
\end{split}
\end{equation}
Now, according to definition~\eqref{eq: network MSD performance}, dividing~\eqref{eq: final expression for the network MSE} by $\mu$ and computing the limit as $\mu\rightarrow 0$, we arrive at expression~\eqref{eq: final expression for the network MSD alternative} for the network MSD.
\end{appendices}

\bibliographystyle{IEEEbib}
\bibliography{reference}

\end{document}